\newtheorem{assumption}{Assumption}
\newtheorem{theorem}{Theorem}
\newtheorem{lemma}{Lemma}
\newtheorem{proposition}{Proposition}
\theoremstyle{remark}
\newtheorem{remark}{Remark}
\newtheorem{procedure}{Procedure}
\DeclareMathOperator{\Var}{Var}
\DeclareMathOperator*{\argmin}{arg\,min}
\DeclareMathOperator*{\argmax}{arg\,max}
\begin{document}

\title{\LARGE Distributionally Robust Selection of the Best}

\author{
{\large Weiwei Fan\thanks{Advanced Institute of Business and School of Economics and Management, Tongji University, 200092 Shanghai, China} }
\and
{\large L. Jeff Hong\thanks{School of Management and School of Data Science, Fudan University, 200433 Shanghai, China} }
\and
{\large Xiaowei Zhang\thanks{Corresponding author. Department of Management Sciences, College of Business, City University of Hong Kong, Kowloong Tong, Hong Kong. Email: \href{mailto:xiaowei.w.zhang@cityu.edu.hk}{xiaowei.w.zhang@cityu.edu.hk}}}
}

\date{}


\maketitle

\begin{abstract}
\noindent
\textbf{Abstract.} Specifying a proper input distribution is often a challenging task in simulation modeling. In practice, there may be multiple plausible distributions that can fit the input data reasonably well, especially when the data volume is not large. In this paper, we consider the problem of selecting the best from a finite set of simulated alternatives, in the presence of such input uncertainty. We model such uncertainty by an ambiguity set consisting of a finite number of plausible input distributions, and aim to select the alternative with the best worst-case mean performance over the ambiguity set. We refer to this problem as robust selection of the best (RSB). To solve the RSB problem, we develop a two-stage selection procedure and a sequential selection procedure; we then prove that both procedures can achieve at least a user-specified probability of correct selection under mild conditions. Extensive numerical experiments are conducted to investigate the computational efficiency of the two procedures. Finally, we apply the RSB approach to study a queueing system's staffing problem using synthetic data and an appointment-scheduling problem using real data from a large hospital in China. We find that the RSB approach can generate decisions significantly better than other widely used approaches.

\medskip
\noindent
\textbf{Keywords.} selection of the best; distributional robustness; input uncertainty; probability of correct selection
\end{abstract}

\section{Introduction}

Simulation is widely used to facilitate decision-making for stochastic systems. In general, the performance of a stochastic system depends on \textit{design} variables and \textit{environmental} variables. The former is controllable by the decision-maker, while the latter is not. By simulating the environmental variables, the decision-maker can estimate the system's mean performance for arbitrary values of the design variables. A crucial step for building a credible simulation model is to characterize the environmental variables with an appropriate probability distribution, typically referred to as the \textit{input distribution} in simulation literature. This is often difficult, mainly because of lack of enough data, or measurement error in the data, either of which causes uncertainty concerning the input distribution, i.e., \textit{input uncertainty}.

Input uncertainty has drawn substantial interest from the simulation community in the past two decades; see \cite{henderson2003input} for a survey. The existing work usually assumes that the input distribution belongs to a particular parametric family, but the parameters of the distribution need to be estimated. This assumption reduces the input uncertainty to the so-called \textit{parameter uncertainty} and the primary objective becomes to characterize the randomness of the simulation output that is amplified by the parameter uncertainty. For instance, \cite{ChengHolland97} use the delta method to approximate the variance of the simulation output and \cite{BartonSchruben01} use the bootstrap method.

However, in practice it is non-trivial to determine the proper parametric family. Indeed, there may be several plausible parametric families that fit the input data reasonably well if the data volume is not large.  For instance, in Section \ref{sec:scheduling}, we study an appointment-scheduling problem in a large hospital in China. The maximum number of operations of a particular type performed by a surgeon in the hospital in 2014 is 138, and goodness-of-fit tests reject neither the gamma distribution nor the lognormal distribution when fitting the data for the duration of operations. Notice that these two parametric families may result in qualitatively different performances of a stochastic system. For instance, a queueing system's behavior depends critically on whether its service times are light-tailed or heavy-tailed \citep{Asmussen03}. Therefore, in this paper we focus on the uncertainty in specifying the parametric family of the input distribution, instead of considering parameter uncertainty.

One approach to address this difficulty is Bayesian model averaging \citep{chick2001input}. It measures the stochastic system by the weighted average of its mean performance under different plausible input distributions, where the weights are specified by prior estimation of the likelihood that a particular plausible distribution is the ``true'' distribution. This approach takes an ``ambiguity-neutral'' viewpoint concerning the input uncertainty.

In this paper, we take a robust approach that adopts an ``ambiguity-averse'' \citep{Epstein99} viewpoint and uses the worst-case mean performance of all the plausible distributions to assess a stochastic system. Using the worst-case analysis to account for uncertainty has a long history in economic theory. \cite{Ellsberg61} argues that in a situation where probability distributions cannot be specified completely, considering the worst of all the plausible distributions might appeal to a conservative person. \cite{GilboaSchmeidler89} rationalize the ambiguity aversion by showing that an individual who considers multiple prior probability distributions and maximizes the minimum expected utility over these distributions would act in this conservative manner. However, we do not argue or suggest that worst-case analysis is better than the ``model-averaging'' approach. Instead, we believe that they are equally important and that decision-makers should consider different perspectives in order to be fully aware of the potential risks of a decision.

We focus on an important class of simulation-based decision-making problems. We assume that the design variables of the stochastic system of interest take values from a finite set, each of which is referred to as an alternative. The mean performance of an alternative is estimated via simulation and we are interested in selecting the ``best'' alternative. This is known as the selection of the best (SB) problem in simulation literature. Due to statistical noise inherent in the simulation procedure, the probability that the best alternative is not selected is nonzero regardless of the computational budget. Thus, the objective is to develop a selection procedure that selects the best alternative with some statistical guarantee; see \cite{kim2006selecting} for an overview. In this paper, we consider the SB problem in the presence of input uncertainty and solve it in a way that is robust with respect to input uncertainty.

\subsection{Main Contributions}

First, we model the input uncertainty as an ambiguity set consisting of finitely many plausible distribution families whose associated parameters are properly chosen. We then transform the SB problem in the presence of input uncertainty into a \textit{robust selection of the best} (RSB) problem. Each alternative has a distinctive mean performance for each input distribution in the ambiguity set, and its worst-case mean performance is used as a measure of that alternative. The best alternative is defined as the one having the best worst-case mean performance.

Second, assuming the ambiguity set is given and fixed, we propose a new indifference-zone (IZ) formulation and design two selection procedures accordingly. The IZ formulation was proposed by \cite{bechhofer1954single}. However, to cope with our robust treatment of input uncertainty, we redefine the IZ parameter, denoted by $\delta$, as the smallest difference between the \textit{worst-case} mean performance of two alternatives that a decision-maker considers worth detecting. Then, the statistical evidence for designing a proper selection procedure can be expressed as the probability of selecting an alternative that is within $\delta$ of the best alternative in terms of their worst-case mean performance. We develop a two-stage procedure and a sequential procedure with statistical validity, i.e., they guarantee achieving a probability of correct selection (PCS) that is no less than a pre-specified level in a finite-sample regime and an asymptotic regime, respectively.

Third, we extend standard numerical tests for the SB problem to the new setting and demonstrate the computational efficiency of the two proposed RSB procedures. In particular, the sequential RSB procedure's efficiency in terms of the required total sample size is insensitive to the IZ parameter $\delta$ when $\delta$ is small enough. This is appealing to a practitioner, because it enables $\delta$ to be set as small as possible so that the unique best alternative can be selected instead of some ``near-best'' one without worrying computational burden. Besides, the proposed sequential RSB procedure is carefully designed so that it requires a much smaller total sample size than a plain-vanilla sequential RSB procedure as the problem scale increases.

Fourth, we assess the RSB approach in a queueing simulation environment where the input data, and thus the ambiguity set, is subject to random variation. Specifically, we consider a multi-server queue with abandonment, whose service time has an unknown distribution. The decision of interest is the staffing level, i.e., the number of servers. The cost of the queueing system depends on waiting and abandonment of the customers as well as the staffing level. We compare the RSB approach with a common approach for input modeling in practice, that is, the decision-maker fits a group of distribution families to the input data and uses the ``best'' fitted one as if it were the true distribution. An extensive numerical investigation reveals that the RSB approach can generate a staffing decision that has a significantly lower and more stable cost. 

Finally, we apply the RSB approach to an appointment-scheduling problem using real data from a large hospital in China. We show that in the presence of deep input uncertainty, the scheduling decision generated by the RSB approach incurs significantly lower operating costs than other widely used approaches, including a so-called ``increasing order of variance'' scheduling rule, one that is commonly viewed as a good heuristic in healthcare practice and was theoretically shown to be the optimal scheduling rule under some robust framework \citep{Mak2015appointment}.

\subsection{Related Literature}

This paper is related to three streams of literature, i.e., simulation input uncertainty, robust optimization, and selection of the best. Studies of input uncertainty in simulation literature have focused on the impact of input uncertainty on simulation output analysis; for instance, constructing confidence intervals to reflect input uncertainty. A preferred approach is resampling, consisting of macro-replications, in each of which the input data is first resampled to construct an empirical distribution as the input distribution. The sampled empirical distribution is then used to drive the simulation model to estimate the performance of the involved stochastic system. Finally, the performance estimate is collected as a bootstrap statistic from each macro-replication and a dynamic confidence interval is constructed for the performance measure of interest. Representative articles include \cite{ChengHolland97} and \cite{BartonSchruben01}. Bayesian model averaging also relies on macro-replication, but each macro-replication begins with sampling from the posterior (based on the input data) of the plausible input distributions and then uses the sampled input distribution to drive the simulation model; see \cite{chick2001input}. Recently, \cite{BartonNelsonXie14} and \cite{XieNelsonBarton14}  have both studied the propagation of input uncertainty to the estimated performance of a stochastic system,
 using nonparametric bootstrapping and Bayesian analysis, respectively.

The above research essentially takes an ambiguity-neutral attitude to the input uncertainty rather than ambiguity averse attitude as we do. In addition, it concentrates on the performance analysis of a stochastic system for a fixed value of its design variables. Unlike our paper, they do not include optimizing the performance over the design variables. Optimizing performance in the presence of distributional uncertainty is a theme of robust optimization; see \cite{ben2009robust} for an introduction to this broad area. However, robust optimization literature generally does not consider cases in which an objective function is embedded in a black-box simulation model and can only be evaluated using random samples; an exception is \cite{hu2012robust} but they focus on parameter uncertainty of the input distribution.

There is also a vast literature regarding the SB problem. Selection procedures can be categorized into frequentist procedures or Bayesian procedures depending on the viewpoint adopted for interpreting the unknown mean performance of an alternative. The former treats it as a constant and can be estimated through repeated sampling. Representative frequentist selection procedures include \cite{rinott1978two,KimNelson01,ChickWu05,Frazier14}, and \cite{izfree2016}. All these adopt an IZ formulation and use PCS as a selection criterion, except \cite{ChickWu05} in which the selection criterion is set to be expected opportunity cost, and \cite{izfree2016} in which an IZ-free formulation that can save users from the burden of specifying an appropriate IZ parameter is proposed. The present paper follows a frequentist viewpoint as well. Bayesian procedures, on the other hand, view the unknown mean of an alternative as a posterior distribution conditionally on samples calculated by Bayes' rule. The main approaches used in the Bayesian framework include (i) optimal computing budget allocation \citep{HeChickChen07}, (ii) knowledge gradient \citep{FrazierPowellDayanik09}, (iii) expected value of information \citep{ChickBrankeSchmit10}, and (iv) economics of selection procedures \citep{ChickFrazier12}.

Few prior papers in SB literature address input uncertainty, except \cite{CorluBiller13,CorluBiller15}, which focuses on the subset-selection formulation instead of the IZ formulation, and \cite{SongNelsonHong15}, which finds that in the presence of input uncertainty, IZ selection procedures designed for the SB problems may fail to deliver a valid statistical guarantee of correct selection for some configurations of the competing alternatives. Unlike our paper, these three papers all take an ambiguity-neutral viewpoint.

The rest of paper is organized as follows. Section \ref{sec:robustframework} formulates the RSB problem. Sections \ref{sec:robsut_twostage} and \ref{sec:robust_sequential} develop the two-stage and sequential RSB procedures, respectively, and show their statistical validity. Section \ref{sec:robust_numerical} presents numerical experiments to demonstrate the computational efficiency of the two RSB procedures. In Section \ref{sec:queueing}, we verify statistical validity of the proposed RSB procedures and demonstrate usefulness of the RSB approach in the context of queueing simulation, a more realistic setting than that of Section \ref{sec:robust_numerical}. In Section \ref{sec:scheduling}, we apply the RSB approach to address an appointment-scheduling problem using real data from a large hospital in China. We conclude in Section \ref{sec:conclusion} and collect all the proofs and additional numerical results in Appendix.

\section{Robust Selection of the Best}\label{sec:robustframework}

Suppose that a decision-maker needs to decide among $k$ competing alternatives, i.e., $\mathcal{S}=\{s_1,s_2,\ldots,s_k\}$. For each $s_i$, $i=1,\ldots,k$, let $g(s_i,\xi)$ denote its performance given an input variable $\xi$. In practice, $\xi$ is typically random and follows probability distribution $P_0$. Notice that $P_0$ may differ between the alternatives, but we suppress its dependence on $s_i$ for the purpose of notational simplicity. Each alternative is then measured by its mean performance $\mathbb{E}_{P_0}[g(s_i,\xi)]$, $i=1,\ldots,k$. The decision-maker aims to select the best alternative from $\mathcal S$, which is defined as the one having the smallest mean performance,
\begin{eqnarray*}
\min_{s\in \mathcal{S}} \mathbb{E}_{P_0}[g(s,\xi)].
\end{eqnarray*}
This is known as the SB problem, and a great variety of selection procedures have been developed, aiming to provide a desirable statistical guarantee on the probability of selecting the best.

To date, the SB problem has been studied primarily under the premise that the distribution $P_0$ is known and fixed. However, this is hardly the case in real-world applications. We assume that the distribution $P_0$ belongs to an {\em ambiguity set} $\mathcal{P}$ that consists of a finite number of plausible distributions, i.e., $\mathcal{P}=\{P_1,P_2,\ldots,P_m\}$. The form of $\mathcal{P}$ is determined by the following common scenario in input modeling: modern simulation software, e.g., Input Analyzer of Arena \citep{KeltonSadowskiSwets09}, typically has a built-in functionality to fit input data to a specified parametric distribution family and to perform some goodness-of-fit tests (e.g., Kolmogorov-Smirnov test and chi-squared test). A preliminary exploration of the input data may suggest a set of plausible distribution families and they are then examined by the software one at a time. Hence, a typical example of $\mathcal{P}$ is such that each  $P_j$ belongs to a distinctive parametric family, whose parameters are estimated from the data and which is not rejected by the goodness-of-fit tests. Notice that the ambiguity set $\mathcal{P}$ constructed in this way will converge to the true input distribution as the data volume increases, provided that the true distribution family is included in the set of plausible distribution families.

Given the ambiguity set $\mathcal P$, we measure an alternative by its worst-case mean performance over $\mathcal P$ and denote the best alternative as the alternative with the smallest worst-case mean performance. Then, the SB problem in the presence of input uncertainty is formulated as
\begin{equation}\label{eq:minimax}
\min_{s\in \mathcal{S}} \max_{P\in \mathcal{P}}\mathbb{E}_P[g(s,\xi)],
\end{equation}
which we call the RSB problem. Our goal is to develop selection procedures that, upon termination, select the best alternative with a probability of at least a user-specified value $1-\alpha$, $(0<\alpha<1)$.

\begin{remark}
The formulation \eqref{eq:minimax} assumes that $\mathcal P$ is given and fixed. On its own, it does not address the issue of \emph{statistical consistency} in the sense that $\mathcal P$ converges to the unit set that contains only the true distribution $P_0$ as the size of the input data grows to infinity. Thus, this issue is not addressed by the RSB methodology developed here. For our methodology to perform correctly, certain mechanism needs to be implemented to ensure that all plausible distributions in $\mathcal P$ that are not $P_0$ would be discarded eventually as more input data becomes available. Using a goodness-of-fit test is one possible approach. But further theoretical work on this issue would be of interest. 
\end{remark}

\begin{remark}
There is a subtle but critical difference between the conventional SB context and the context of the present paper with regard to the concept of ``random sample''. In the former context, $P_0$ is known and the mean performance of each $s_i$ is estimated by a random sample of size $N$ of the simulation output $g(s_i, \xi)$ with $\xi$ generated from $P_0$, so the estimate depends on $N$. In the RSB context, however, $P_0$ is unknown and the distributions in  $\mathcal P$ all try to estimate $P_0$ based on a sample of it of size $\ell$ (i.e., the input data), so each $P_j\in \mathcal P$ depends on $\ell$. Therefore, in a RSB procedure the estimate of each alternative's mean performance under each $P_j$ generally depends on both $N$ and $\ell$. By assuming $\mathcal P$ is given and fixed, we essentially ignore the dependence on $\ell$. A more complete treatment  would account for the fact that each $P_j\in\mathcal P$ estimated from the input data is actually random and make $\ell$ a possible factor for designing a RSB procedure. But this is beyond the scope of the present paper.
\end{remark}

Before moving to next section, we first introduce some necessary notations and assumptions. Let ``system $(i,j)$'' represent the pair of decision $s_i$ and probability scenario $P_j$, and $g(s_i,\xi)$ with $\xi$ following distribution $P_j$ denote the random observation from system $(i,j)$; further, let $\mu_{ij} =\mathbb{E}_{P_j}[g(s_i,\xi)]$ and $\sigma_{ij}^2=\Var_{P_j}[g(s_i,\xi)]$. The following assumptions are imposed throughout the paper.

\begin{assumption}\label{asp:basic}
For each $i=1,2,\dots,k$, $\mu_{i1}\geq \mu_{i2}\geq\dots \geq\mu_{im}$. Moreover, $\mu_{11} < \mu_{21}\leq \ldots \leq\mu_{k1}$.
\end{assumption}

\begin{assumption}\label{asp:variance}
For each $i=1,2,\dots,k$ and $j=1,2,\dots,m$, $\sigma_{ij}^2<\infty$.
\end{assumption}

In the RSB problem \eqref{eq:minimax}, our objective is to identify for each alternative its corresponding worst-case probability scenario, which is irrelevant to how the probability scenarios are ordered in the ambiguity set. Without loss of generality, we allow the means $\mu_{ij}$'s to be of certain configuration presented in Assumption \ref{asp:basic}; otherwise, we can sort the means and relabel the systems in the desired order. Clearly, under Assumption \ref{asp:basic}, system $(i,1)$ yields the worst-case mean performance of alternative $i$, $i=1,\ldots,k$, and alternative 1 is the unique best alternative in \eqref{eq:minimax}.

Assumption \ref{asp:variance} states that for each $s_i$, the random performance $g(s_i,\xi)$ is of finite variance under each $P_j$ included in the ambiguity set. Considering $\Var_{P_0}[g(s_i,\xi)]<\infty$ in many practical situations, it is reasonable to choose the probability scenario $P$ yielding the finite $\Var_P[g(s_i,\xi)]$ for all $i$ as a candidate representative for $P_0$ and then include it into the ambiguity set. Besides, assuming a finite variance of the random performance of each system is common in SB literature.

\subsection{Indifference-Zone Formulation}
We adopt the IZ formulation to design RSB procedures. Under the IZ formulation, the sought procedures are expected to provide a lower bound for both the probability of correct selection (CS) and the probability of good selection (GS); see, e.g., \cite{ni2017efficient} for their definitions in the SB setting. Since the RSB problem is of a minimax structure different from SB problems, the CS and GS events need to first be carefully redefined.

Let $\delta$ be a pre-specified IZ parameter which is the smallest difference that the decision-maker deems worth detecting. If $\mu_{21}-\mu_{11}>\delta$, alternative 1 is better than the others by at least $\delta$, measured by their worst-case mean performance over $\mathcal P$, due to Assumption \ref{asp:basic}. We define the CS event as the event where alternative 1 is selected. If $\mu_{21}-\mu_{11}\leq\delta$, some ``good" alternatives exist, and their worst-case mean performances are within $\delta$ of alternative 1; decision-makers feel indifferent between those good alternatives and alternative 1. We define the GS event as the event where one of the good alternatives is selected. Hence, selecting alternative $i$ is a good selection if $\mu_{i1} - \mu_{11}\leq \delta$.

Subtlety exists in the definitions of CS and GS, and it is worth some remarks. Take CS for an example. In the presence of the ambiguity set $\mathcal P$, it may be tempting to define CS as selecting system $(1,1)$, which refers to a pair of the best alternative and its corresponding worst-case probability scenario. However, what matters to a decision-maker is to select the best alternative rather than identifying which input distribution yields the worst-case mean performance of the alternatives. This is because the selected alternative will be implemented later and the ambiguity set is merely used to evaluate the alternatives. 

In SB literature, most IZ procedures are designed for the situation when $\mu_{21}-\mu_{11}>\delta$, and thus it is conventional to say a procedure is statistically valid if the achieved probability of correct selection (PCS) is no smaller than a pre-specified value $1-\alpha$. Borrowing the notation from SB literature, we use PCS to denote a measure of statistical validity, but in the extended way. Particularly, we define PCS as the probability of CS if $\mu_{21}-\mu_{11}>\delta$ and the probability of GS otherwise. Then, this paper seeks RSB procedures with \textit{statistical validity} in the following form: given a pre-specified $\alpha\in(0,1)$
\begin{eqnarray}\label{eqn:robust}
\mathbb{P}\{\mu_{i^* 1}-\mu_{11}\leq \delta\}\geq 1-\alpha,
\end{eqnarray}
where $i^*$ is the index of the selected alternative upon termination of a procedure.

\subsection{Double-Layer Structure}

In view of the minimax formulation of the RSB problems \eqref{eq:minimax}, we propose a double-layer structure for designing RSB procedures. An inner-layer procedure aims to select system $(i,1)$, which produces the worst-case mean performance for alternative $i$ with at least a pre-chosen inner-layer PCS, for each $i=1,2,\ldots,k$. An outer-layer procedure, on the other hand, aims to select system $(1,1)$ from the inner-layer selected systems with at least pre-chosen outer-layer PCS. After the double-layer selection process, we expect system $(1,1)$ to be selected with at least the probability of $1-\alpha$. To that end, the PCS in each layer must be judiciously chosen so that the overall PCS is no less than $1-\alpha$ as in \eqref{eqn:robust}. The detailed discussion is deferred to Section \ref{sec:twostage_errorallocation}. Figure \ref{fig:GraphicRepresentation} illustrates the double-layer structure of RSB procedures.

\begin{figure}[ht]
\begin{center}
\caption{Two-layer Structure of RSB Procedures}\label{fig:GraphicRepresentation}
\includegraphics[width=0.8\textwidth]{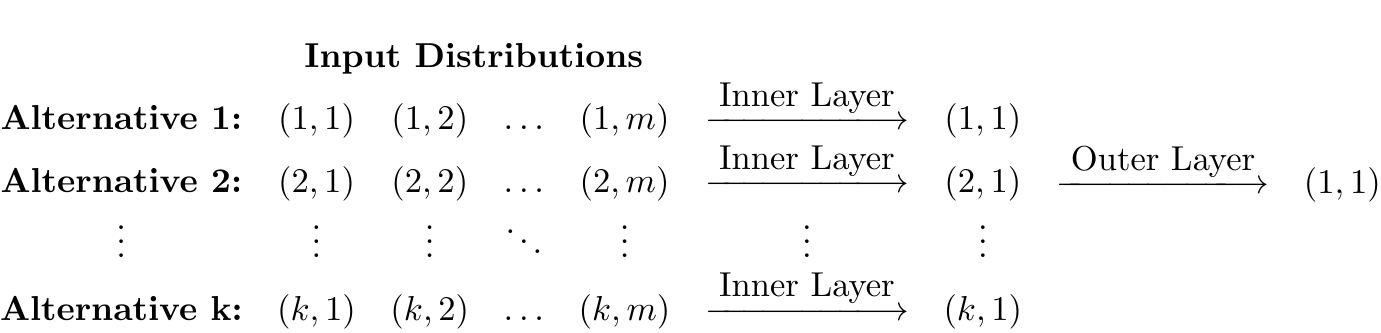}
\end{center}
\small{\textit{Note.}  Systems in the same column do not necessarily have the same input distribution. They are ordered to satisfy Assumption \ref{asp:basic}. The methodology developed in this paper even allows the alternatives to have different ambiguity sets.}
\end{figure}

\section{Two-Stage RSB Procedure}\label{sec:robsut_twostage}

In this section, we develop a two-stage RSB procedure with the statistical validity in the form of \eqref{eqn:robust}. In the first stage, we take a small number of samples from each system to calculate the sample variance of each pairwise difference. Based on them, we calculate the sample sizes needed for the second stage, and select the best alternative based on the sample means obtained after the second-stage sampling. 

The required sample size here resembles formally that of a typical two-stage procedure for the SB problem. However, the IZ parameter needs a special treatment to accommodate the double-layer structure. We wish to obtain an IZ for the RSB problem that has a given IZ parameter value. We do this by using two separate IZ parameter values, one corresponding to the inner-layer and the other to the outer-layer selection process, but calculated in a combined way so that the required IZ parameter value is obtained for the overall RSB problem. 

The procedure is based on pairwise comparisons between the systems, each of  which has a nonzero probability to yield an incorrect result due to  simulation noise. We need to specify how much probability of making an error in each pairwise comparison  is allowed, hereafter referred to as the \textit{error allowance} and denoted by $\beta$, in order to achieve the overall PCS. It is nontrivial to specify error allowances due to the double-layer structure.

Assuming that the inner-layer and outer-layer IZ parameters, denoted by $\delta_I$ and $\delta_O$, respectively, as well as the error allowance are known, we now present the two-stage RSB procedure (Procedure T). Specification of  $(\delta_I, \delta_O)$ and $\beta$ will be addressed in Section \ref{sec:robust_IZ_twostage} and Section \ref{sec:twostage_errorallocation}, respectively.

\vspace{5pt}

\begin{procedure}[Procedure T: Two-Stage RSB Procedure]\label{Proc:robust_twostage}

\begin{enumerate}
\item[]
\setcounter{enumi}{-1}
\item \textit{Setup.}
Specify the inner-layer and outer-layer IZ parameters $(\delta_I, \delta_O)$,  the error allowance $\beta$, and the first-stage sample size $n_0\geq 2$. Set $h = t_{1-\beta, n_0-1}$ to be the $100(1-\beta)\%$ quantile of the Student's $t$ distribution with $n_0-1$ degrees of freedom.
\item \textit{First-stage sampling.}
Take $n_0$ independent replications $X_{ij,1},\ldots, X_{ij,n_0}$ of each system $(i,j)$. Compute the first-stage sample mean of each system and the sample variance of the difference between each pair of systems as follows
\[
\begin{array}{c}
\displaystyle\bar{X}_{ij}(n_0)=\frac{1}{n_0}\sum_{r=1}^{n_0} X_{ij,r}, \quad 1\leq i\leq k, \;1\leq j\leq m,\\
\displaystyle S_{ij,i'j'}^2 = \frac{1}{n_0-1}\sum_{r=1}^{n_0}\left[X_{ij,r}-X_{i'j',r}-(\bar{X}_{ij}(n_0)-\bar{X}_{i'j'}(n_0))\right]^2,  \quad 1\leq i,i'\leq k, \;1\leq j,j'\leq m.
\end{array}\]
\item \label{step:sampling_twostage}\textit{Second-stage sampling.}
Compute the total sample size $\displaystyle N = \max_{(i,j),(i',j')}\left\{n_0, N_{ij,i'j'}\right\}$, where
\[
N_{ij,i'j'} = \max \left\{\left\lceil \frac{h^2S_{ij,i'j'}^2}{\delta_I^2} \right\rceil,  \left\lceil \frac{h^2S_{ij,i'j'}^2}{\delta_O^2} \right\rceil\right\}, \quad 1\leq i,i'\leq k, \;1\leq j,j'\leq m, \]
with $\lceil x \rceil$ denoting the smallest integer no less than $x$. Take $N-n_0$ additional independent replications of each system.
\item \textit{Selection.}
Compute the overall sample mean of each system based on the $N$ replications
\[
\bar{X}_{ij}(N) = \frac{1}{N}\sum_{r=1}^{N}X_{ij,r}, \quad 1\leq i\leq k, \;1\leq j\leq m.
\]
Return $i^* =\displaystyle\argmin_{1\leq i\leq k} \max_{1\leq j\leq m}\bar{X}_{ij}(N)$ as  the best alternative. \hfil$\Box$

\end{enumerate}
\end{procedure}

\vspace{11pt}

Similarly to two-stage procedures for the SB problem, the two-stage RSB procedure is easy to implement and variance reduction techniques, such as common random numbers, can readily be applied to increase the efficiency of the algorithm, where the efficiency is defined in terms of the total sample size required.

\subsection{Inner-Layer and Outer-Layer IZ Parameters}\label{sec:robust_IZ_twostage}

The inner layer is used to estimate the worst-case mean performance of each alternative, while the outer layer is to compare the alternatives based on the estimates in the inner layer. If system $(i, j_i^*)$ is correctly selected in the inner layer to represent the worst system of alternative $i$, then by the definition of $\delta_I$ and Assumption \ref{asp:basic},
\begin{equation}\label{eqn:InnerIZ}
0<\mu_{i1}-\mu_{ij_i^*}\leq\delta_I, \quad i=1,\ldots,k.
\end{equation}

Suppose that alternative $i$ is $\delta$ away from alternative 1, i.e., $\mu_{i1}-\mu_{11}>\delta$. Then, alternative $i$ being eliminated by alternative 1 is necessary for the CS event. To that end, we need to ensure $\mu_{ij_i^*}-\mu_{1j_1^*} > \delta_O$,
when comparing $\mu_{1j_1^*}$ and $\mu_{ij_i^*}$ in the outer-layer. Notice that
\begin{equation*}\label{eqn:OuterIZ}
\mu_{ij_i^*}-\mu_{1j_1^*} = (\mu_{ij_i^*}-\mu_{i1})+(\mu_{i1}-\mu_{11})+(\mu_{11}-\mu_{1j_1^*})> (-\delta_I) + \delta + 0 = \delta - \delta_I, 
\end{equation*}
thanks to \eqref{eqn:InnerIZ}. Hence, it suffices to take $\delta_O=\delta - \delta_I$.

In order to determine the values of $\delta_I$ and $\delta_O$ for a given $\delta$, we minimize the total sample size of the procedure over the choice of  $(\delta_I,\delta_O)$. Specifically, we solve the following optimization problem for each pair of systems
\begin{equation}\label{eqn:twostage_IZAllocation}
\begin{aligned}
& \underset{\delta_I,\delta_O>0}{\mbox{minimize}}&& \max\left\{\frac{h^2S_{ij,i'j'}^2}{\delta_I^2},\frac{h^2S_{ij,i'j'}^2}{\delta_O^2}\right\}\\[0.5ex]
& \mbox{subject to}  & & \delta_I +\delta_O= \delta
\end{aligned}
\end{equation}
It is straightforward to solve this problem and the optimal solution is $\delta_I=\delta_O=\delta/2$.

\begin{remark}
Notice that $\left\lceil h^2S_{ij,i'j'}^2/\delta_I^2\right\rceil$ (resp., $\left\lceil h^2S_{ij,i'j'}^2 /\delta_O^2\right\rceil$) represent the sample size required by the comparison between system $(i,j)$  and system $(i',j')$ in the inner-layer (resp., outer-layer) selection process. Since the inner-layer and outer-layer selection processes are conducted simultaneously after all the samples are generated, we should treat both layers equally and thus assign equal computational budget to them, leading to the optimal choice $\delta_I=\delta_O=\delta/2$.
\end{remark}

\subsection{Error Allocation}\label{sec:twostage_errorallocation}

Besides the IZ parameter, another critical parameter that determines the efficiency of the two stage RSB procedure is the error allowance $\beta$ for each necessary pairwise comparison. It must be chosen judiciously in order that the statistical validity \eqref{eqn:robust} be achieved.

First, we notice that 
$\mathbb{P}\{\mbox{CS} \} \geq \mathbb{P}\left\{\mbox{system $(1,1)$ is selected}\right\}$, since selecting system $(1,1)$ is obviously a CS event.
Then, the Bonferroni inequality can bound the right-hand-side from below, allowing us to give a PCS guarantee via choosing an appropriate $\beta$. 
Nevertheless, it is worthwhile to point out that due to the double-layer structure, CS can be obtained even if system $(1, 1)$ is eliminated, as long as the selected worst system of alternative 1 is better than any other alternative's selected worst system. This contributes the over-coverage of the realized PCS; see  Section \ref{sec:robust_numerical}. 

Specifically, system $(1,1)$ being eliminated by system $(1,j)$ amounts to $\bar{X}_{11}(N) < \bar{X}_{1j}(N)$, $j=2,\ldots,m$, whereas system $(1,1)$ being eliminated by system $(i,j)$ amounts to $\bar{X}_{11}(N) > \bar{X}_{ij}(N)=\max\limits_{1\leq l\leq m}\bar{X}_{il}(N)$, $i=2,\ldots,k$, $j=1,\ldots,m$. Then, it follows that 
\begin{eqnarray}
\mathbb{P}\{\mbox{ICS} \}
&\leq &\mathbb{P}\left\{\mbox{system $(1,1)$ is not selected}\right\} \nonumber\\
&=& \mathbb{P}\left\{\bigcup_{j=2}^m \left\{\bar{X}_{11}(N) <\bar{X}_{1j}(N)\right\}\;\bigcup\; \bigcup_{i=2}^k\bigcup_{j=1}^m  \left\{\bar{X}_{11}(N) > \bar{X}_{ij}(N)=\max_{1\leq l\leq m}\bar{X}_{il}(N)\right\} \right\}\nonumber\\
&\leq &\sum_{j=2}^m \mathbb{P}\left\{\bar{X}_{11}(N) < \bar{X}_{1j}(N) \right\} + \sum_{i= 2}^k\sum_{j=1}^m \mathbb{P}\left\{\bar{X}_{11}(N) > \bar{X}_{ij}(N) \right\},\label{eq:multi_rule}
\end{eqnarray}
where ICS is short for incorrect selection and the last inequality follows the Bonferroni inequality. Therefore, we can achieve a target PCS  $1-\alpha$ by ensuring each of the $km-1$ terms in the summation \eqref{eq:multi_rule} bounded by $\beta=\alpha/(km-1)$ from above. We name this method of error allocation the \textit{multiplicative rule}.

Nevertheless, the multiplicative rule can easily become over-conservative even if $k$ and $m$ are both moderate. For instance, if $k=m=10$, then the error allowance under the multiplicative rule is equivalent to that for the SB problem with $100$ alternatives. Observe that the over-conservativeness of the multiplicative rule stems from the fact that it uses the event of not selecting system $(1,1)$ to represent the ICS event itself. By doing so, we implicitly treat all the $km-1$ pairwise comparisons as equally important. In fact, we do not need to ensure correct selection of the worst system for each alternative in the inner-layer selection process, except for alternative 1. The bulk of the $km-1$ pairwise comparisons associated with the multiplicative rule turn out to be unnecessary. To see this, notice that
\begin{equation}\label{eq:add_rule1}
\mathbb{P}\{\mbox{ICS} \} = \mathbb{P}\left\{\bigcup_{i=2}^k\left\{\max_{1\leq j\leq m}\bar{X}_{1j}(N)>\max_{1\leq j\leq m}\bar{X}_{ij}(N)\right\}\right\} \leq \mathbb{P}\left\{A \cup B \right\},
\end{equation}
where $A = \bigcup_{i=2}^k\left\{\max_j\bar{X}_{1j}(N)>\max_j\bar{X}_{ij}(N)\right\}$ and $B=\bigcup_{j=2}^m\left\{\bar{X}_{11}(N)<\bar{X}_{1j}(N)\right\}$, and that
\begin{eqnarray}
A\cup B = (A\cap B^c)\cup B 
&=& \bigcup_{i=2}^k\left\{\bar{X}_{11}(N) >\max_{1\leq j\leq m}\bar{X}_{ij}(N)\right\} \;\cup\; B,\label{eq:add_rule2}
\end{eqnarray}
since $\max_j\bar{X}_{1j}(N) = \bar{X}_{11}(N)$ on $B^c$. 
Hence, by \eqref{eq:add_rule1} and \eqref{eq:add_rule2},
\begin{eqnarray}
\mathbb{P}\{\mbox{ICS} \}
&\leq&  \mathbb{P}\left\{\bigcup_{i=2}^k\left\{\bar{X}_{11}(N)>\max_{1\leq j\leq m}\bar{X}_{ij}(N)\right\}\;\bigcup\;\bigcup_{j=2}^m\left\{\bar{X}_{11}(N)<\bar{X}_{1j}(N)\right\}\right\} \nonumber\\
&\leq&  \mathbb{P}\left\{\bigcup_{i=2}^k\left\{\bar{X}_{11}(N)>\bar{X}_{i1}(N)\right\}\;\bigcup\;\bigcup_{j=2}^m\left\{\bar{X}_{11}(N)<\bar{X}_{1j}(N)\right\}\right\}\nonumber\\
&\leq& \sum_{i=2}^k\mathbb{P}\left\{\bar{X}_{11}(N)>\bar{X}_{i1}(N)\right\}+\sum_{j=2}^m\mathbb{P}\left\{\bar{X}_{11}(N)<\bar{X}_{1j}(N)\right\}.\label{eq:add_rule3}
\end{eqnarray}
The inequality above implies that there are $k+m-2$ ``critical" pairwise comparisons in the RSB problem. To achieve a target PCS $1-\alpha$, we can simply make each of the $k+m-2$ terms in the summation \eqref{eq:add_rule3} be no greater than $\beta=\alpha/(k+m-2)$. We name this method of error allocation the \textit{additive rule}. Since its total sample size is increasing in $\beta$, the two-stage RSB procedure with the additive rule is significantly more efficient than the one with the multiplicative rule.

Nevertheless, the additive rule is not applicable for the sequential RSB procedure developed in Section \ref{sec:robust_sequential} and the multiplicative rule will be used there. This is because the additive rule assumes implicitly that the worst system of each alternative is always retained during the inner-layer selection process, while it may be eliminated in early iterations of the sequential procedure; see Remark  \ref{remark:seq} for more discussion. 

\subsection{Statistical Validity}

We show that the two-stage RSB procedure equipped with the additive rule of error allocation is statistically valid. The case of the multiplicative rule can be proved similarly.

\begin{theorem}\label{theo:twostage}
Suppose that $\{X_{ij}: i=1,2,\ldots,k, j=1,2,\ldots,m\}$ are jointly normally distributed. Set the error allowance $\beta=\alpha/(k+m-2)$. Then, the two-stage RSB procedure is statistically valid, i.e., $\mathbb{P}\{\mu_{i^*1}-\mu_{11}\leq \delta\}\geq 1-\alpha$.
\end{theorem}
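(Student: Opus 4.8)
The plan is to bound the probability of the complementary ``bad selection'' event $\{\mu_{i^*1}-\mu_{11}>\delta\}$ by $\alpha$; by the definition of PCS, a correct selection (when $\mu_{21}-\mu_{11}>\delta$) and a good selection (otherwise) are in both regimes exactly the event $\{\mu_{i^*1}-\mu_{11}\leq\delta\}$, so this suffices for \eqref{eqn:robust}. Write $\mathcal{B}=\{i:\mu_{i1}-\mu_{11}>\delta\}$ for the set of ``bad'' alternatives; since $1\notin\mathcal{B}$ we have $|\mathcal{B}|\leq k-1$, and if $\mathcal{B}=\emptyset$ there is nothing to prove. Because $i^*=\argmin_i\max_j\bar{X}_{ij}(N)$, on $\{i^*=i\}$ we have $\bar{X}_{i1}(N)\leq\max_j\bar{X}_{ij}(N)\leq\max_j\bar{X}_{1j}(N)$; and the event $\{\mu_{i^*1}-\mu_{11}>\delta\}$ is precisely $\{i^*\in\mathcal{B}\}$, so $\{\mu_{i^*1}-\mu_{11}>\delta\}\subseteq\bigcup_{i\in\mathcal{B}}\bigl\{\bar{X}_{i1}(N)\leq\max_j\bar{X}_{1j}(N)\bigr\}$.

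The crux is to convert each inclusion above into a controllable number of \emph{pairwise} comparisons. Introduce the centered pairwise differences $U_l=(\bar{X}_{1l}(N)-\bar{X}_{11}(N))-(\mu_{1l}-\mu_{11})$ for $l=2,\dots,m$ and $V_i=(\bar{X}_{i1}(N)-\bar{X}_{11}(N))-(\mu_{i1}-\mu_{11})$ for $i=2,\dots,k$. Because $\mu_{1l}\leq\mu_{11}$ for every $l$ by Assumption \ref{asp:basic}, one checks termwise that $\max_j\bar{X}_{1j}(N)\leq\bar{X}_{11}(N)+\max\{0,\max_{2\leq l\leq m}U_l\}$. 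Substituting this and using $\mu_{i1}-\mu_{11}>\delta$ for $i\in\mathcal{B}$, the event $\{i^*=i\}$ with $i\in\mathcal{B}$ implies $V_i<\max\{0,\max_{2\leq l\leq m}U_l\}-\delta$, which is contained in $\{V_i<-\delta/2\}\cup\bigcup_{l=2}^m\{U_l>\delta/2\}$ (if neither held, then $\max\{0,\max_l U_l\}\leq\delta/2\leq V_i+\delta$, a contradiction). Taking the union over $i\in\mathcal{B}$,
\begin{equation*}
\{\mu_{i^*1}-\mu_{11}>\delta\}\ \subseteq\ \Bigl(\bigcup_{i\in\mathcal{B}}\{V_i<-\delta/2\}\Bigr)\ \cup\ \Bigl(\bigcup_{l=2}^m\{U_l>\delta/2\}\Bigr),
\end{equation*}
a union of at most $|\mathcal{B}|+(m-1)\leq(k-1)+(m-1)=k+m-2$ events -- exactly the ``critical'' comparisons underlying the additive rule, and the reason the balanced split $\delta_I=\delta_O=\delta/2$ is the right one.

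Each of these $k+m-2$ probabilities is at most $\beta$. For this I would invoke the classical two-stage lemma of \cite{rinott1978two}: under the joint normality assumption, since $N$ is measurable with respect to the first-stage data and $N\geq h^2 S_{ij,i'j'}^2/\delta_I^2$ with $h=t_{1-\beta,n_0-1}$ and $\delta_I=\delta/2$ (guaranteed in Step \ref{step:sampling_twostage} because $N\geq N_{ij,i'j'}$), one has $\mathbb{P}\bigl\{(\bar{X}_{ij}(N)-\bar{X}_{i'j'}(N))-(\mu_{ij}-\mu_{i'j'})\geq\delta_I\bigr\}\leq\beta$ for every pair of systems, with Assumption \ref{asp:variance} ensuring $N<\infty$ almost surely. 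Applying this to the pair $((1,1),(i,1))$ bounds $\mathbb{P}\{V_i<-\delta/2\}=\mathbb{P}\{(\bar{X}_{11}(N)-\bar{X}_{i1}(N))-(\mu_{11}-\mu_{i1})\geq\delta/2\}$ by $\beta$, and applying it to $((1,l),(1,1))$ bounds $\mathbb{P}\{U_l>\delta/2\}$ by $\beta$. Summing over the $k+m-2$ terms with $\beta=\alpha/(k+m-2)$ gives $\mathbb{P}\{\mu_{i^*1}-\mu_{11}>\delta\}\leq\alpha$, which is the claim.

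The hard part is the reduction in the second paragraph. Union-bounding $\bar{X}_{i1}(N)\leq\max_j\bar{X}_{1j}(N)$ naively over $j$ yields $(k-1)(m-1)$ or $(k-1)m$ comparisons, i.e.\ the multiplicative rule, because the worst scenario selected for alternative $1$ in the inner layer, $\argmax_j\bar{X}_{1j}(N)$, is itself random, so no fixed pair of systems is being compared. The additive count $k+m-2$ appears only after one (i) passes to the centered differences $U_l,V_i$, (ii) uses the scenario-maximum bound $\max_j\bar{X}_{1j}(N)\leq\bar{X}_{11}(N)+\max\{0,\max_l U_l\}$ -- where Assumption \ref{asp:basic} is indispensable, and which, importantly, correctly handles the delicate case of tied worst-case scenarios $\mu_{1l}=\mu_{11}$ that a naive ``the inner layer should select $(1,1)$'' argument mishandles -- and (iii) splits at the threshold $\delta/2$. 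The per-comparison bound, by contrast, is a routine invocation of the standard two-stage lemma, and the concluding Bonferroni step is immediate.
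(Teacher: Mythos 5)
Your proposal is correct and follows essentially the same route as the paper: the same reduction of the bad-selection event to the $k+m-2$ critical pairwise comparisons between system $(1,1)$ and systems $(i,1)$, $i$ bad, and $(1,l)$, $l\geq 2$, with the split at $\delta_I=\delta_O=\delta/2$; the same Stein/Rinott two-stage lemma giving each comparison error at most $\beta$; and the same Bonferroni conclusion. Your phrasing via the centered differences $U_l, V_i$ and the union over the bad set $\mathcal{B}$ is a slightly cleaner presentation of the paper's chain of intersections, but not a different argument.
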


The two-stage RSB procedure selects the best alternative based on the means of typically large samples, which can be viewed as approximately normally distributed. To simplify theoretical analysis, we assume that the observations of the systems are jointly normally distributed. Then, Theorem \ref{theo:twostage} states that the two-stage RSB procedure has finite-sample statistical validity. We relax the normality assumption to allow the simulation outputs to have a general distribution for the sequential RSB procedure in next section at the expense of the finite-sample statistical validity. The sequential RSB procedure is statistically valid only asymptotically as the target PCS goes to 1.

\section{Sequential RSB Procedure}\label{sec:robust_sequential}

Sequential procedures for the SB problem typically require smaller sample sizes than two-stage procedures, because the former allow inferior systems to be eliminated dynamically during iterations  \citep{KimNelson01}. If switching between simulations of different systems does not incur substantial computational overhead, then the overall  efficiency of sequential procedures is usually much higher \citep{hong2005the}. 

Before presenting our sequential RSB procedure, we remark that there is a plain-vanilla sequential procedure for the RSB problem. One can simply apply a sequential SB procedure separately in each layer. Specifically, for each alternative,  a sequential SB procedure is applied to select its worst system; it is then applied again to the collection of ``worst systems'' to select the best among them.  However, this procedure has a major drawback: outer-layer eliminations occur only \textit{after} the worst system of each alternative is identified in the inner layer. This incurs excessive samples in the inner layer selection process, since it is unnecessary to identify the worst system for alternatives that are unlikely to be the best. By contrast, our sequential RSB procedures facilitates simultaneous elimination of all the surviving systems of an alternative when the alternative appears to be inferior with high likelihood. Our sequential RSB procedure is iterative with the following structure.

\begin{enumerate}[label=(\roman*)]
\item
Take an initial number of samples  to estimate the mean of each system and the variance of each pairwise difference.
\item\label{item:inner}
Perform the inner-layer selection: for each surviving alternative, eliminate systems that are unlikely to produce the worst-case mean performance.
\item
Perform the outer-layer selection: eliminate inferior alternatives based on the estimated worst-case mean performance of each surviving alternative.
\item
If there is only one surviving alternative or all the surviving alternatives are close enough (determined by the IZ parameter) to each other, then stop; otherwise, take one additional sample from each surviving system, update the statistics, and return to step \ref{item:inner}.
\end{enumerate}

For the inner-layer selection in step (ii), we apply the IZ-free sequential SB procedure in \cite{izfree2016}, hereafter referred to as the FHN procedure. This procedure does not require an IZ parameter and thus we can set the outer-layer IZ parameter to be the same as the overall IZ parameter. The reason for choosing the FHN procedure instead of other sequential SB procedures that based on the IZ formulation is because it is hard to construct an analytically tractable optimization problem, similar to \eqref{eqn:twostage_IZAllocation}, for the sequential RSB procedure that relates the decomposition of the overall IZ parameter to the efficiency of the procedure. See Section \ref{sec:sequential_inner} for details.

For the outer-layer selection in step (iii), a pairwise comparison between two surviving alternatives is done by constructing a confidence interval that bounds the difference between their worst-case mean performances. The confidence level of this interval depends on the error allowance $\beta$. If the confidence interval does not contain zero, then the two competing alternative are differentiated and the inferior one is eliminated (i.e. all the surviving systems of that alternative are eliminated simultaneously). See Section \ref{sec:sequential_outer} for details. 

\subsection{Inner-Layer: Eliminating Systems}\label{sec:sequential_inner}

The objective of the inner-layer selection process is to perform sequential screening to eliminate systems that are unlikely to produce the worst-case mean performance of each alternative.

We apply the FHN procedure to the systems of each alternative. In this procedure, the (normalized) partial-sum difference process between two systems is approximated by a Brownian motion with drift. We can then differentiate the two systems by checking if the drift of the Brownian motion is nonzero.    This is done by monitoring if the Brownian motion exits a well-designed continuation region, whose boundaries are formed by $\pm g_c(t)$ for $t\geq 0$, where $g_c(t) = \sqrt{[c+\log(t+1)](t+1)}$ for some carefully chosen constant $c$ that depends on the target PCS $1-\alpha$.

More specifically, consider alternative $i$ and let $\bar{X}_{ij}(n)$ denote the sample mean based on the first $n$ independent replications of system $(i,j)$. Define $t_{ij,ij'}(n) = n \sigma_{ij,ij'}^{-2}$ and $ Z_{ij,ij'}(n) = t_{ij,ij'}(n)[\bar{X}_{ij}(n)-\bar{X}_{ij'}(n)]$, 
where $\sigma_{ij,ij'}^2=\Var[X_{ij}-X_{ij'}]$, for any $1\leq j\neq j'\leq m$. Then, $Z_{ij,ij'}(n)$ can be approximated in distribution by a Brownian motion possibly with a nonzero drift. For any pairwise comparison between system $(i,j)$ and system $(i,j')$ with $j\neq j'$, we keep taking samples from them (i.e., increasing $n$) until
$|Z_{ij,ij'}(t_{ij,ij'}(n))|\geq g_c(t_{ij,ij'}(n))$, at which point the one with a smaller estimated mean performance is eliminated by the other since we are seeking the system that has the largest mean performance.
Once eliminated, a system will not be considered in any subsequent comparisons.

\subsection{Outer-Layer: Eliminating Alternatives}\label{sec:sequential_outer}

The sequential RSB procedure allows \textit{simultaneous} elimination of all the surviving systems of an alternative. This is achieved in the outer-layer selection process by comparing the estimated worst-case mean performances of the surviving alternatives. Notice that pairwise comparisons in the outer-layer are done for alternatives, instead of systems, and the comparisons are based on random sets of surviving systems of the two alternatives. As a result, sequential procedures for the SB problem are not applicable here.

Consider alternatives $i$ and $i'$ that have survived after $n$ samples of the relevant systems. To design an elimination rule between them, we construct a dynamic confidence interval $(L_{ii'}(n), U_{ii'}(n))$ for $\mu_{i1} - \mu_{i'1}$, the difference between their worst-case mean performances, i.e.,
\[\mathbb{P}\{\mu_{i1}-\mu_{i'1}\in (L_{ii'}(n), U_{ii'}(n)), \mbox{ for all } n<\infty\} \geq 1-\epsilon,\]
for a given confidence level $1-\epsilon$. Hence, if $L_{ii'}(n)>0$ (resp., $U_{ii'}(n)<0$), then $\mu_{i1}>\mu_{i'1}$ (resp., $\mu_{i1}<\mu_{i'1}$) with statistical significance and we eliminate alternative $i$ (resp., $i'$); otherwise, we continue sampling. In Proposition $\ref{prop:robustsequential_CB}$, we present an asymptotically valid approach for constructing such a confidence interval.

\begin{proposition}\label{prop:robustsequential_CB}
For $i=1,2,\ldots,k$, let $\mathcal{S}_i(n)$ denote the set of surviving systems of alternative $i$ after taking $n$ samples of the relevant systems and the subsequent inner-layer elimination. For $\beta\in(0,1)$, let $g_c(t)=\sqrt{[c+\log(t+1)](t+1)}$ with $c=-2\log(2\beta)$. For any two alternatives $i$ and $i'$, define an interval $(L_{ii'}(n), U_{ii'}(n))$ as follows
\begin{equation}\label{eqn:robustsequential_CB_outer}
\begin{aligned}
L_{ii'}(n) &= \max_{(i,j)\in \mathcal{S}_i(n)}\bar{X}_{ij}(n)-\max_{(i',j)\in \mathcal{S}_{i'}(n)}\bar{X}_{i'j}(n)-C_i(n)- D_{ii'}(n),\\
U_{ii'}(n) &= \max_{(i,j)\in \mathcal{S}_i(n)}\bar{X}_{ij}(n)-\max_{(i',j)\in \mathcal{S}_{i'}(n)}\bar{X}_{i'j}(n)+C_{i'}(n)+ D_{ii'}(n),
\end{aligned}
\end{equation}
where
\[C_i(n) = \max_{(i,j),(i,j')\in \mathcal{S}_i(n)} \frac{g_c(t_{ij,ij'}(n))}{t_{ij,ij'}(n)}\quad\mbox{and}\quad D_{ii'}(n) = \max_{(i,j)\in \mathcal{S}_i(n),(i',j')\in\mathcal{S}_{i'}(n)} \frac{g_c(t_{ij,i'j'}(n))}{t_{ij,i'j'}(n)}.\]
If $(i,1)\in \mathcal{S}_i(n)$ and $(i',1)\in \mathcal{S}_{i'}(n)$ for all $n\geq 1$, then
\begin{eqnarray*}
\limsup_{\beta\to 0}\ \frac{1}{2\beta} \mathbb{P}\left\{\mu_{i1}-\mu_{i'1}\notin \left(L_{ii'}(n),U_{ii'}(n)\right)\mbox{ for some } n\geq 1\right\}\leq 1.
\end{eqnarray*}
\end{proposition}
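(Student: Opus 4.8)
The plan is to reduce the whole statement, on the event that the true worst systems $(i,1)$ and $(i',1)$ are never discarded by the inner-layer screening, to a single two-sided Brownian boundary-crossing event — the one associated with the direct pairwise comparison of systems $(i,1)$ and $(i',1)$ — and then to invoke the continuous-time boundary-crossing analysis underlying the FHN procedure \citep{izfree2016} with the calibration $c=-2\log(2\beta)$.

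First I would prove a purely pathwise sandwich for the inner-layer estimate $\widehat\mu_i(n):=\max_{(i,j)\in\mathcal S_i(n)}\bar X_{ij}(n)$. Because $(i,1)\in\mathcal S_i(n)$ we immediately get $\widehat\mu_i(n)\ge\bar X_{i1}(n)$. For the reverse inequality, let $j_i^\ast(n)$ be an index attaining the maximum; since $(i,j_i^\ast(n))$ and $(i,1)$ both survive at step $n$, neither has eliminated the other, so the standardized difference process for this pair has never left its continuation region, i.e. $|Z_{ij_i^\ast(n),i1}(t_{ij_i^\ast(n),i1}(n))|<g_c(t_{ij_i^\ast(n),i1}(n))$; dividing by $t_{ij_i^\ast(n),i1}(n)$ and using that $C_i(n)$ is the maximum of $g_c(t_{ij,ij'}(n))/t_{ij,ij'}(n)$ over surviving pairs yields $\widehat\mu_i(n)-\bar X_{i1}(n)\le C_i(n)$. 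The same holds for $i'$. The important feature here is that this step is deterministic and requires no union bound over the systems of a given alternative: the correction terms $C_i(n)$ and $C_{i'}(n)$ absorb the inner-layer randomness by construction.

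Next I would combine the two sandwiches. Expanding $\widehat\mu_i(n)-\widehat\mu_{i'}(n)$ around $\bar X_{i1}(n)-\bar X_{i'1}(n)$ shows that $(L_{ii'}(n),U_{ii'}(n))\supseteq\big(\bar X_{i1}(n)-\bar X_{i'1}(n)-D_{ii'}(n),\ \bar X_{i1}(n)-\bar X_{i'1}(n)+D_{ii'}(n)\big)$, and hence
\[\big\{\mu_{i1}-\mu_{i'1}\notin(L_{ii'}(n),U_{ii'}(n))\text{ for some }n\big\}\subseteq\big\{|(\bar X_{i1}(n)-\bar X_{i'1}(n))-(\mu_{i1}-\mu_{i'1})|\ge D_{ii'}(n)\text{ for some }n\big\}.\]
Since $(i,1)\in\mathcal S_i(n)$ and $(i',1)\in\mathcal S_{i'}(n)$, we have $D_{ii'}(n)\ge g_c(t_{i1,i'1}(n))/t_{i1,i'1}(n)$; multiplying the last inclusion through by $t_{i1,i'1}(n)$ and recalling $Z_{i1,i'1}(t_{i1,i'1}(n))=t_{i1,i'1}(n)(\bar X_{i1}(n)-\bar X_{i'1}(n))$, the failure event is contained in $\{|Z_{i1,i'1}(t_{i1,i'1}(n))-(\mu_{i1}-\mu_{i'1})\,t_{i1,i'1}(n)|\ge g_c(t_{i1,i'1}(n))\text{ for some }n\}$.

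Finally I would bound the probability of this last event. The centered process $t\mapsto Z_{i1,i'1}(t)-(\mu_{i1}-\mu_{i'1})t$ is, up to the strong-approximation error and the replacement of $\sigma_{i1,i'1}^2$ by a consistent estimator, a standard driftless Brownian motion, and the boundary $g_c(t)=\sqrt{[c+\log(t+1)](t+1)}$ with $c=-2\log(2\beta)$ is calibrated in \cite{izfree2016} precisely so that $\limsup_{\beta\to0}(2\beta)^{-1}\mathbb P\{|W(t)|\ge g_c(t)\text{ for some }t\ge0\}\le1$ for a standard Brownian motion $W$. Applying this bound to the centered process gives the claim. I expect the last step to be the main technical point, but it is not new: the rigorous treatment of the Brownian approximation in the sequential, random-stopping-time regime with estimated variances is exactly what \cite{izfree2016} supplies, and everything preceding it — the pathwise sandwich and the set inclusions — is elementary.
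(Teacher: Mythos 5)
Your proposal is correct and follows essentially the same route as the paper's proof: the deterministic sandwich $\bar X_{i1}(n)\le\max_{(i,j)\in\mathcal S_i(n)}\bar X_{ij}(n)\le\bar X_{i1}(n)+C_i(n)$ from the inner-layer continuation region, the resulting inclusion of the failure event into the two-sided boundary-crossing event for the centered $(i,1)$-versus-$(i',1)$ difference process, and the FHN/Brownian first-exit calibration giving the $2\beta$ bound (the paper merely splits the two-sided event into two one-sided events, each contributing $\beta$ via its Lemma~\ref{lem:FirstExitProb}(i)). No gaps.
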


Under the IZ formulation, the sequential RSB procedure stops if either of the following conditions holds: (i) all but one alternatives are eliminated; (ii) all the surviving alternatives are sufficiently close to each other. The latter condition amounts to $C_{i}(n)+D_{ii'}(n)\leq \delta$ for any pair of surviving alternatives $i$ and $i'$ in the light of \eqref{eqn:robustsequential_CB_outer}. The above stopping criterion ensures that the unique best alternative or a good alternative is ultimately selected with certain statistical guarantee.

\subsection{The Procedure}\label{sec:robustsequential_proc}
We now present the sequential RSB procedure (Procedure S).

\vspace{5pt}

\begin{procedure}[Procedure S: Sequential RSB Procedure]\label{proc: robustsequential}
\begin{enumerate}
\item[]
\setcounter{enumi}{-1}
\item \textit{Setup.}
Specify the error allowance $\beta=\alpha/(km-1)$ and the first-stage sample size $n_0\geq 2$. Set $c=-2\log(2\beta)$.
\item \textit{Initialization.} Set $n=n_0$. Set $\mathcal S=\{1,2,\ldots,k\}$ to be  the set of surviving alternatives. Set $\mathcal{S}_i=\{(i,j): j=1,2,\ldots,m\}$ to be  the set of surviving systems of alternative $i$, $i=1,\ldots,k$. Take $n$ independent replications $X_{ij,1},\ldots, X_{ij,n}$ of each system $(i,j)$.
\item\label{item:update} \textit{Updating.} Compute the sample mean of each surviving system and the sample variance of the difference between each pair of surviving systems as follows
\[
\begin{array}{c}
\displaystyle\bar{X}_{ij}(n)=\frac{1}{n}\sum_{r=1}^{n} X_{ij,r}, \quad i\in\mathcal{S}, \; (i,j)\in \mathcal{S}_i,\\
\displaystyle S_{ij,i'j'}^2(n) = \frac{1}{n-1}\sum_{r=1}^{n}\left[X_{ij,r}-X_{i'j',r}-(\bar{X}_{ij}(n)-\bar{X}_{i'j'}(n))\right]^2,  \quad i,i'\in\mathcal{S}, \; (i,j)\in \mathcal{S}_i,\; (i',j')\in\mathcal{S}_{i'}.
\end{array}\]
\item \textit{Elimination.} For each $(i,j)\in \mathcal{S}_i$,  $(i',j')\in \mathcal{S}_{i'}$ with $i,i'\in \mathcal{S}$ and $i\neq i'$ or $j\neq j'$, compute
\[
\tau_{ij,i'j'}(n)=\frac{n}{S^2_{ij,i'j'}(n)} \quad\mbox{ and }\quad  Z_{ij,i'j'}(n) = \tau_{ij,i'j'}(n)[\bar{X}_{ij}(n)-\bar{X}_{i'j'}(n)].
\]
\begin{enumerate}[label*=\arabic*]
\item \textit{Inner-layer.} For each $i\in \mathcal{S}$, assign
\begin{align*}
\mathcal{S}_i \gets \mathcal{S}_i\setminus\{(i,j)\in \mathcal{S}_i: Z_{ij,ij'}(n)\leq -&g_c(\tau_{ij,ij'}(n)) \mbox{ for some } (i,j')\in \mathcal{S}_i\}.
\end{align*}

\item\textit{Outer-layer.} For each $i\in \mathcal{S}$, compute
\[C_i(n) = \max_{(i,j),(i,j')\in \mathcal{S}_i} \frac{g_c(\tau_{ij,ij'}(n))}{\tau_{ij,ij'}(n)};\]
for any other $i'\in \mathcal{S}$, compute
\[\tau_{ii'}^*(n)=\min_{(i,j)\in \mathcal{S}_i,(i',j')\in\mathcal{S}_{i'}}\tau_{ij,i'j'}(n)\quad\mbox{and}\quad W_{ii'}(n)=  \max_{(i,j)\in \mathcal{S}_i}\bar{X}_{ij}(n)-\max_{(i',j)\in \mathcal{S}_{i'}}\bar{X}_{i'j}(n).\]
Assign
\[\mathcal{S}\gets \mathcal{S}\setminus \{i\in \mathcal{S}: \tau_{ii'}^*(n)[W_{ii'}(n)-C_i(n)]> g_c(\tau_{ii'}^*(n))\mbox{ for some } i'\in \mathcal{S}\}.\]

\end{enumerate}

\item \textit{Stopping.} If either $|\mathcal{S}|=1$ or
\[\tau_{ii'}^*(n)[\delta-C_i(n)]\geq g_c(\tau_{ii'}^*(n))\mbox{ and }\tau_{ii'}^*(n)[\delta-C_{i'}(n)]\geq g_c(\tau_{ii'}^*(n)),\quad \mbox{for all $i,i'\in\mathcal{S}$},\]
then stop and select $i^*=\argmin\limits_{i\in\mathcal{S}} \max\limits_{(i,j)\in \mathcal{S}_i}\bar{X}_{ij}(n)$
as the best alternative. Otherwise, take one additional replication of each $(i,j)\in\mathcal{S}_i$ with $i  \in \mathcal{S}$, assign $n\gets n+1$, and return to step \ref{item:update}. 
\end{enumerate}
\end{procedure}

\subsection{Asymptotic Statistical Validity}\label{sec:seq_valid}

The FHN procedure for the SB problem allows the samples of the competing alternatives to have a general distribution at the expense of the finite-sample statistical validity. We show that the sequential RSB procedure equipped with the multiplicative rule of error allocation is statistically valid in an asymptotic regime in which the targeted PCS level goes to 1 (i.e., $1-\alpha\to 1$). This regime is adopted by \cite{izfree2016} and dates back to \cite{perng1969comparison} and \cite{Dudewics1969an}.

\begin{theorem}\label{theo:sequential}
Suppose that $\{X_{ij}: i=1,2,\ldots,k, j=1,2,\ldots,m\}$ are generally distributed, and that  the moment generating function of $\{X_{ij}: i=1,2,\ldots,k, j=1,2,\ldots,m\}$ is finite in a neighborhood of the origin of $\mathbb R^{k\times m}$. 
Let $n_0(\alpha)$ denote the initial sample size of the sequential RSB procedure as a function of $\alpha$.  Set the error allowance $\beta=\alpha/(km-1)$. If $n_0(\alpha)\to \infty$ as $\alpha\to 0$, then the sequential RSB procedure is statistically valid asymptotically as $\alpha\to 0$, i.e.,  $\limsup\limits_{\alpha\to 0}\mathbb{P}\{\mu_{i^*1}-\mu_{11}> \delta\}/\alpha\leq 1$.
\end{theorem}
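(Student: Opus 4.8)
The plan is to bound the incorrect-selection probability by the sum of the error probabilities over the $km-1$ ``critical'' pairwise comparisons identified in the multiplicative rule, and then to show that each such comparison contributes at most $\beta(1+o(1)) = \frac{\alpha}{km-1}(1+o(1))$ asymptotically, so the total is $\alpha(1+o(1))$. The starting point is the decomposition established in \eqref{eq:multi_rule}: on the event that a bad alternative $i^*$ (with $\mu_{i^*1}-\mu_{11}>\delta$) is selected, system $(1,1)$ must have been eliminated, either in the inner layer by some $(1,j)$ with $j\geq 2$, or in the outer layer because alternative $1$ was dominated by some alternative $i\geq 2$. Because we use the FHN-type continuation region with boundary $g_c(t)=\sqrt{[c+\log(t+1)](t+1)}$ and $c=-2\log(2\beta)$, the key single-comparison estimate I would invoke is the boundary-crossing bound underlying the FHN procedure: for a Brownian motion with the ``wrong-sign'' or zero drift, the probability of ever exiting the region on the wrong side is at most $2e^{-c/2}(1+o(1)) = 2\beta(1+o(1))$ as $\beta\to0$. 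This is exactly the content behind Proposition \ref{prop:robustsequential_CB}, which I would use essentially verbatim for the outer-layer comparisons, and an analogous one-sided version for the inner-layer comparisons of system $(1,1)$ against $(1,j)$.

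The steps I would carry out, in order, are as follows. First, condition on the good event $\mathcal{G}$ that for every alternative $i$ the true worst system $(i,1)$ survives all inner-layer eliminations forever; by the FHN one-sided boundary bound applied to each of the at most $k(m-1)$ comparisons of the form $(i,1)$ vs.\ $(i,j)$, $\mathbb{P}(\mathcal{G}^c)\leq k(m-1)\cdot 2\beta(1+o(1))$. Wait — this would already exceed the budget, so instead I would be more careful: I only need $(1,1)$ to survive for the CS analysis of alternative $1$, and for each bad alternative $i$ I only need \emph{some} surviving system of $i$ to have mean at least $\mu_{i1}-$(something controlled by the IZ logic); so the relevant events are just the $m-1$ inner comparisons within alternative $1$ plus the $k-1$ outer comparisons of alternative $1$ against the others — but here the multiplicative rule's $km-1$ terms come in because, unlike the two-stage additive-rule argument, the outer comparison $W_{ii'}$ depends on the \emph{random} surviving sets, so one must union-bound over which system attains the max, giving $m$ choices per alternative. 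Second, on $\{(1,1)\text{ survives forever}\}$, apply Proposition \ref{prop:robustsequential_CB} with $i'=1$ to each bad alternative $i$: the event that alternative $1$ is eliminated by $i$, or that the procedure stops with $i^*=i$ despite $\mu_{i1}-\mu_{11}>\delta$, is contained in $\{\mu_{i1}-\mu_{11}\notin(L_{i1}(n),U_{i1}(n))\text{ for some }n\}$ together with the stopping-rule slack $C_i(n)+D_{i1}(n)\leq\delta$, which by the IZ gap $\delta$ cannot trigger a wrong stop — this is where Assumption \ref{asp:basic} and the definition of GS/CS enter. Third, union-bound over the $km-1$ critical comparisons, each bounded by $2\beta(1+o(1))=\frac{2\alpha}{km-1}(1+o(1))$; wait, the factor $2$ is absorbed by the choice $c=-2\log(2\beta)$ so that $2e^{-c/2}=2\beta$, hence each term is $\beta(1+o(1))$ after the correct bookkeeping, and the sum is $\alpha(1+o(1))$. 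Fourth, invoke the moment-generating-function hypothesis to justify the Brownian approximation uniformly: the condition $n_0(\alpha)\to\infty$ ensures the strong-approximation / Anscombe-type error terms in replacing the partial-sum difference processes by Brownian motions are $o(\beta)$, exactly as in \cite{izfree2016}; this is what makes the $(1+o(1))$ factors legitimate rather than hand-waved.

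The main obstacle — and the step deserving the most care — is the fourth one: controlling, \emph{uniformly over the $km-1$ comparisons and over all sample sizes $n\geq n_0(\alpha)$ simultaneously}, the error incurred by approximating the normalized partial-sum difference process $Z_{ij,i'j'}(\tau(n))$ by a Brownian motion when the variances $\sigma^2_{ij,i'j'}$ are unknown and replaced by the running estimates $S^2_{ij,i'j'}(n)$. The finite-MGF assumption gives exponential concentration of $S^2_{ij,i'j'}(n)$ around $\sigma^2_{ij,i'j'}$, and $n_0(\alpha)\to\infty$ guarantees the initial estimate is already accurate enough that the time-change $\tau$ is a small perturbation of the ideal one; one then transfers the FHN boundary-crossing bound through this perturbation using a continuity/monotonicity argument on $g_c$, showing the crossing probability changes only by a multiplicative $(1+o(1))$. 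A secondary subtlety is that the outer-layer comparison is between \emph{alternatives} via random surviving sets, so $D_{ii'}(n)$ and $C_i(n)$ are maxima over random index sets; bounding these requires noting that the relevant max in Proposition \ref{prop:robustsequential_CB} is taken over \emph{all} $j,j'$ (not just surviving ones) in the worst case, which is why the union bound naturally produces the $km-1$ count of the multiplicative rule rather than the smaller $k+m-2$ of the additive rule — and this is precisely why, as noted in Remark \ref{remark:seq}, the additive rule fails here.
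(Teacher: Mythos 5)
Your overall architecture is the right one --- decompose the incorrect-selection event into pairwise-comparison errors, bound each by $\beta(1+o(1))$ via the boundary-crossing probability of the $g_c$ continuation region, and sum to $\alpha$ --- and your treatment of the Brownian approximation (FCLT, consistency of the running variance estimates, the role of $n_0(\alpha)\to\infty$) matches the paper's. However, there is a genuine gap in your accounting of the $km-1$ critical comparisons, and the ``correction'' you insert mid-proof makes it worse rather than better. The correct decomposition, which is the one the paper uses in \eqref{eq:seq_prob_decompo}, conditions on the event $C$ that the worst system $(i,1)$ of \emph{every} alternative $i$ survives all inner-layer eliminations, not just $(1,1)$. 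This is essential and not optional: Lemmas \ref{lem:outer_elim_prob} and \ref{lem:kill_prob} both require $(i,1)\in\mathcal{S}_i(n)$ for the \emph{competing} alternative as well, because if the worst system of a bad alternative $i$ is eliminated in the inner layer, alternative $i$'s estimated worst-case mean drops and it becomes \emph{more} likely to wrongly eliminate or kill alternative 1 (this is exactly the point of Remark \ref{remark:seq}). The count is then $k(m-1)$ inner-layer comparisons (all pairs $(i,1)$ vs.\ $(i,j)$, $j\geq 2$, over all $i$, bounding $\mathbb{P}(C^{\mathsf c})$ via Lemma \ref{lem:robustsequential_inner}) plus $k-1$ outer-layer comparisons of alternative 1 against each other alternative, giving $k(m-1)+(k-1)=km-1$. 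Your claim that the $km-1$ arises from ``union-bounding over which system attains the max, giving $m$ choices per alternative'' is not the mechanism and does not produce a valid bound.

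The reason you talked yourself out of the correct decomposition is a second, smaller error: you priced each inner-layer comparison at $2\beta$ (the two-sided exit probability) and concluded that $k(m-1)\cdot 2\beta$ blows the budget. But the inner-layer error event is one-sided --- system $(i,1)$ is eliminated only if the partial-sum difference exits on the \emph{lower} boundary --- and with $c=-2\log(2\beta)$ the one-sided exit probability is $\tfrac{1}{2}e^{-c/2}=\beta$, which is exactly Lemma \ref{lem:FirstExitProb}(ii) and Lemma \ref{lem:robustsequential_inner}. With the one-sided bound, $\mathbb{P}(C^{\mathsf c})\leq k(m-1)\beta(1+o(1))$ fits the budget and no repair is needed. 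One further point deserving care that you gloss over: the outer-layer failure for a bad alternative $i$ is not only ``alternative 1 is eliminated by $i$'' but also ``the stopping criterion fires with both surviving and alternative 1 has the larger worst-case sample mean''; the paper's Lemma \ref{lem:kill_prob} shows this second scenario is \emph{also} contained in a one-sided boundary-crossing event of probability at most $\beta(1+o(1))$ (using the IZ gap $\mu_{i1}-\mu_{11}>\delta$ and the stopping condition $\tau^*_{1i}(M)[\delta - C_1(M)\vee C_i(M)]\geq g_c(\tau^*_{1i}(M))$), rather than being impossible as your sketch suggests.
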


\begin{remark}
Theorem \ref{theo:sequential} indicates that with generally distributed simulation outputs, the sequential RSB procedure is statistically valid for all $\alpha>0$ small enough. The assumption on $n_0$ is imposed to make sure that the distribution of the sample means converges to the normal distribution as $\alpha\to0$. This assumption ensures the strong consistency of sequentially updated variance estimators, thereby facilitating asymptotic analysis of the procedure. Nevertheless, numerical experiments show that even with a moderate $n_0$, the procedure can still deliver a pre-specified PCS.
\end{remark}

\begin{remark}\label{remark:seq}
The additive rule of error allocation does not apply here. In fact, inequality \eqref{eq:add_rule3}, which underpins the additive rule, implicitly assumes that the worst system of each alternative is in contention in the inner-layer selection process. This is certainly the case for the two-stage RSB procedure. By contrast, the sequential RSB procedure has multiple rounds of inner/outer-layer selection, and the worst system of an alternative may be eliminated in an early round. Notice that the dynamic confidence interval \eqref{eqn:robustsequential_CB_outer} hinges on the condition that the worst system of each alternative is never eliminated in the inner-layer. Indeed, if the system having the largest mean performance of an alternative is eliminated in the inner-layer, the remaining systems of the same alternative will yield a smaller estimate of the worst-case mean performance, which makes this alternative less likely be eliminated in the outer-layer. Therefore, in order that the sequential RSB procedure be statistically valid, the errors associated with all the pairwise comparisons among the systems of each alternative must be controlled, which makes the multiplicative rule necessary.

\end{remark}

\section{Computational Efficiency}\label{sec:robust_numerical}
In this section we focus on the procedures' efficiency for a small $\alpha$ through a set of numerical experiments that generalize standard tests for SB procedures.

Suppose that there are $k\times m$ systems, where system $(i,j)$ refers to the pair of alternative $i$ and $j$th probability scenario in the ambiguity set. Let $X_{ij}$ denote the random performance of system $(i,j)$, for $i=1,2,\dots,k, j=1,2,\dots,m$ and suppose that $(X_{ij}: i=1,2,\ldots,k,\ j=1,2,\ldots,m)$ are mutually independent normal random variables, $X_{ij}\sim \mathcal{N}(\mu_{ij},\sigma_{ij}^2)$. Under Assumption \ref{asp:basic}, the two RSB procedures aim to select alternative 1 upon termination in an attempt to solve $\min_i\max_j \mathbb{E}[X_{ij}]$. In particular, we consider two different configurations of the means that generalize the \textit{slippage configuration} (SC) and the \textit{monotone decreasing means} (MDM) configuration for SB procedures. For SC we use
\begin{equation}\label{eqn:sc}
[\mu_{ij}]_{k\times m} =
\begin{pmatrix}
0 && 0 && \ldots && 0\\
0.5 && 0.5 && \ldots && 0.5\\
\vdots && \vdots && \ddots && \vdots \\
0.5 && 0.5 && \ldots && 0.5
\end{pmatrix},
\end{equation}
and for MDM we use
\begin{equation}\label{eqn:mdm}
[\mu_{ij}]_{k\times m} =
\Big(0.5(i-1)-0.2(j-1)\Big)_{1\leq i\leq k,1\leq j\leq m}
\end{equation}
Notice that with the configuration \eqref{eqn:sc}, the outer-layer selection process deals with the largest means of each row, i.e., $(0,0.5,0.5,\ldots,0.5)$, which is a SC for the SB problem with IZ formulation. With the configuration \eqref{eqn:mdm}, the means of each row are monotonically decreasing so the inner-layer selection process for each row corresponds to a SB problem with IZ formulation and a MDM configuration; the outer-layer selection process, on the other hand, deals with $(0, 0.5, \ldots, 0.5(k-1))$, also a monotone configuration of means.

Independently of the means, we further consider three configurations of the variances:
\begin{enumerate}[label=(\roman*)]
\item
Equal-variance (EV) configuration: $\sigma_{ij}^2=1$ for all $i,j$.
\item
Increasing-variance (IV) configuration: $\sigma_{ij}^2 = (1+0.1(i-1))(1+0.1(j-1))$ for all $i,j$.
\item
Decreasing-variance (DV) configuration: $\sigma_{ij}^2 = (1+0.1(i-1))^{-1} (1+0.1(j-1))^{-1}$ for all $i,j$.
\end{enumerate}

In all the experiments below we set the initial sample size $n_0=10$ and the target PCS $1-\alpha=0.95$. For each experiment specification (i.e., IZ parameter, values of $k$ and $m$, configuration of the means and the variances, rule of error allocation, RSB procedure), we repeat the experiment 1000 times independently. We find that the realized PCS is 1.00 for all the cases. It is well known in SB literature that selection procedures that rely on the Bonferroni inequality usually over-deliver PCS \citep{Frazier14}. Thus, we only report the average sample size of each procedure in this section.

\subsection{Comparison Between Multiplicative Rule and Additive Rule}\label{sec:robustnumerical_errorallocation}

We set the IZ parameter $\delta=0.5$ and decompose it equally into the inner-layer and outer-layer IZ parameters. Notice that the average sample size of Procedure T does not depend on the means. We study the impact of error allocation on Procedure T's efficiency by varying the problem scale (i.e., $k$ and $m$) and configuration of the variances. The numerical results are presented in Figure \ref{fig:two-stage}.

\begin{figure}[t]
\begin{center}
\caption{Average Sample Sizes of Procedure T with Different Error Allocation Rules}\label{fig:two-stage}
\includegraphics[width=0.8\textwidth]{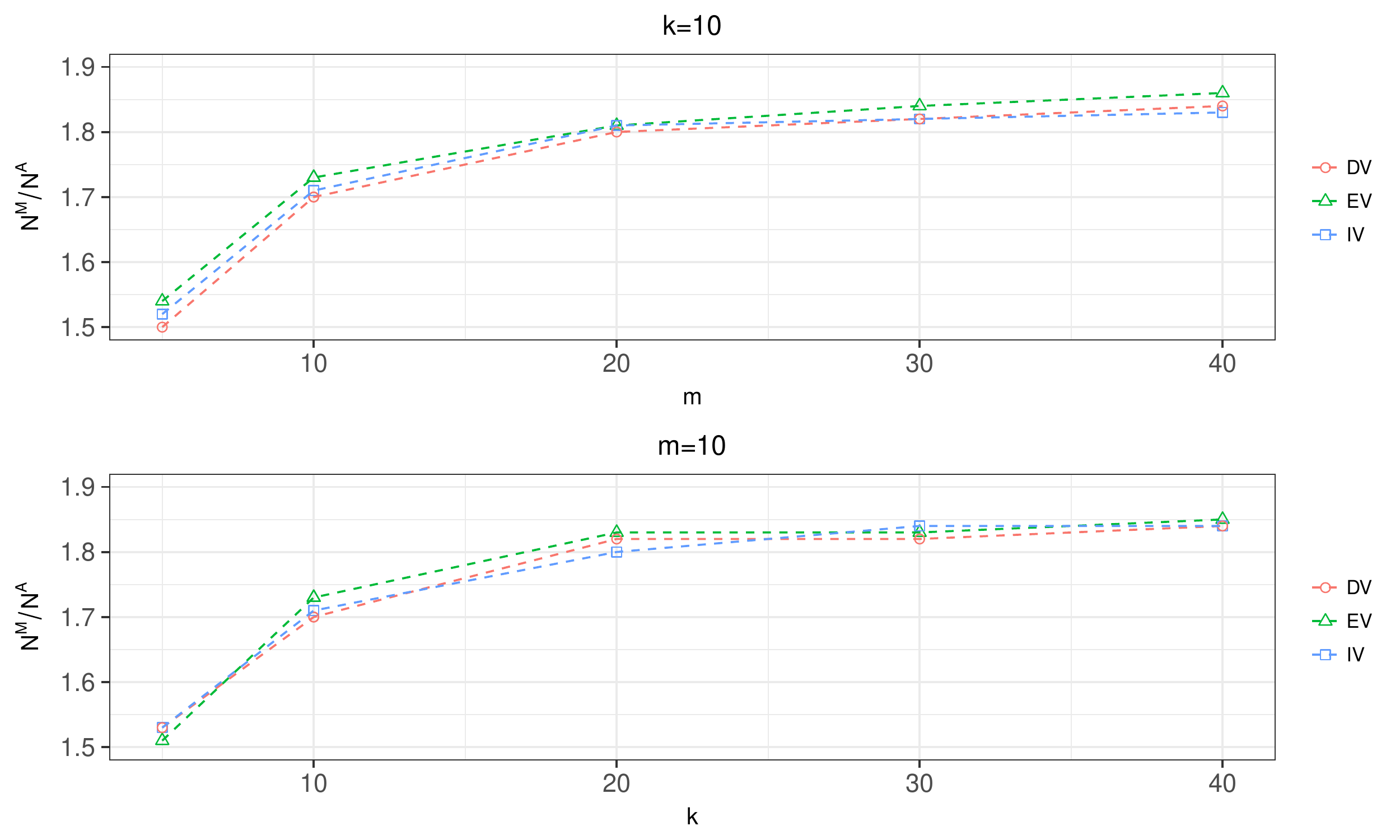}
\end{center}
\small{\textit{Note.} Top: $m$ varies with $k=10$; bottom: $k$ varies with $m=10$. The vertical axis represents the ratio of the average sample size of Procedure T under the multiplicative rule to that under the additive rule.}
\end{figure}

First, as expected the multiplicative rule requires more samples than the additive rule, regardless of problem scale or configuration of the means/variances. In all the experiments, $N^{\mathrm{M}}$ is about $50\% \sim 90\%$ greater than $N^{\mathrm{A}}$, where $N^{\mathrm{M}}$ and $N^{\mathrm{A}}$ denote the average sample size under the multiplicative and additive rules, respectively.

Second, for a given problem scale, the ratio $N^{\mathrm{M}}/N^{\mathrm{A}}$ is almost independent of configuration of the variances. This is because with an equal split of $\delta$ into the inner-layer and the outer-layer, the average sample size is $N\approx 4h^2\delta^{-2}\max_{i,j}\sigma_{ij}^2$, where $h=t_{1-\beta,n_0-1}$ is the $100(1-\beta)\%$ quantile of the student distribution with $n_0-1$ degrees of freedom. Hence, for any given configuration of the variances, $N^{\mathrm{M}}/N^{\mathrm{A}}\approx h_{\mathrm{M}}^2/h_{\mathrm{A}}^2$. So essentially the ratio is  determined by the value of $\beta$, which depends on the rule of error allocation and problem scale for a given target PCS.

Third, for the same reason, given a configuration of the variances the ratio $N^{\mathrm{M}}/N^{\mathrm{A}}$ increases as $k$ or $m$ increases. However, the rate at which this ratio increases is low relative to the increase in the values of $k$ and $m$. This is caused by the fact that the standard normal quantile grows very slowly towards the tail of the distribution.

\subsection{Comparison Between Procedure T and Procedure S}

We have argued in Remark \ref{remark:seq} that the additive rule does not apply to Procedure S. We now compare Procedure S under the multiplicative rule with Procedure T under the additive rule, and show that the former is much more efficient.

The two RSB procedures are implemented under different combinations of problem scale, IZ parameter, and configuration of the means/variances. Again, we choose $\delta$ from 0.5, 0.25 and 0.1, but only present the numerical results for EV configuration (seen in Figure \ref{fig:sequential_twostage}) because the results for the other two configurations of the variances are very similar.

\begin{figure}[t]
\begin{center}
\caption{Average Sample Sizes of Procedure T and Procedure S Under the EV Configuration }\label{fig:sequential_twostage}
\includegraphics[width=0.8\textwidth]{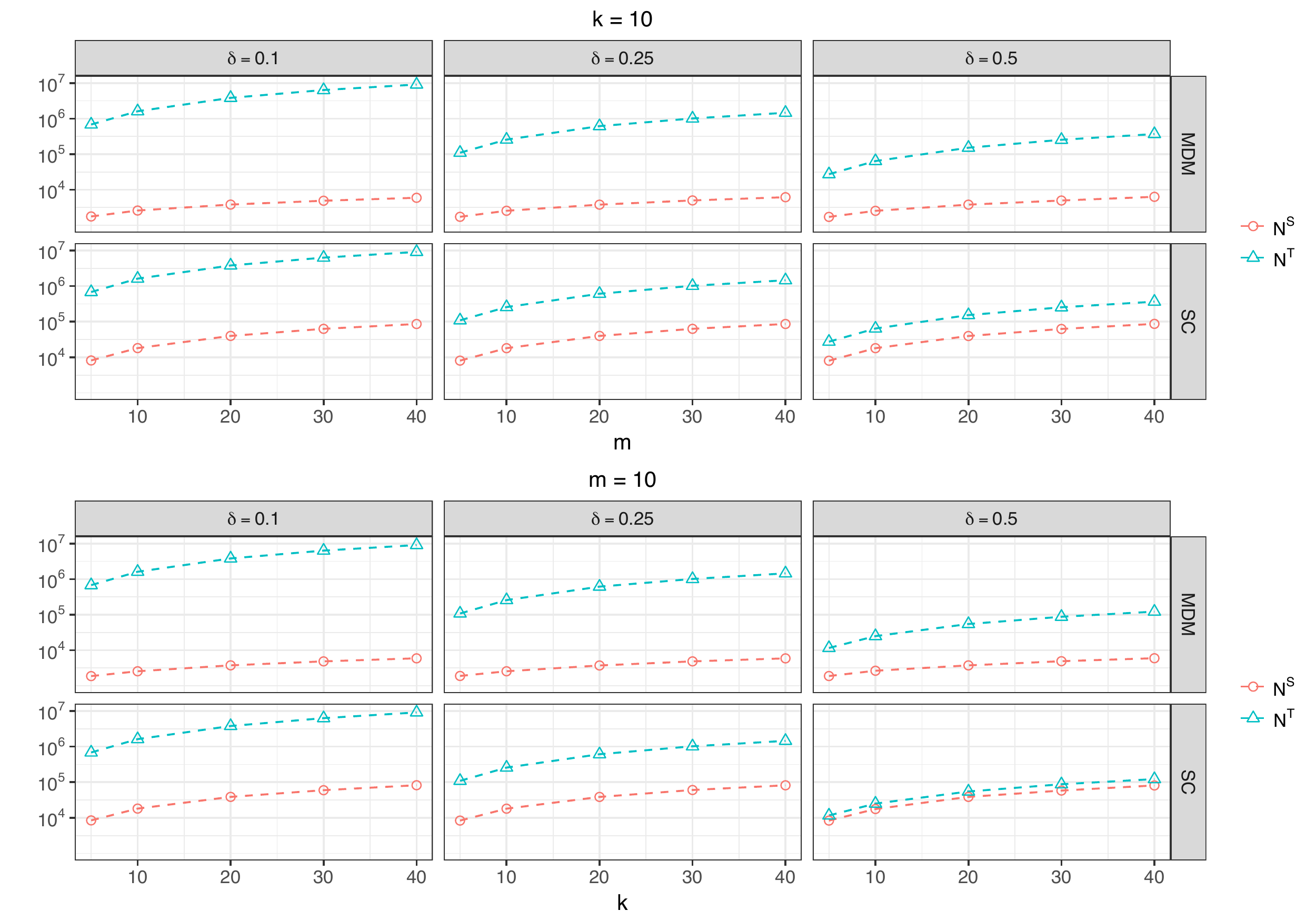}
\end{center}
\small{\textit{Note.} Top: $m$ varies with $k=10$; Bottom: $k$ varies with $m=10$. The vertical axis is on a logarithmic scale.}
\end{figure}

First, Procedure S requires a dramatically  fewer samples than Procedure T, seen from the fact that $N^{\mathrm{S}}$ is much smaller than $N^{\mathrm{T}}$, where $N^{\mathrm{S}}$ and $N^{\mathrm{T}}$ denote the average sample sizes of Procedure S and Procedure T, respectively. The difference can be orders of magnitude for large-scale problems with a small IZ parameter; see, e.g., the subplots in the first column of Figure \ref{fig:sequential_twostage}. This suggests that there are an enormous number of early eliminations of systems/alternatives in Procedure S.

Second, interestingly, $N^S$ is not sensitive to $\delta$ when $\delta$ is small relative to the difference between the best and the second-best alternatives, i.e., $\delta\leq\mu_{21}-\mu_{11}$. For example, the dashed lines with circles in the three subplots in the first row of Figure \ref{fig:sequential_twostage} are almost identical to each other. 
Recall that Procedure S terminates if either of the following two conditions is met: (i) there is only one alternative left, and (ii) all the surviving alternatives differ from each other in terms of their worst-case mean performance by no more than $\delta$. The low sensitivity of $N^{\mathrm{S}}$ with respect to $\delta$ for small $\delta$ suggests that if $\delta\leq\mu_{21}-\mu_{11}$, the procedure terminates primarily because the first stopping condition is met, and thus the procedure mostly selects the unique best alternative rather than a good alternative between several that it cannot differentiate. 
This feature makes Procedure S particularly attractive in practice. Knowing little of the differences between competing alternatives, a decision-maker tends to choose a small $\delta$ to make sure that the unique best alternative is identified. Hence, when using Procedure S, the decision-maker can specify an arbitrarily small $\delta$ without any increase in computational cost. 
By contrast, $N^T$ is roughly proportional to $\delta^{-2}$.

Third, unlike Procedure T, the efficiency of Procedure S does depend on the configuration of the means; see, e.g., the first two subplots in the first column of Figure \ref{fig:sequential_twostage}. With everything else the same, SC requires substantially more samples than MDM does. In SC all the systems of each alternative have the same mean performance, and all alternatives but the best one have the same worst-case mean performance. By contrast, in MDM all the systems of each alternative differ from one another by a clear margin and so do the alternatives. Hence, both inner-layer and outer-layer eliminations in Procedure S with SC occur significantly less than with MDM.

Last, it turns out that as the problem scale (i.e., $km$) increases or as $\delta$ decreases, $N^{\mathrm{T}}/N^{\mathrm{S}}$ in general increases. Hence, the advantage of Procedure S in terms of efficiency is more significant for problems with a larger scale or for a smaller IZ parameter; see the numerical results in Section \ref{EC_sec:TS}. Moreover, such an increase in the ratio is more substantial for MDM than for SC. This is a consequence of of the last finding -- it is more difficult for Procedure S to conduct eliminations with SC than with MDM.

\subsection{Comparison Between Procedure S and Plain-Vanilla Sequential RSB Procedure}\label{sec:vanilla}
We have argued at the beginning of Section \ref{sec:robust_sequential} that a major advantage of Procedure S relative to a plain-vanilla sequential RSB procedure is that it does not need to identify the worst system of each alternative, and can perform simultaneous elimination of all the surviving systems of an alternative if the alternative is unlikely to be the best, which can save a substantial number of samples. 

Notice that virtually any sequential SB procedure can be used to form a plain-vanilla sequential RSB procedure. To ensure a fair comparison against Procedure S whose inner-layer selection is performed by the FHN procedure, we use Procedure 3 in \cite{izfree2016}, which is a truncated version of the FHN procedure that has an additional stopping criterion to guarantee termination of the procedure in finite time in the case of two or more alternatives having the same mean performance. 
We shall call this plain-vanilla sequential RSB procedure Procedure V; see the detailed specification in Section \ref{EC_sec:SV} of Appendix.

Both Procedure S and Procedure V are implemented under different combinations of problem scale, IZ parameter, and configuration of the means/variances. Notice that the difference between the worst-case mean performance of the best and the second-best alternatives is 0.5 in both configurations of the means \eqref{eqn:sc} and \eqref{eqn:mdm}. We choose $\delta$ from 0.5, 0.25, and 0.1. The numerical experiment shows that Procedure S requires significantly fewer samples than Procedure V in general. One exception, however, is that if the configuration of the means is SC and $\delta=\mu_{21}-\mu_{11}=0.5$, then the computational costs of Procedure S and Procedure V are almost identical. In this case, the alternatives are hard to differentiate in early iterations of Procedure S, so simultaneous elimination of the surviving systems of an alternative that is unlikely to be the best rarely happens. The numerical results and more discussion are presented in Section \ref{EC_sec:SV} of Appendix.

\section{A Multiserver Queue with Abandonment}\label{sec:queueing}

The theory in this paper is developed under the premise that the ambiguity set is known and fixed. In practice, however, an ambiguity set is typically constructed based on available data, and thus it may vary significantly. Therefore, it is important to study the potential impact of data variation on usefulness of the RSB approach for simulation-based decision-making. Nevertheless, to characterize theoretically such impact is beyond the scope of this paper. Instead, we present an extensive numerical investigation in the context of queueing simulation.

Consider a $G/G/s+G$ model, i.e., a queueing system that has $s$ identical servers and allows a customer to abandon the system before receiving any service if her waiting time in the queue is deemed to exceed her patience. The interarrival times, service times, and patience times are independent and generally distributed. Suppose that both the interarrival time and the patience time have a known distribution, but the distribution of the service time $P_0$ is unknown. Instead, a finite sample from $P_0$ is available for constructing an ambiguity set $\mathcal{P}$.

For customer $i$, $i=1,\ldots,n$, let $I_i$ and $W_i$ denote her indicator for abandonment and the waiting time, respectively. Let $\xi$ denote the service time. We measure the quality of service by both the probability of abandonment and the mean waiting time of those customers who do not abandon the system. In particular, consider the following cost function 
\begin{equation}\label{eq:queue_cost}
f(s, \xi) = c_A U\left(n^{-1} N_A \right) + c_W (n-N_A)^{-1}\sum_{\{i: I_i =0 \}} W_i + c_Ss,
\end{equation}
where $N_A=\sum_{i=1}^n I_i$ is the (random) number of customers who abandon the system, $U(\cdot)$ is a utility function, and $c_A$, $c_W$, and $c_S$ are all positive constants. Let $s$ be the decision variable and assume that it takes values from  $\{1,2,\ldots,k\}$. To minimize the worst-case mean cost over the ambiguity set, we solve 
$\min\limits_{1\leq s\leq k} \max\limits_{P\in\mathcal{P}} \mathbb{E}[f(s, \xi)]$,
which becomes a RSB problem of the form  \eqref{eq:minimax}.

Recent empirical studies on service times in various service industries, including telephone call centers \citep{BGMSSZZ05} and health care \citep{StrumMayVargas00}, show that the lognormal distribution often fits historical data well. Hence, we assume that $P_0$ is the lognormal distribution with mean 1; equivalently, $\log(\xi)$ is normally distributed with mean $-\sigma^2/2$ and variance $\sigma^2$. The ambiguity set $\mathcal{P}$ is constructed as follows. Upon observing a sample from $P_0$, we use the maximum likelihood estimation (MLE) to fit three distribution families (lognormal, gamma, and Weibull) to the sample. Then, we conduct the Kolmogorov-Smirnov (K-S) test to each fitted distribution, and include in $\mathcal{P}$ those that are not rejected by the test at significance level 0.05.

Other related parameters are specified as follows. We assume that the interarrival time and the patience time are both exponentially distributed with mean 0.1 and 5, respectively. We set the largest number of servers $k=10$, the length of each sample path $n=10,000$, the utility function $U(p) = \log(1/(1-p))$, the constants $c_A=4$, $c_W=2$, $c_S=1$, the first-stage sample size $n_0=10$, and the target PCS $1-\alpha = 0.95$. We vary $\sigma\in\{1,2,3\}$,  corresponding to different extents of variability of the service time. (The coefficient of variation of $\xi$ is 1.31, 7.32, 90.01, respectively, in the three cases.) We set the sample size $\ell$  to be either 50 or 500 to imitate the scenarios of having a small and large amount of data to construct the ambiguity set, respectively.

We now assess usefulness of the RSB approach for a decision maker who does not know the input distribution but has input data to work with. The benchmark approach is a typical one in practice: fit a group of distribution families to the data and use the ``best fitted'' distribution as if it were the true distribution, discarding the others some of which may be plausible as well. We call this approach the best-fitting (BF) approach. We simply define the best fitted distribution as the one having the smallest K-S statistic among all the fitted distributions. Notice that the RSB approach is reduced to the BF approach if the ambiguity set contains only the best fitted distribution. 

We first generate 10,000 samples of the cost function $f(s, \xi)$ for each $s=1,\ldots,k$, with $\xi$ following the true distribution $P_0$. For each $s$, we estimate based on the samples the mean and the $p$-quantile of $f(s, \xi)$ under $P_0$, denoted by $M$ and $Q^p$, respectively. They are used as performance measures for evaluating $s_{\mathrm{RSB}}$ and $s_{\mathrm{BF}}$, the decisions obtained by  the RSB and the BF approach, respectively. We also compare the RSB approach with clairvoyance, i.e., the decision maker knows the true distribution and makes the decision $s_{\mathrm{Tr}}$  that minimizes $\mathbb{E}[f(s,\xi)]$ under $P_0$. Let $M_{\mathrm{Tr}}$ and $Q^p_{\mathrm{Tr}}$ denote the corresponding performance measures. Albeit impractical, clairvoyance provides a lower-bound on the mean cost that other approaches could possibly achieve. 

Then, we conduct 1,000 macro-replications of the following experiment. 
\begin{enumerate}[label=(\roman*)]
\item 
Generate a sample of service times from $P_0$.
\item 
Construct an ambiguity set $\mathcal{P}$ based on the sample. 
\item 
Run a RSB procedure on $\mathcal P$ and $\{\hat P_0\}$ to obtain solutions $s_{\mathrm{RSB}}$ and $s_{\mathrm{BF}}$, respectively.  
\item 
Retrieve the performance measures of $s_{\mathrm{RSB}}$ (resp., $s_{\mathrm{BF}}$)  computed earlier under $P_0$.  
\item 
Compute the relative difference between $s_{\mathrm{Tr}}$ and $s_{\mathrm{RSB}}$ and that between   $s_{\mathrm{BF}}$ and $s_{\mathrm{RSB}}$ in terms of the performance measures. 
\end{enumerate}
Hence, there are 1,000 realizations of each relative difference above. Their average values over the macro-replications are reported in Table \ref{tab:rel_diff_tr} and Table \ref{tab:rel_diff}. We also estimate the probability that the MLE-fitted lognormal distribution is rejected by the K-S test. It turns out to be 0 based on the 1,000 macro-replications for each case ($\sigma=1,2,3$, $\ell =50, 500$). Hence, the ambiguity sets constructed in our experiments always include the true distribution family; see also Table \ref{tab:rel_diff_mis} for their average size.

\begin{table}[ht]
\begin{center}
\caption{Relative Differences (in \%) Between Clairvoyance and RSB}\label{tab:rel_diff_tr}
\begin{tabular}{cccccccccccc}
\toprule
\multirow{2}{*}{$\sigma$}              && \multirow{2}{*}{Sample Size}  && \multicolumn{7}{c}{Relative Difference}\\
\cmidrule{5-11}
             &&       && $\frac{M_{\mathrm{Tr}}}{M_{\mathrm{RSB}}}-1$     && $\frac{Q^{0.7}_{\mathrm{Tr}}}{Q^{0.7}_{\mathrm{RSB}}}-1$     && $\frac{Q^{0.8}_{\mathrm{Tr}}}{Q^{0.8}_{\mathrm{RSB}}}-1$    && $\frac{Q^{0.9}_{\mathrm{Tr}}}{Q^{0.9}_{\mathrm{RSB}}}-1$      \\
\midrule
\multirow{2}{*}{$1$} && $50$  && $-1.72_{\pm 0.13}$ && $-1.71_{\pm 0.13}$ && $-1.70_{\pm 0.14}$ && $-1.69_{\pm 0.14}$ \\
                   && $500$ && $-0.40_{\pm 0.05}$ && $-0.39_{\pm 0.05}$ && $-0.38_{\pm 0.05}$ && $-0.38_{\pm 0.05}$ \\
\midrule
\multirow{2}{*}{$2$} && $50$  && $-3.55_{\pm 0.32}$ && $-3.77_{\pm 0.34}$ && $-3.88_{\pm 0.35}$ && $-4.07_{\pm 0.38}$ \\
                   && $500$ && $-0.71_{\pm 0.07}$ && $-0.88_{\pm 0.05}$ && $-1.00_{\pm 0.08}$ && $-1.17_{\pm 0.10}$ \\
\midrule                   
\multirow{2}{*}{$3$} && $50$  && $-7.35_{\pm 0.48}$ && $-7.03_{\pm 0.51}$ && $-6.75_{\pm 0.54}$ && $-6.30_{\pm 0.59}$ \\
                   && $500$ && $-1.36_{\pm 0.11}$ && $-1.21_{\pm 0.12}$ && $-1.09_{\pm 0.14}$ && $-0.93_{\pm 0.17}$ \\
\bottomrule
\end{tabular}
\end{center}
\small{\textit{Note.} ``$\pm$'' indicates 95\% confidence interval.}
\end{table}

\begin{table}[ht]
\begin{center}
\caption{Relative Differences (in \%) Between BF and RSB}\label{tab:rel_diff}
\begin{tabular}{cccccccccccc}
\toprule
\multirow{2}{*}{$\sigma$}              && \multirow{2}{*}{Sample Size}  && \multicolumn{7}{c}{Relative Difference}\\
\cmidrule{5-11}
             &&       && $\frac{M_{\mathrm{BF}}}{M_{\mathrm{RSB}}}-1$     && $\frac{Q^{0.7}_{\mathrm{BF}}}{Q^{0.7}_{\mathrm{RSB}}}-1$     && $\frac{Q^{0.8}_{\mathrm{BF}}}{Q^{0.8}_{\mathrm{RSB}}}-1$    && $\frac{Q^{0.9}_{\mathrm{BF}}}{Q^{0.9}_{\mathrm{RSB}}}-1$      \\
\midrule
\multirow{2}{*}{$1$} && $50$  && $0.12_{\pm 0.07}$ && $0.14_{\pm 0.08}$ && $0.15_{\pm 0.08}$ && $0.17_{\pm 0.08}$  \\
                   && $500$ && $0.00_{\pm 0.04}$ && $0.00_{\pm 0.04}$ && $0.00_{\pm 0.04}$ && $0.00_{\pm 0.04}$ \\
\midrule
\multirow{2}{*}{$2$} && $50$  && $1.64_{\pm 0.32}$ && $1.76_{\pm 0.33}$ && $1.84_{\pm 0.34}$ && $1.96_{\pm 0.35}$  \\
                   && $500$ && $0.02_{\pm 0.04}$ && $0.01_{\pm 0.04}$ && $0.01_{\pm 0.04}$ && $0.01_{\pm 0.05}$  \\
\midrule                   
\multirow{2}{*}{$3$} && $50$  && $5.32_{\pm 0.92}$ && $5.66_{\pm 0.93}$ && $5.90_{\pm 0.95}$ && $6.23_{\pm 0.98}$  \\
                   && $500$ && $0.07_{\pm 0.12}$ && $0.08_{\pm 0.12}$ && $0.09_{\pm 0.12}$ && $0.11_{\pm 0.12}$ \\
\bottomrule
\end{tabular}
\end{center}
\end{table}

We compare the RSB approach with the clairvoyance based on Table \ref{tab:rel_diff_tr}. First, as expected, the RSB approach yields a higher mean cost than clairvoyance. The right tail quantiles are also higher for the RSB approach. The gap between the performance measures of the two approaches reflects the decision maker's lack of information about the true distribution. Second, with everything else the same, the gap increases as $\sigma$ increases. This is because a larger $\sigma$ means a larger stochastic variability in the service times, in which case the decision maker is more uncertain about the true distribution, thereby paying a larger penalty for the deeper uncertainty. Third, in the same vein, the gap decreases as the sample size $\ell$ increases, since the input uncertainty can be greatly reduced by a large amount of data. Indeed, the relative differences in absolute value are about 1\% or even lower when $\ell =500$.

Assuming clairvoyance is obviously not practical. It is more interesting for practitioners to compare the RSB approach with the BF approach, which is almost common practice for simulation-based decision-making. From Table \ref{tab:rel_diff}, we first find that the RSB solution outperforms, or performs at least as good as, the BF solution for all the performance measures. Relying on worst-case analysis, the RSB approach is conservative by design and should protect the decision-maker against extreme cases, producing reliable performance even if the true distribution is not in her favor. Therefore, the RSB solution performing better for the right tail quantiles is expected. It is somewhat surprising, however, that the RSB solution performs better for the mean cost as well. This suggests that in the presence of input uncertainty, the potential risk that the BF approach ends up with a misspecified input distribution can be so significant that it overwhelms the price that the decision maker needs to pay for being conservative.  Second, the advantage of the RSB approach over the BF approach is larger when the uncertainty about the true distribution is deeper, which means either the true distribution has a larger stochastic variability (larger $\sigma$), or the sample size $\ell$ is smaller. In particular, with $\sigma=3$ and  $\ell = 50$, the RSB approach outperforms the BF approach by a significant margin ($5\% \sim 6\%$). Third, notice that with $\ell = 500$, the two approaches  deliver nearly identical performance. This is because with a large $\ell$, the input uncertainty is marginal and the ambiguity set mostly consists of only the best fitted distribution, since the possibility that the fitted gamma or Weibull distribution is not rejected by the K-S test is nearly zero in this situation.

\begin{table}[ht]
\begin{center}
\caption{Relative Differences Between BF and RSB Conditional on Model Misspecification} \label{tab:rel_diff_mis}
\begin{tabular}{ccccccccc}
\toprule
\multirow{2}{*}{$\sigma$} & \multirow{2}{*}{$\ell$} && \multirow{2}{*}{$\mathbb{E}(|\mathcal P|)$}   
&   \multirow{2}{*}{$\mathbb{P}(\mathrm{misspec.})$ (in \%) } &   \multicolumn{4}{c}{Rel. Diff. (in \%) } \\
\cmidrule{6-9}
& && & & $\frac{M_{\mathrm{BF}}}{M_{\mathrm{RSB}}}-1$     & $\frac{Q^{0.7}_{\mathrm{BF}}}{Q^{0.7}_{\mathrm{RSB}}}-1$     & $\frac{Q^{0.8}_{\mathrm{BF}}}{Q^{0.8}_{\mathrm{RSB}}}-1$    & $\frac{Q^{0.9}_{\mathrm{BF}}}{Q^{0.9}_{\mathrm{RSB}}}-1$    \\
\midrule
$1$     & $50$  &&  $2.93_{\pm 0.02}$ & $17.90_{\pm 2.38}$ &  $\phantom{1}0.39_{\pm 0.26}$  & $\phantom{1}0.45_{\pm 0.26}$  & $\phantom{1}0.48_{\pm 0.26}$  & $\phantom{1}0.53_{\pm 0.26}$  \\
$2$     & $50$  && $2.56_{\pm 0.03}$ & $17.40_{\pm 2.35}$ &  $\phantom{1}9.05_{\pm 1.38}$  & $\phantom{1}9.70_{\pm 1.35}$  & $10.14_{\pm 1.34}$ & $10.84_{\pm 1.33}$ \\
$3$     & $50$  && $2.27_{\pm 0.03}$ & $17.30_{\pm 2.34}$ & $31.28_{\pm 3.09}$ & $33.22_{\pm 2.88}$ & $34.56_{\pm 2.78}$ & $36.41_{\pm 2.64}$ \\
\bottomrule
\end{tabular}
\end{center}
\small{\textit{$\mathbb{E}(|\mathcal P|)$: the mean size of the ambiguity set.}}
\end{table}

We now further discuss the cause of the difference in performance between the RSB approach and the BF approach. When $\ell$ is not large, the best fitted distribution is likely to be gamma or Weibull instead of lognormal, which we refer to as \textit{model misspecification}. We estimate the probability of model misspecification and compute the relative differences between the two approaches conditional on model misspecification. The results are shown in Table \ref{tab:rel_diff_mis}. (The estimated probability of model misspecification is 0  for $\sigma=1,2,3$ if $\ell= 500$, so we do not include them in the table. )

Table \ref{tab:rel_diff_mis} shows that first, the probability of model misspecification is fairly large ($17\%\sim18\%$ on average), if $\ell$ is not large.  Second, if the best fitted distribution is misspecified, the consequence for the BF approach can be severe, resulting in a cost that can be over 30\% higher than the cost of the RSB solution. Notice that the ambiguity set contains $2\sim 3$ plausible probability distributions on average. This suggests that in the case of model misspecification, both the true and the incorrectly chosen distributions inform the decision produced by the RSB approach. Third, in the case of model misspecification, the relative difference in performance between the two approaches grows dramatically (from less than 1\% to over 30\%) as $\sigma$ increases. Since the estimated probability of model misspecification is roughly the same for different values of $\sigma$, the characteristics of the queueing system under the wrongly chosen input model is clearly the cause of the significant degradation in performance of the BF approach as $\sigma$ grows. Therefore, the RSB approach has a substantial advantage over the BF approach for protecting the decision-maker against model misspecification, especially when the input uncertainty is deep.

\section{An Appointment-Scheduling Problem}\label{sec:scheduling}

Appointment-scheduling problems are ubiquitous in the healthcare industry; see \cite{GuptaDenton08} for a comprehensive survey. One challenge in addressing these problems in practice is that the appointment duration is often random and its distribution is hard to estimate due to lack of the data. In the light of the distributional uncertainty, robust optimization has recently emerged as a popular framework for solving this class of problems \citep{KongLeeTeoZheng13,Mak2015appointment,Qi17}. In this section, we apply our RSB approach to study an appointment-scheduling problem with real data and compare it with existing approaches.

We consider the appointment-scheduling problem in \cite{Mak2015appointment}. There are $n$ operations to be performed by $n$ different surgeons in an operating  room during a time interval $[0, T]$. By operation $i$, we mean the operation performed by surgeon $i$, $i=1,\ldots,n$. Operation $i$ requires a random duration $d_i$ to complete, which follows distribution $P_i$. Let $\mathcal P_i$ denote the ambiguity set for $P_i$. We assume that $P_i$'s are mutually independent, so the ambiguity set for the joint distribution $\mathbf{P}=(P_1,\ldots,P_n)$ can be expressed as the Cartesian product of $(\mathcal P_1,\ldots,\mathcal P_n)$, i.e., $\mathcal{P} = \mathcal P_1\times \cdots\times\mathcal P_n$.  

Let $\psi$ be a permutation of $\{1,2,\ldots,n\}$ that indicates the sequence of the operations performed in the operating room. Let $t_i$ denote the time allowance of operation $i$. The planner makes two decisions: the sequence of the operations $\psi$ and the time allowance of each operation $\mathbf{t}=(t_1,\ldots,t_n)$. 

Suppose that all the appointments are \textit{scheduled} to be completed by time $T$, so the feasible region of  $\mathbf{t}$ is $\mathcal{T}=\{\mathbf {t}\in \mathbb{R}_+^n:\sum_{i=1}^nt_i\leq T\}$. If an operation does not start as planned because of delay of completion of the previous operation, a waiting cost is incurred at the rate of $c_W$; if the last operation is completed after time $T$, an overtime cost is incurred at the rate of $c_O$. Let $W_i$ denote the waiting time of operation $\psi_i$, $i=1,\ldots,n$, and $W_{n+1}$ denote the overtime.
Then, $W_1=0$ and 
$W_i = \max\{0, W_{i-1}+d_{\psi_{i-1}}-t_{\psi_{i-1}}\}$, $i=2,\ldots,n+1$.
Hence, letting  $\mathbf{d}=(d_1,\ldots,d_n)$, the total waiting and overtime cost as a function of  $(\psi,  \mathbf{d}, \mathbf{t})$ is 
$f(\psi,  \mathbf{d}, \mathbf{t}) = c_W\sum_{i=1}^nW_i +c_OW_{n+1}$.
To minimize the worst-case mean cost over the ambiguity set, the planner solves 
\begin{equation}\label{eq:scheduling}
 \min_\psi \min_{\mathbf{t}\in\mathcal{T}} \max_{\mathbf{P}\in\mathcal{P}} \mathbb{E} [f(\psi,  \mathbf{d}, \mathbf{t})].
\end{equation}

In \cite{Mak2015appointment} each ambiguity set $\mathcal{P}_i$ is the set of distributions having the same first two moments as those of $P_i$, which are assumed to be known. Then, problem \eqref{eq:scheduling} for a given sequence $\psi$ can be reformulated as a second-order conic program that has an analytical solution. Specifically, the optimal time allowance is of the form $\tilde t_{\psi_i}^*=\mu_{\psi_i}+\eta_{\psi_i} \sigma_{\psi_i}$, where $\mu_i$ and $\sigma_i$ are the mean and standard deviation of $P_i$, and $\eta_i$ is a constant that depends on $\{i, \psi, (\mu_1,\sigma_1),\ldots, (\mu_n,\sigma_n)\}$ and can be computed analytically. Theorem 3 in their paper further shows that the optimal sequence $\tilde \psi^*$ follows the increasing order of variances (OV), i.e., $\sigma_{\tilde\psi^*_1}\leq \cdots \leq \sigma_{\tilde \psi^*_n}$.

However, their approach does not apply in our setting where we formulate $\mathcal P_i$ as a finite set of plausible probability distributions. In this study, we use the same rule of determining the time allowances as \cite{Mak2015appointment}  for a given sequence, and focus on the optimal sequencing problem. The scheduling problem \eqref{eq:scheduling} is then reduced to 
\begin{equation}\label{eq:surgery_RSB}
\min_\psi \max_{\mathbf{P}\in\mathcal{P}} \mathbb{E} [f(\psi,  \mathbf{d}, \mathbf{\tilde t}^*_{\psi})],
\end{equation}
which becomes a RSB problem of the form \eqref{eq:minimax}, where $\mathbf{\tilde t}^*_{\psi} = (\tilde t^*_{\psi_1},\ldots,\tilde t^*_{\psi_n})$. 

We now solve \eqref{eq:surgery_RSB} with the ambiguity set $\mathcal P$ constructed based on real data from a hospital in Anhui Province, China. This hospital is a major healthcare facility in the province. \footnote{This hospital has around 2,800 beds. According to Becker's Hospital Review, the largest hospital in the United States has around 2,400 beds as of 2015.} We use the data on durations of cesarean sections in the hospital that were performed in 2014. We fix $n=4$ as a representative scenario in the hospital, so the number of competing alternatives  is $4!=24$. 

We consider two datasets: one is relatively large, denoted by $\mathcal D^L$, and consists of 4 obstetricians (OBs) having 138, 97, 84, and 68 cases in the record, respectively; the other is relatively small, denoted by $\mathcal D^S$, and consists of another 4 OBs having 66, 60, 55, and 54 cases, respectively. 

Given a dataset $\mathcal D$ (either $\mathcal D^L$ or $\mathcal D^S$),  let $\mathcal D_i$ denote the observations of $d_i$ for OB $i$, $i=1,\ldots,4$. Since the distribution that generates the real data is unknown, we assume that the ``true'' distribution of $d_i$ is the empirical distribution based on $\mathcal D_i$. Let $\mu_i$ and $\sigma_i^2$ denote the mean and variance of this distribution. We stress here that the RSB approach as well as other approaches introduced later do not have access to $(\mu_i,\sigma_i^2)$. Instead, they only have access to a random sample $\mathcal F_i$  from $\mathcal D_i$. 

For comparison, we also apply the following approaches to find a proper sequence of OBs. 
\begin{itemize}
\item
\textit{Best-Fitting} (BF):  For $i=1,\ldots,n$, let $\hat P_i$ be the best fitted distribution for $\mathcal F_i$. Solve  \eqref{eq:surgery_RSB} with $\mathcal P_i=\{\hat P_i\}$ to obtain a sequence $\psi_{\mathrm{BF}}$. 
\item
\textit{Empirical} (Em): For each $i=1,\ldots,n$, let  $\hat P_i$ be the empirical distribution based on $\mathcal F_i$. Solve  \eqref{eq:surgery_RSB} with $\mathcal P_i=\{\hat P_i\}$ to obtain a sequence $\psi_{\mathrm{Em}}$.
\item
\textit{OV}: Sort the OBs in the increasing order of $\hat\sigma_i^2$ to obtain a sequence $\psi_{\mathrm{OV}}$.
\end{itemize}

We conduct 1,000 macro-replications of the following experiment. 
\begin{enumerate}[label=(\roman*)]
\item 
Randomly select a fraction $\gamma\in(0,1)$ of $\mathcal D_i$, denoted by ${\mathcal F}_i$.
\item 
Construct an ambiguity set $\mathcal P_i$ by using MLE to fit six widely used distributions (exponential, gamma, Weibull, lognormal, Pareto, and triangular) to ${\mathcal F}_i$, and retaining the fitted distributions that are not rejected by the K-S test at significance level 0.05. 
\item 
Compute the mean and variance of $\mathcal F_i$, denoted by $(\hat\mu_i,\hat\sigma_i^2)$. Apply Theorem 2 of \cite{Mak2015appointment} to compute the time allowances $\tilde {\mathbf{t}}^*_\psi$ with $(\hat\mu_i,\hat\sigma_i^2)$, $i=1,\ldots,4$ for each sequence $\psi$. 
\item 
Run the four competing approaches to solve \eqref{eq:surgery_RSB}. 
\item 
For each $\psi =\psi_{\mathrm{RSB}}, \psi_{\mathrm{BF}}, \psi_{\mathrm{Em}}, \psi_{\mathrm{OV}}$, generate $10^7$ samples of the cost function $f(\psi,  \mathbf{d}, \mathbf{\tilde t}^*_{\psi})$ under the ``true'' distribution of $d_i$. Compute $M$ and $Q^p$, $p=0.7,0.8,0.9$, based on the samples. 
\end{enumerate}

The other  parameters involved in our experiment are specified as follows: $c_W=1.0$, $c_O = 0.5$, $T=\sum_{i=1}^4\mu_i$, the first-stage sample size $n_0=10$, the IZ parameter $\delta = 1.0$, and the target PCS $1-\alpha = 0.95$. Moreover, we vary the fraction $\gamma\in\{0.2, 0.5, 0.8\}$ to imitate the scenarios of having a small, medium, and large amount of available data to construct the ambiguity set, respectively.

\begin{remark}
We find that $|\mathcal P_i|$ is typically 3 or 4, so the size of the ambiguity set $m=|\mathcal P|=\prod_{i=1}^4|\mathcal P_i|$  is fairly large, which typically ranges from 100 to 200 depending on $\mathcal D$ and $\gamma$. 
\end{remark}

We obtain 1,000 realizations of $M$ and $Q^{p}$, one from each macro-replication. Figure \ref{fig:hospital_mean} illustrates the distribution of $M$ for different values of $(\psi, \gamma, \mathcal D)$.
We omit the figures for $Q^{0.7}$, $Q^{0.8}$, and $Q^{0.9}$ since they are qualitatively similar to Figure \ref{fig:hospital_mean}. In addition, we compute the relative difference in $M$ or  $Q^p$ between a competing approach and the RSB approach: 
$\frac{ M_A }{M_{\mathrm{RSB}}} - 1$ and $\frac{Q^p_A }{Q^p_{\mathrm{RSB}} } - 1$,
where the subscript $A$ denotes the competing approach, i.e., BF, Em, or OV, for $p=0.7,0.8,0.9$. Their average over the macro-replications is shown in Figure \ref{fig:rel_diff}.

\begin{figure}[t]
\begin{center}
\caption{Distribution of Realizations of $M$}  \label{fig:hospital_mean}
\includegraphics[width=0.8\textwidth]{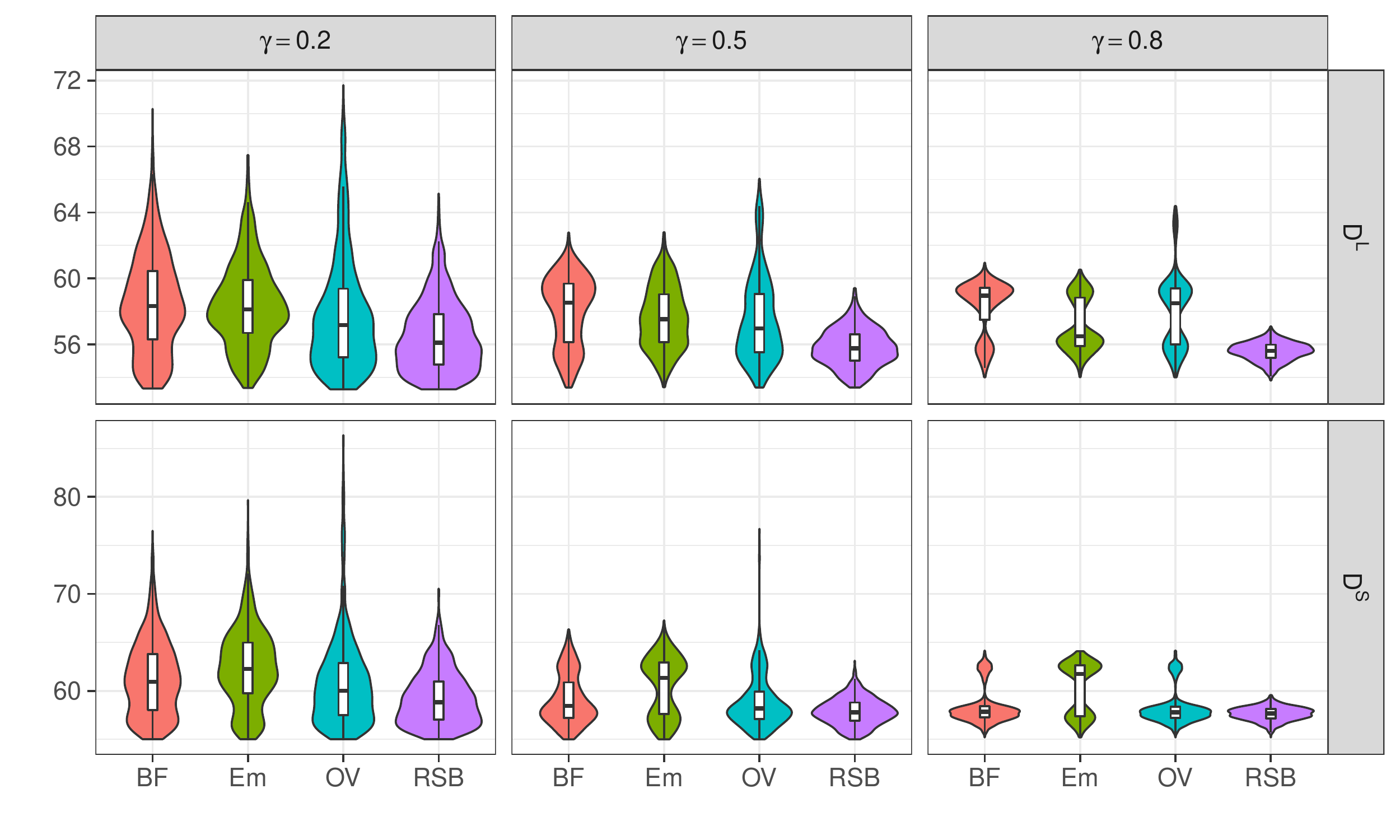}
\end{center}
\small{\textit{Note.} The violin plots depict the distributions of the presented data constructed via kernel density estimation; the box plots indicate the median, 25\% quantile, and 75\% quantile. }
\end{figure}

We have several findings. 
First,  the RSB approach consistently outperforms the other three approaches by a clear margin, producing the lowest mean cost and the lowest right tail quantiles.

Second, Figure \ref{fig:hospital_mean} also shows as $\gamma$ or equivalently the sample size increases, the range of the realizations of the mean cost for each approach obviously decreases. This is because the larger sample size is, the more information it provides about the ``true'' distribution. In addition, as input data varies, the mean cost of the RSB solution is the most stable, indicated by the range of its realizations being considerably narrower than the others. Notice that the sample size for constructing an ambiguity set in this example is not large, ranging from 11 to 110 depending on $\mathcal D$ and $\gamma$, and therefore the input uncertainty is substantial. In particular, the sample $\mathcal F_i$ may be far from being representative of the ``true'' distribution, i.e., the whole dataset $\mathcal D_i$. In this case, both the BF approach and the Empirical approach suffer from using a misspecified distribution for decision-making, while  the RSB approach can effectively protect the planner against such a risk. 

Third, the OV approach is well known to be a good heuristic rule in appointment-scheduling literature. It is essentially built upon worst-case analysis as well. Indeed, \cite{Mak2015appointment} proves that the OV sequence is optimal for the robust optimization problem \eqref{eq:scheduling}, provided that the ambiguity set $\mathcal P=\{(P_1,\ldots,P_n): \mathbb{E}(P_i) = \hat \mu_i, \mathrm{Var}(P_i) = \hat\sigma_i^2, i=1,\ldots,n\}$. Therefore, the OV approach also provides protection against input uncertainty, outperforming both the BF approach and the Empirical approach in all cases expect $(\mathcal D,\gamma)=(\mathcal D^L, 0.8)$; see Figure \ref{fig:rel_diff}. Nevertheless, its performance is significantly worse than the RSB approach. (The relative difference in $Q^{0.7}$ and $Q^{0.8}$ can be over 10\%.) This is because (i) the OV approach uses only the first two moments of the data, whereas the RSB approach uses more information by fitting various distribution families to the data; and (ii) the moments are estimated from data and the estimation error can be substantial if the sample size is not large, whereas fitting distributions to data appears to be more resilient to such error.  

In \cite{KongLeeTeoZheng16} several artificial examples of the scheduling problem are constructed for which the OV sequence is not optimal. They assume that the input distribution is known and find that the performance gap between the OV sequence and the optimal sequence is marginal. Our experiments, however, demonstrate that such performance gap may be sizable due to input uncertainty. In the light of the fact that lack of data is a common challenge for appointment-scheduling in health care \citep{macario2010}, our findings are practically meaningful, as they may encourage hospital administrators to adopt conservative formulations such as the RSB approach that do not assume mean and variance are correctly estimated from tens of data points, and to follow more established statistical procedures to accept or reject representative distributions.

\begin{figure}[t]
\begin{center}
\caption{Relative Differences Between Competing Methods and RSB} \label{fig:rel_diff}
\includegraphics[width=0.8\textwidth]{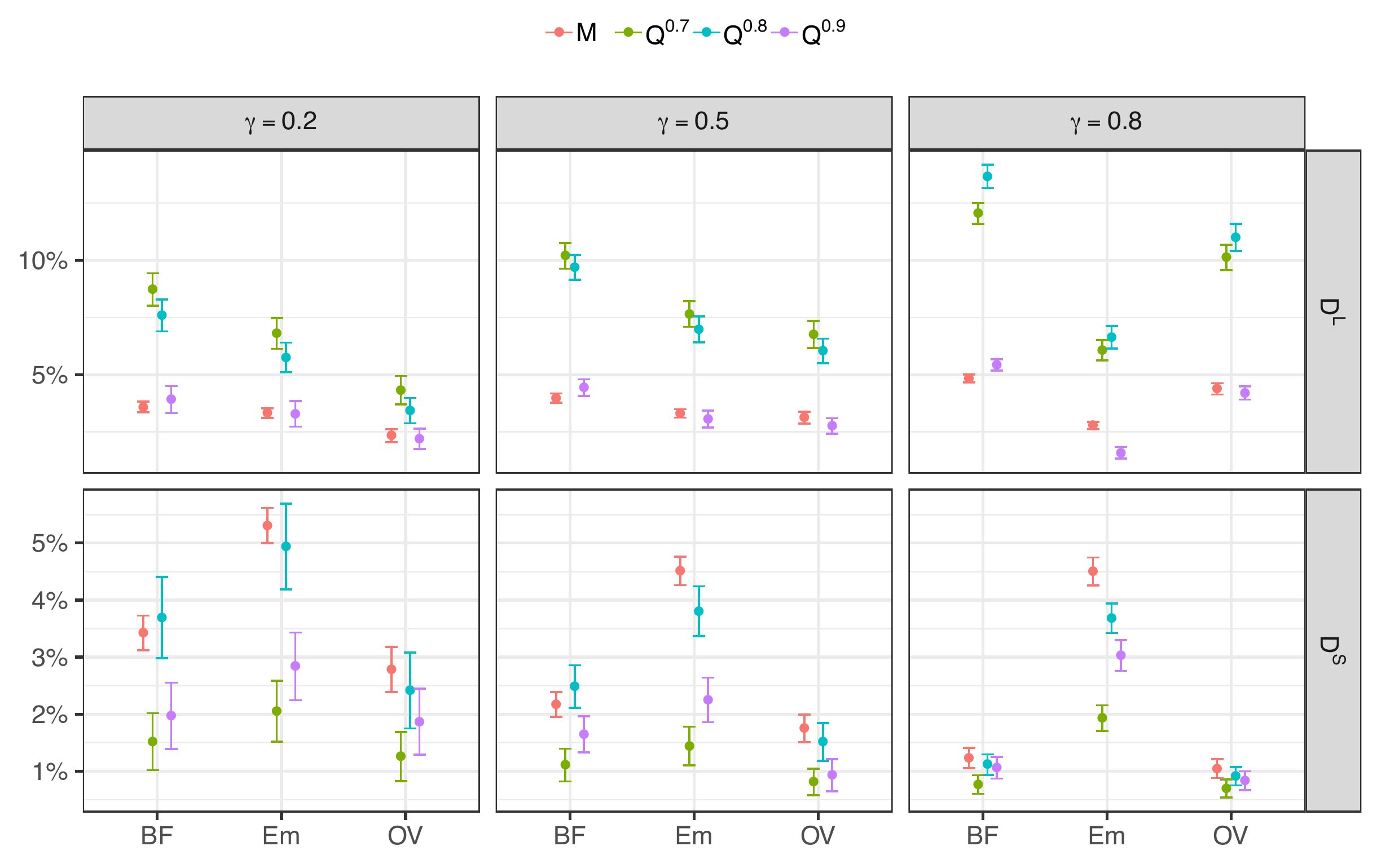}
\end{center}
\small{\textit{Note.} $M$ means $\frac{M_A}{M_{\mathrm{RSB}}}-1$ and $Q^p$ means  $\frac{Q^p_A}{Q^p_{\mathrm{RSB}}}-1$, for $A=\mathrm{BF}, \mathrm{Em}, \mathrm{OV}$, and $p=0.7,0.8,0.9$. The dots denote the average over the 1,000 macro-replications, whereas the error bars denote the 95\% confidence intervals.}
\end{figure}

\section{Concluding Remarks}\label{sec:conclusion}

We propose to use worst-case analysis to address the SB problem with input uncertainty. Two selection procedures are developed to solve the resulting RSB problem and are shown to be statistically valid in the finite-sample and asymptotic regimes, respectively. The two-stage RSB procedure is simple but rather conservative, requiring an excessive number of samples. The sequential RSB procedure, on the other hand, is not as easy to implement, but requires a dramatically smaller sample size even than the plain-vanilla sequential RSB procedure. 

As shown by both the queueing example in Section \ref{sec:queueing} and the scheduling example in Section \ref{sec:scheduling}, not only can the RSB  approach generate decisions that are reliable in extreme cases, but also perform better than the best-fitting approach on average. This makes the RSB approach a useful tool when the simulation model suffers from deep input uncertainty. In practice, given limited input data one may consider to apply both the RSB approach and the best-fitting (BF) approach to solve the SB problem in the presence deep input uncertainty, and use the latter as a numerical check to ensure that the former works as intended. Another possible way of checking the robustness of the RSB approach in practice is to numerically evaluate and compare the mean performance of the RSB decision under several plausible input distributions included in the ambiguity set.

This paper focuses on the uncertainty in specifying the parametric family of the input distribution, which motivates the critical assumption that the ambiguity set consists of finitely many distributions. This form of ambiguity set entails the double-layer structure of the two selection procedures. However, there are a variety of other ways to construct an ambiguity set, e.g., via moment constraints \citep{delage2010distributionally} or via statistical divergence \citep{ben2013robust}. Conceivably, changing the form of the ambiguity set would dramatically alter the structure of the RSB problem, and new selection procedures would need to be developed.

Parameter uncertainty is an equally important issue. Ideally, it should be addressed in conjunction with the uncertainty about the parametric family. One possible approach is to use likelihood ratio to characterize parameter uncertainty. Then, the worst-case mean performance over the uncertain parameters within the same family can be estimated by simulation under a ``nominal'' distribution only; see \cite{HuHong15} for details. Then, the ambiguity set that characterizes the uncertainty about both the parametric family and its parameters can be reduced to a finite set. We leave the exploration of these questions to future research.

\appendix
\section{Proof for the Two-Stage RSB Procedure}
\begin{proof}[Proof of Theorem \ref{theo:twostage}.]
We first notice that if $\mu_{k1}-\mu_{11}\leq\delta$, then by Assumption \ref{asp:basic} $\mu_{i1}-\mu_{11}\leq \delta$ for all $i=1,\ldots,k$, which implies that $\mathbb{P}(\mu_{i^*1}-\mu_{11}\leq \delta) = 1$ and the theorem trivially holds. Hence, without loss of generality we assume that there exists $l=1,\ldots,k-1$ for which $\mu_{l1}-\mu_{11}\leq \delta$ and $\mu_{l+1,1}-\mu_{11}>\delta$. Then, a good selection (i.e., $\{\mu_{i^*1}-\mu_{11}\leq \delta\}$) occurs if any alternative $i$, $i=l+1,\ldots,k$ is not selected. It follows that
\begin{eqnarray}\label{eqn:robustadd_proof1}
&& \mathbb{P}\{\mu_{i^*1}-\mu_{11}\leq\delta\} \nonumber \\
&\geq &\mathbb{P}\left\{\bigcap_{i=l+1}^k\left\{\max_{1\leq j\leq m} \bar{X}_{ij}(N)>\max_{1\leq j\leq m}\bar{X}_{1j}(N)\right\}\right\}\nonumber\\
&\geq &\mathbb{P}\left\{\bigcap_{i=l+1}^k \left\{\max_{1\leq j\leq m} \bar{X}_{ij}(N)>\max_{1\leq j\leq m}\bar{X}_{1j}(N)\right\}\bigcap \left\{\max_{1\leq j\leq m} \bar{X}_{1j}(N) - \bar{X}_{11}(N)<\delta_I\right\}\right\}\nonumber\\
&\geq & \mathbb{P}\left\{\bigcap_{i=l+1}^k \left\{\bar{X}_{i1}(N)>\max_{1\leq j\leq m}\bar{X}_{1j}(N)\right\}\bigcap \left\{\max_{1\leq j\leq m} \bar{X}_{1j}(N) - \bar{X}_{11}(N)<\delta_I\right\}\right\}\nonumber\\
&\geq &\mathbb{P}\left\{\bigcap_{i=l+1}^k\left\{\bar{X}_{i1}(N)>\bar{X}_{11}(N)+\delta_I\right\}\;\bigcap\;\bigcap_{j=2}^m\left\{\bar{X}_{1j}(N) - \bar{X}_{11}(N)<\delta_I\right\}\right\}\nonumber\\
&\geq &1-\sum_{i=l+1}^k\mathbb{P}\left\{\bar{X}_{i1}(N)\leq\bar{X}_{11}(N)+\delta_I\right\}-\sum_{j=2}^m\mathbb{P}\left\{\bar{X}_{1j}(N)\geq\bar{X}_{11}(N)+\delta_I\right\},
\end{eqnarray}
where the last step is due to the Bonferroni inequality. For each $i=l+1,\ldots,k$,
\begin{eqnarray}\label{eqn:robustmulti_proof4}
\mathbb{P}\{\bar{X}_{i1}(N)\leq \bar{X}_{11}(N)+\delta_I\} &=&\mathbb{P}\left\{\bar{X}_{i1}(N)-\bar{X}_{11}(N)-(\mu_{i1}-\mu_{11})\leq -\mu_{i1}+\mu_{11}+\delta_I\right\}\nonumber\\
&\leq &\mathbb{P}\left\{\bar{X}_{i1}(N)-\bar{X}_{11}(N)-(\mu_{i1}-\mu_{11})\leq -\delta_O\right\} \nonumber\\
&= &\mathbb{P}\left\{\frac{\bar{X}_{i1}(N)-\bar{X}_{11}(N)-(\mu_{i1}-\mu_{11})}{\sqrt{S_{11,i1}^2/N}} \leq -\frac{\delta_O}{\sqrt{S_{11,i1}^2/N}}\right\}\nonumber\\
&\leq &\mathbb{P}\left\{\frac{\bar{X}_{i1}(N)-\bar{X}_{11}(N)-(\mu_{i1}-\mu_{11})}{\sqrt{S_{11,i1}^2/N}}\leq -h\right\},
\end{eqnarray}
where the first inequality holds because $\delta_I+\delta_O=\delta$ and $\mu_{i1}-\mu_{11}>\delta$ for each $i=l+1,\ldots,k$ under Assumption \ref{asp:basic}, and the second inequality holds because $N\geq h^2S_{11,i1}^2/\delta_O^2$.

For notational simplicity, we suppress its dependence on $i$ and set $Y_r = X_{i1,r}-X_{11,r}-(\mu_{i1}-\mu_{11})$ for $r=1,\ldots,N$,  $\sigma_Y^2=\Var[Y_r]$, and $S_Y^2 = S_{11,i1}^2$. Applying \eqref{eqn:robustmulti_proof4} and following a similar derivation in \cite{stein1945}, 
\begin{eqnarray}\label{eqn:robustmulti_proof5}
\mathbb{P}\{\bar{X}_{i1}(N)\leq \bar{X}_{11}(N)+\delta_I\} \leq  \mathbb{P}\left\{\frac{\sum_{r=1}^{N} Y_r}{\sqrt{N S_Y^2}}\leq -h\right\} = \mathbb{P}\{Z\leq -h\} = \beta,
\end{eqnarray}
where $Z$ has the distribution of Student's $t$ with $n_0-1$ degrees of freedom, and the second inequality holds due to the definition of $h$. Likewise, we can show that
\begin{eqnarray}\label{eqn:robustmulti_proof3}
\mathbb{P}\left\{\bar{X}_{1j}(N)\geq\bar{X}_{11}(N)+\delta_I\right\}\leq \beta,
\end{eqnarray}
for each $j=2,\ldots,m$. Combining \eqref{eqn:robustadd_proof1}, \eqref{eqn:robustmulti_proof5}, and \eqref{eqn:robustmulti_proof3}, we have
\begin{eqnarray*}
\mathbb{P}\{\mu_{i^*1}-\mu_{11}>\delta\} 
& \leq &   \sum_{i=2}^k\mathbb{P}\left\{\bar{X}_{i1}(N)\leq\bar{X}_{11}(N)+\delta_I\right\}
+\sum_{j=2}^m\mathbb{P}\left\{\bar{X}_{1j}(N)\geq\bar{X}_{11}(N)+\delta_I\right\} \\
&\leq & (k+m-2)\beta = \alpha,
\end{eqnarray*}
which completes the proof.
\end{proof}

\section{Proofs for the Sequential RSB Procedure}
We prove Proposition \ref{prop:robustsequential_CB} and Theorem \ref{theo:sequential} in this section. To facilitate their proofs, we first present a result that characterizes the first-exit probability that a random walk exits from the region $(-g_c(t), g_c(t))$.

\begin{lemma}\label{lem:FirstExitProb}
Let $\{Y_n: i=n=1,2,\ldots\}$ be a sequence of independent and identically distributed (i.i.d.) random variables with mean 0 and variance $\sigma^2<\infty$. Let $\bar{Y}(n)$ and $S^2(n)$ denote the sample mean and the sample variance of $(Y_i:i=1,\ldots,n)$, respectively. Let $g_c(t)=\sqrt{[c+\log(t+1)](t+1)}$ with $c=-2\log(2\beta)$ for some $\beta\in(0,1)$. Assume that the moment generating function of $Y_1$ is finite in a neighborhood of zero. 
\begin{enumerate}[label=(\roman*)]
\item  Define  $t(n)=n/\sigma^2$ and $N = \min\{n\geq n_0: t(n)|\bar{Y}(n)|\geq g_c(t(n))\}$. If $n_0\to\infty$ as $\beta\to 0$, then,
\[\limsup\limits_{\beta\to 0}\frac{1}{\beta}\mathbb{P}\left\{t(N)\bar{Y}(N)\leq - g_c(t(N)), N<\infty\right\}\leq 1.\]

\item Define $\tau(n)=n/S^2(n)$ and $N' = \min\{n\geq n_0: \tau(n)|\bar{Y}(n)|\geq g_c(\tau(n))\}$. If $n_0\to\infty$ as $\beta\to 0$, then,
\[\limsup\limits_{\beta\to 0}\frac{1}{\beta}\mathbb{P}\left\{\tau(N')\bar{Y}(N')\leq -g_c(\tau(N')), N'<\infty\right\}\leq 1.\]
\end{enumerate}
\end{lemma}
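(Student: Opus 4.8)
The plan is to reduce both parts to a single boundary-crossing estimate for standard Brownian motion, and then transfer that estimate to the random walk, using the finite moment generating function in part (i) and, additionally, strong consistency of the sample variance in part (ii).

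\emph{Reduction and the Brownian ingredient.} First observe that on the event in question the walk exits the corridor \emph{downward}, so at the exit time $N$ one has $t(N)\bar{Y}(N)=-g_c(t(N))$; hence the event is contained in the one-sided crossing event $\{\,t(n)\bar{Y}(n)\le -g_c(t(n))\ \text{for some}\ n\ge n_0\,\}$, and it suffices to bound the probability of the latter. The key fact is the calibration underlying the FHN continuation region, obtained by the method of mixtures (Robbins--Siegmund): with $c=-2\log(2\beta)$, a standard Brownian motion $W$ satisfies $\mathbb{P}\{W(t)\le -g_c(t)\ \text{for some}\ t\ge 0\}\le\beta$. Since this is a statement over the whole half-line, it dominates any crossing probability restricted to times $t\ge t(n_0)$, so the passage from "walk over $\{n\ge n_0\}$" to "$W$ over $[0,\infty)$" loses nothing on the Brownian side.

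\emph{Part (i).} Because $Y_1$ has a moment generating function finite near the origin, the Koml\'os--Major--Tusn\'ady strong approximation yields, on a common space, a standard Brownian motion $B$ with $t(n)\bar{Y}(n)=B(t(n))+\rho_n$ and $|\rho_n|=O(\log n)$ a.s.\ (after the time rescaling that matches the variances of $S_n/\sigma^2$ and $B(n/\sigma^2)$). On the crossing event this gives $B(t(n))\le -g_c(t(n))+O(\log n)$ for some $n\ge n_0$. Since $g_c(t(n))^2\ge t(n)\,c=nc/\sigma^2$ grows linearly while the error is only logarithmic, the error is a vanishing fraction of $g_c(t(n))$ uniformly in $n\ge n_0$ once $n_0$ is large; absorbing it into a slightly reduced parameter $c'=c(1-o(1))$ and invoking the Brownian bound gives $\mathbb{P}\le e^{(c-c')/2}\beta=(1+o(1))\beta$, whence $\limsup_{\beta\to 0}\beta^{-1}\mathbb{P}\le 1$. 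The hypothesis $n_0\to\infty$ enters precisely here: it confines the crossing to a time scale on which the approximation error is negligible.

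\emph{Part (ii).} The only new feature is the random variance estimate. By the strong law $S^2(n)\to\sigma^2$ a.s., so for each fixed $\varepsilon>0$ the event $E_\varepsilon=\{\,|S^2(n)/\sigma^2-1|\le\varepsilon\ \text{for all}\ n\ge n_0\,\}$ has probability tending to $1$ as $n_0\to\infty$. On $E_\varepsilon$ one has $\tau(n)\in[t(n)/(1+\varepsilon),\,t(n)/(1-\varepsilon)]$ and $\tau(n)\bar{Y}(n)=(\sigma^2/S^2(n))\,t(n)\bar{Y}(n)$, so the estimated-variance crossing event is sandwiched between two known-variance crossing events in which $c$ is replaced by $c(1\pm O(\varepsilon))$; applying part (i) to these, and then letting $\varepsilon\downarrow 0$ coordinated with $\beta\to0$ so that $\mathbb{P}(E_\varepsilon^c)$ remains negligible, yields $\limsup_{\beta\to 0}\beta^{-1}\mathbb{P}\le 1$.

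\emph{Main obstacle.} The reductions above are routine; the delicate part is the bookkeeping of the two approximation errors --- the strong-approximation error in part (i) and the variance-estimation error in part (ii) --- against a crossing probability that is exponentially small in $c$. A crude uniform perturbation replaces $c$ by $c(1\pm\varepsilon)^2$ and inflates $e^{-c/2}$ by $\beta^{\mp 2\varepsilon+\varepsilon^2}$, which does not tend to $1$ unless the perturbation is tied to $\beta$ and one exploits that the errors, measured as a fraction of $g_c(t(n))$, shrink as $n$ grows. An alternative that avoids the strong approximation altogether is a dyadic peeling argument over blocks $[2^k n_0,2^{k+1}n_0)$: bound the crossing probability in each block by a Bernstein/Doob maximal inequality --- whose Gaussian-limit behaviour is accurate because each block contains at least $n_0\to\infty$ increments --- and sum a geometric series; this route also makes transparent why $n_0\to\infty$, with no rate required, is the natural hypothesis.
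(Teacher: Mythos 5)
Your proposal follows essentially the same route as the paper's own (two-line) sketch, which defers the details to Theorem 2 of \cite{izfree2016}: reduce to the one-sided Brownian crossing of $-g_c$, whose probability is exactly $\tfrac12 e^{-c/2}=\beta$ (the paper cites Example 6 of \cite{jennen1981first}; your Robbins--Siegmund mixture identity is the same fact), transfer to the random walk by an invariance principle, and reduce (ii) to (i) via strong consistency of $S^2(n)$. The one substantive difference is your use of the KMT coupling in place of the functional CLT, and there your sketch is not actually closed: absorbing the $O(\log n)$ coupling error into a perturbed boundary parameter $c'$ with $e^{-c'/2}=(1+o(1))e^{-c/2}$ requires $c\cdot\sup_{n\ge n_0}|\rho_n|/g_c(t(n))\to 0$, i.e.\ roughly $\sqrt{c}\,\log n_0/\sqrt{n_0}\to 0$, which is a growth rate of $n_0$ relative to $\log(1/\beta)$ that the bare hypothesis $n_0\to\infty$ does not supply --- your ``main obstacle'' paragraph correctly identifies this but neither the tied-to-$\beta$ absorption nor the dyadic-peeling alternative is carried out, so the attempt sits at the same level of incompleteness as the paper's sketch rather than improving on it.
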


\begin{proof}[Proof of Lemma \ref{lem:FirstExitProb}.] We provide a proof sketch here. The complete proof can be found in the proof for Theorem 2 in \cite{izfree2016}. Let $(B(t):t\geq 0)$ be a standard Brownian motion. By virtue of the functional central limit theorem \citep[p.102]{whitt2002stochastic}, it can be shown that
\[\mathbb{P}\left\{t(N)\bar{Y}(N)\leq - g_c(t(N)), N<\infty\right\} \leq \mathbb{P}\left\{B(T_c)\leq - g_c(T_c), T_c<\infty \right\}, \]
where $T_c=\inf\{t\geq 0: |B(t)|\geq g_c(t)\}$. Moreover, Example 6 in \cite{jennen1981first} shows that
\[\mathbb{P}\left\{B(T_c)\leq - g_c(T_c), T_c<\infty \right\} = \frac{1}{2}e^{-c/2} = \beta,\]
and thus (i) follows immediately. On the other hand, note that
\[
\limsup\limits_{\beta\to 0}\frac{1}{\beta}\mathbb{P}\left\{\tau(N')\bar{Y}(N')\leq -g_c(\tau(N')), N'<\infty\right\} \leq \limsup\limits_{\beta\to 0}\frac{1}{\beta}\mathbb{P}\left\{t(N)\bar{Y}(N)\leq -g_c(t(N)), N<\infty\right\},\]
which proves (ii).
\end{proof}

\subsection{Proof of Proposition \ref{prop:robustsequential_CB}}
\begin{proof}[Proof of Proposition \ref{prop:robustsequential_CB}.]
If $(i,1)\in \mathcal{S}_i(n)$, then
\begin{eqnarray}
U_{ii'}(n) &=& \max_{(i,j)\in \mathcal{S}_i(n)}\bar{X}_{ij}(n)-\max_{(i',j)\in \mathcal{S}_{i'}(n)}\bar{X}_{i'j}(n)+C_{i'}(n)+ D_{ii'}(n)\nonumber\\[0.5ex]
&\geq & \bar{X}_{i1}(n)-\max_{(i',j)\in \mathcal{S}_{i'}(n)}\bar{X}_{i'j}(n)+C_{i'}(n)+ \frac{g_c(t_{i1,i'1}(n))}{t_{i1,i'1}(n)}.\label{eqn:CBproof_1}
\end{eqnarray}
Let $(i', j^*_{i'})$ denote the system having the largest mean performance among $\mathcal{S}_{i'}(n)$. If $(i',1)\in \mathcal{S}_{i'}(n)$, then system $(i',1)$ has not been eliminated by $(i', j^*_{i'})$. According to the mechanism of the inner-layer elimination and its related discussion in Section \ref{sec:sequential_inner},
\begin{eqnarray}\label{eqn:CBproof_2}
\max_{(i',j)\in \mathcal{S}_{i'}(n)}\bar{X}_{i'j}(n)- \bar{X}_{i'1}(n) = \bar{X}_{i'j^*_{i'}}(n) - \bar{X}_{i'1}(n)< \frac{g_c(t_{i'1,i'j^*_{i'}}(n))}{t_{i'1,i'j^*_{i'}}(n)} \leq  C_{i'}(n).
\end{eqnarray}
Plugging \eqref{eqn:CBproof_2} into \eqref{eqn:CBproof_1} yields
\begin{eqnarray}\label{eqn:CBproof_3}
U_{ii'}(n)\geq \bar{X}_{i1}(n)-\bar{X}_{i'1}(n) +\frac{g_c(t_{i1,i'1}(n))}{t_{i1,i'1}(n)}.
\end{eqnarray}
Likewise, we can show that
\begin{eqnarray}\label{eqn:CBproof_4}
L_{ii'}(n)\leq \bar{X}_{i1}(n)-\bar{X}_{i'1}(n)-\frac{g_c(t_{i1,i'1}(n))}{t_{i1,i'1}(n)}.
\end{eqnarray}
By \eqref{eqn:CBproof_3} and \eqref{eqn:CBproof_4}, if $(i,1)\in \mathcal{S}_i(n)$ and $(i',1)\in \mathcal{S}_{i'}(n)$ for all $n\geq 1$, then
\begin{eqnarray}
&&\mathbb{P}\left\{\mu_{i1}-\mu_{i'1}\notin (L_{ii'}(n),U_{ii'}(n)) \mbox{ for some $n\geq 1$}\right\}\nonumber\\
&\leq & \mathbb{P}\left\{\mu_{i1}-\mu_{i'1}\notin \left(\bar{X}_{i1}(n)-\bar{X}_{i'1}(n)-\frac{g_c(t_{i1,i'1}(n))}{t_{i1,i'1}(n)}, \bar{X}_{i1}(n)-\bar{X}_{i'1}(n)+\frac{g_c(t_{i1,i'1}(n))}{t_{i1,i'1}(n)}\right)  \mbox{ for some $n\geq 1$}\right\}\nonumber\\
&=& \mathbb{P} \left\{ t_{i1,i'1}(n)|\bar{X}_{i1}(n)-\bar{X}_{i'1}(n) - (\mu_{i1}-\mu_{i'1}) | \geq g_c(t_{i1,i'1}(n)) \mbox{ for some $n\geq 1$}\right\} \nonumber\\
&=& \mathbb{P}\left\{N_{i1,i'1}<\infty\right\},\label{eq:CI}
\end{eqnarray}
where $N_{i1,i'1}=\min\{n\geq 1: t_{i1,i'1}(n)|\bar{X}_{i1}(n)-\bar{X}_{i'1}(n) - (\mu_{i1}-\mu_{i'1}) | \geq g_c(t_{i1,i'1}(n)) \}$. It follows from Lemma \ref{lem:FirstExitProb}(i) that, letting $\bar{Y}_{ii'}(n)=\bar{X}_{i1}(n)-\bar{X}_{i'1}(n) - (\mu_{i1}-\mu_{i'1})$,
\[\limsup_{\beta\to0} \frac{1}{\beta} \mathbb{P}\left\{t_{i1,i'1}(n)\bar{Y}_{ii'}(n)\leq -g_c(t_{i1,i'1}(n)),\; N_{i1,i'1}<\infty\right\}\leq 1.\]
By the symmetry of the random walk paths,
\[\limsup_{\beta\to0} \frac{1}{\beta} \mathbb{P}\left\{t_{i1,i'1}(n)\bar{Y}_{ii'}(n) \geq g_c(t_{i1,i'1}(n)),\; N_{i1,i'1}<\infty\right\}\leq 1.\]
Hence, in the light of \eqref{eq:CI},
\begin{eqnarray*}
&&\limsup_{\beta\to0} \frac{1}{\beta}\mathbb{P}\left\{\mu_{i1}-\mu_{i'1}\notin (L_{ii'}(n),U_{ii'}(n)) \mbox{ for some $n\geq 1$}\right\}\\
&\leq &\limsup_{\beta\to0} \frac{1}{\beta} \mathbb{P}\left\{N_{i1,i'1}<\infty\right\}\\
&\leq & \limsup_{\beta\to0} \frac{1}{\beta}
\mathbb{P}\left\{t_{i1,i'1}(N_{i1,i'1})\bar{Y}_{ii'}(N_{i1,i'1})\leq -g_c(t_{i1,i'1}(N_{i1,i'1})),\; N_{i1,i'1}<\infty\right\}\\
&&+\limsup_{\beta\to0} \frac{1}{\beta} \mathbb{P}\left\{t_{i1,i'1}(N_{i1,i'1})\bar{Y}_{ii'}(N_{i1,i'1}) \geq g_c(t_{i1,i'1}(N_{i1,i'1})),\; N_{i1,i'1}<\infty\right\} \\
&\leq & 2,
\end{eqnarray*}
which completes the proof.  
\end{proof}

\subsection{Proof of Theorem \ref{theo:sequential}}
In order to prove Theorem \ref{theo:sequential}, we characterize various scenarios that can lead to an incorrect selection (ICS) event that alternative 1 is not ultimately selected. One such scenario is that in step 3.2, i.e., outer-layer elimination, alternative 1 may be eliminated because the approximate dynamic confidence interval for $\mu_{11}-\mu_{i1}$, which is constructed in the spirit of Proposition \ref{prop:robustsequential_CB}, is entirely to the right of the origin. In this scenario, we say ``alternative 1 is \textit{eliminated} by alternative $i$''.

The other possible scenario for ICS is that in step 4, i.e., stopping, the stopping criterion is met with both alternative 1 and alternative $i$ having survived, but alternative 1 has a larger worst-case sample mean than alternative $i$. From now on, when we say ``alternative 1 is \textit{killed} by alternative $i$'', we mean that \textit{either} of the above two scenarios occurs.

\begin{lemma}\label{lem:outer_elim_prob}
Assume that $(1,1)\in \mathcal{S}_1(n)$ and $(i,1)\in\mathcal{S}_i(n)$ for all $n\geq 1$. Then,
\[\limsup_{\beta\to 0}\frac{1}{\beta}\mathbb{P}\{\mbox{alternative 1 is eliminated by alternative $i$}\}\leq 1.\]
\end{lemma}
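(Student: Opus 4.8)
The plan is to show that, on the event that alternative 1 is eliminated by alternative $i$ in the outer layer, the mean-zero partial-sum average of the pairwise difference between the two worst systems $(1,1)$ and $(i,1)$ must reach the upper boundary $g_c(\cdot)$ at some iteration, and then to bound the probability of such a one-sided crossing by $\beta(1+o(1))$ via Lemma \ref{lem:FirstExitProb}(ii). This mirrors the structure of the proof of Proposition \ref{prop:robustsequential_CB}, with the crucial difference that here only one side of the boundary is relevant, which is what keeps the constant equal to $1$ rather than $2$.

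For the reduction, I would unwind Step 3.2 of Procedure S. Alternative 1 is eliminated by alternative $i$ at some iteration $n$ precisely when $\tau_{1i}^*(n)[W_{1i}(n)-C_1(n)]>g_c(\tau_{1i}^*(n))$, equivalently $W_{1i}(n)-C_1(n)>g_c(\tau_{1i}^*(n))/\tau_{1i}^*(n)$. Since $(1,1)\in\mathcal S_1(n)$ has survived all inner-layer eliminations, the argument behind \eqref{eqn:CBproof_2} (with $t$ replaced by its sample-variance analogue $\tau$) gives $\max_{(1,j)\in\mathcal S_1(n)}\bar X_{1j}(n)\le\bar X_{11}(n)+C_1(n)$, while $\max_{(i,j)\in\mathcal S_i(n)}\bar X_{ij}(n)\ge\bar X_{i1}(n)$ because $(i,1)\in\mathcal S_i(n)$; hence $W_{1i}(n)-C_1(n)\le\bar X_{11}(n)-\bar X_{i1}(n)$. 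Because the pair $(1,1),(i,1)$ is admissible in the minimum defining $\tau_{1i}^*(n)$ we have $\tau_{1i}^*(n)\le\tau_{11,i1}(n)$, and since $t\mapsto g_c(t)/t$ is decreasing (which holds once $\beta$ is small), $g_c(\tau_{1i}^*(n))/\tau_{1i}^*(n)\ge g_c(\tau_{11,i1}(n))/\tau_{11,i1}(n)$. Chaining these inequalities shows the elimination event is contained in $\{\tau_{11,i1}(n)[\bar X_{11}(n)-\bar X_{i1}(n)]>g_c(\tau_{11,i1}(n))\text{ for some }n\}$; finally, using $\mu_{11}<\mu_{i1}$ from Assumption \ref{asp:basic} to replace $\bar X_{11}(n)-\bar X_{i1}(n)$ by the strictly larger quantity $\bar Y_{1i}(n):=\bar X_{11}(n)-\bar X_{i1}(n)-(\mu_{11}-\mu_{i1})$, it is contained in $\{\tau_{11,i1}(n)\bar Y_{1i}(n)>g_c(\tau_{11,i1}(n))\text{ for some }n\}$, where $\{\bar Y_{1i}(n)\}$ is a mean-zero partial-sum average and $\tau_{11,i1}(n)=n/S^2_{11,i1}(n)$.

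It then remains to show the probability of this last event is $\beta(1+o(1))$ as $\beta\to0$. Applying Lemma \ref{lem:FirstExitProb}(ii) to $\{\bar Y_{1i}(n)\}$ and to $\{-\bar Y_{1i}(n)\}$ bounds by $\beta(1+o(1))$ the probability that the first exit of the walk from the band $(-g_c,g_c)$ lands on the lower and on the upper boundary respectively, so in particular the probability of ever leaving the band is $O(\beta)$; then, by the strong Markov property, the probability of reaching both $+g_c$ and $-g_c$ is $O(\beta^2)=o(\beta)$ (condition on the first time the lower boundary is reached and note that the residual boundary the fresh walk must reach is even higher). Inclusion--exclusion, $\mathbb P\{E^+\}+\mathbb P\{E^-\}-\mathbb P\{E^+\cap E^-\}=\mathbb P\{E^+\cup E^-\}$ with $\mathbb P\{E^+\}=\mathbb P\{E^-\}$ by symmetry, then forces $\mathbb P\{E^+\}\le\beta(1+o(1))$. (Alternatively, one may run the one-sided mixture-martingale argument underlying the proof of Theorem 2 in \cite{izfree2016}, which bounds this crossing probability directly by $e^{-c/2}/(2\Phi(\sqrt c))=\beta/\Phi(\sqrt c)\to\beta$.) Combining with the inclusion from the previous paragraph yields $\limsup_{\beta\to0}\beta^{-1}\mathbb P\{\text{alternative 1 is eliminated by alternative }i\}\le1$.

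The first step is routine bookkeeping: it is essentially a rerun of the algebra in the proof of Proposition \ref{prop:robustsequential_CB} with $t_{ij,i'j'}$ replaced by $\tau_{ij,i'j'}$ and with the extra observation that $g_c(t)/t$ is monotone. The main obstacle is the last step: Lemma \ref{lem:FirstExitProb} as stated controls only \emph{where the first two-sided exit lands}, so a crude union bound over the two boundaries would cost a spurious factor of $2$ and only give $\limsup\le2$; keeping the constant at $1$ requires either the $o(\beta)$ estimate for crossing both boundaries (as above) or an honestly one-sided boundary-crossing bound in the spirit of \cite{izfree2016}.
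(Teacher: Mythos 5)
Your proof is correct and follows essentially the same route as the paper's: the identical chain of inequalities (using $(1,1)\in\mathcal S_1(n)$, $(i,1)\in\mathcal S_i(n)$, and the monotonicity of $g_c(t)/t$) reduces the elimination event to a one-sided upper-boundary crossing by the centered difference process of systems $(1,1)$ and $(i,1)$, which is then bounded by $\beta(1+o(1))$ via the first-exit analysis. Your explicit handling of the one-sided versus two-sided crossing issue (the $O(\beta^2)$ bound on hitting both boundaries) is simply a more careful rendering of the step the paper dispatches with ``by the symmetry of standard Brownian motion'' together with the exact landing probability $\frac{1}{2}e^{-c/2}=\beta$ from \cite{jennen1981first}.
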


\begin{proof}[Proof of Lemma \ref{lem:outer_elim_prob}.]
Let $M$ denote the sample size $n$ when the stopping criterion is met. Define
\[
\tilde N_{11,i1} =\min\{n\geq n_0: \tau_{1i}^*(n)[W_{1i}(n)- C_1(n)]\geq g_c(\tau_{1i}^*(n))\mbox{ or } \tau_{1i}^*(n)[W_{1i}(n)+ C_i(n)]\leq -g_c(\tau_{1i}^*(n))\},
\]
where $\tau_{1i}^*(n)=\min_{(1,j)\in \mathcal{S}_1(n),(i,j')\in\mathcal{S}_{i}(n)}\tau_{1j,ij'}(n),
W_{1i}(n)=\max\limits_{(1,j)\in \mathcal{S}_1(n)}\bar{X}_{1j}(n)-\max\limits_{(i,j)\in \mathcal{S}_i(n)}\bar{X}_{ij}(n)$, and $C_i(n) = \max_{(i,j),(i,j')\in \mathcal{S}_i(n)} g_c(\tau_{ij,ij'}(n))/\tau_{ij,ij'}(n)$.
Then, alternative 1 is eliminated by alternative $i$ if and only if
\begin{eqnarray}
&&\{\tau_{1i}^*(\tilde N_{11,i1})[W_{1i}(\tilde N_{11,i1})- C_1(\tilde N_{11,i1})]\geq g_c(\tau_{1i}^*(\tilde N_{11,i1})),\;\tilde N_{11,i1}\leq M<\infty\} \nonumber\\
&\subseteq & \{\tau_{1i}^*(\tilde N_{11,i1})[W_{1i}(\tilde N_{11,i1})- C_1(\tilde N_{11,i1})]\geq g_c(\tau_{1i}^*(\tilde N_{11,i1})),\;\tilde N_{11,i1}<\infty\}. \label{eqn:robustsequential_lemma1_proof1}
\end{eqnarray}

Since $(1,1)\in \mathcal{S}_1(n)$, following the argument for deriving \eqref{eqn:CBproof_2} we can show  that $\max_{(1,j)\in \mathcal{S}_1(n)}\bar{X}_{1j}(n)-\bar{X}_{11} < C_1(n)$. Hence,
\begin{equation}\label{eq:outer_elim_prob_1}
W_{1i}(n)-C_1(n) = \max\limits_{(1,j)\in \mathcal{S}_1(n)}\bar{X}_{1j}(n)-\max\limits_{(i,j)\in \mathcal{S}_i(n)}\bar{X}_{ij}(n)-C_1(n)< \bar{X}_{11}(n) - \bar{X}_{i1}(n)
\end{equation}
It is easy to see that $g_c(t)/t$ is decreasing in $t>0$, and thus
\begin{equation}\label{eq:outer_elim_prob_2}
\frac{g_c{\tau_{1i}^*(n)}}{\tau_{1i}^*(n)} = \max_{(1,j)\in \mathcal{S}_1(n),(i,j')\in\mathcal{S}_{i}(n)}\frac{g_c(\tau_{1j,ij'}(n))}{\tau_{1j,ij'}(n)}
\geq \frac{g_c(\tau_{11,i1}(n))}{\tau_{11,i1}(n)},
\end{equation}
where the inequality holds because $(1,1)\in \mathcal{S}_1(n)$ and $(i,1)\in \mathcal{S}_i(n)$. It follows from \eqref{eqn:robustsequential_lemma1_proof1}, \eqref{eq:outer_elim_prob_1} and \eqref{eq:outer_elim_prob_2} that
\begin{equation*}
0 \leq [W_{1i}(n)-C_1(n)] - \frac{g_c(\tau_{1i}^*(n))}{\tau_{1i}^*(n)} < [\bar{X}_{11}(n) - \bar{X}_{i1}(n)] -\frac{g_c(\tau_{11,i1}(n))}{\tau_{11,i1}(n)}.
\end{equation*}
Therefore,
\begin{eqnarray}
&& \eqref{eqn:robustsequential_lemma1_proof1} \nonumber\\ 
&\subseteq&\left\{\tau_{11,i1}(\tilde N_{11,i1})[\bar{X}_{11}(\tilde N_{11,i1}) - \bar{X}_{i1}(\tilde N_{11,i1})]\geq g_c(\tau_{11,i1}(\tilde N_{11,i1})),\;\tilde N_{11,i1}<\infty\right\} \nonumber \\
&\subseteq&\left\{\tau_{11,i1}(\tilde N_{11,i1})[\bar{X}_{11}(\tilde N_{11,i1}) - \bar{X}_{i1}(\tilde N_{11,i1})-(\mu_{11}-\mu_{i1})]\geq g_c(\tau_{11,i1}(\tilde N_{11,i1})),\;\tilde N_{11,i1}<\infty\right\}. \qquad\qquad \label{eqn:robustsequential_lemma1_proof5}
\end{eqnarray}
Define $N_{11,i1}=\min\{n\geq n_0: \big|\tau_{11,i1}(n)[\bar{X}_{11}(n) - \bar{X}_{i1}(n)-(\mu_{11}-\mu_{i1})]\big| \geq g_c(\tau_{11,i1}(n))\}$.
By \eqref{eqn:robustsequential_lemma1_proof1} and \eqref{eqn:robustsequential_lemma1_proof5}, in order to prove Lemma \ref{lem:outer_elim_prob} it suffices to show
\begin{equation}\label{eq:outer_elim_prob_obj}
\limsup_{\beta\to 0}\frac{1}{\beta}\mathbb{P}\{\tau_{11,i1}(\tilde N_{11,i1})[\bar{X}_{11}(\tilde N_{11,i1}) - \bar{X}_{i1}(\tilde N_{11,i1})-(\mu_{11}-\mu_{i1})]\geq g_c(\tilde N_{11,i1})),\;\tilde N_{11,i1}<\infty\}\leq 1.
\end{equation}

Following the proof of Theorem 2 of \cite{izfree2016} and the functional central limit theorem, it can be shown that the left-hand-side of the inequality \eqref{eq:outer_elim_prob_obj} is upper bounded by its counterpart for the standard Brownian motion, i.e.,
\begin{eqnarray}
&&\limsup_{\beta\to 0}\frac{1}{\beta}\mathbb{P}\{\tau_{11,i1}(\tilde N_{11,i1})[\bar{X}_{11}(\tilde N_{11,i1}) - \bar{X}_{i1}(\tilde N_{11,i1})-(\mu_{11}-\mu_{i1})]\geq g_c(\tilde N_{11,i1})),\;\tilde N_{11,i1}<\infty\} \nonumber\\
&\leq & \limsup_{\beta\to 0}\mathbb{P}\{B(\tilde T_{11,i1})\geq g_c(\tilde T_{11,i1}), \;\tilde T_{11,i1} <\infty\}, \label{eq:dc}
\end{eqnarray}
where $\tilde T_{11,i1}$ is the random time that can be seen as the limit of $\tilde N_{11,i1}$ as $n\to\infty$. Its explicit form can be written by applying the functional central limit theorem, but we omit it since it is quite involved. Moreover, \eqref{eqn:robustsequential_lemma1_proof5} implies that $T\leq \tilde T_{11,i1}$ and $|B(\tilde T_{11,i1})|\geq g_c(\tilde T_{11,i1})$, where $T=\inf\{t> 0: |B(t)|\geq g_c(t)\}$. By the symmetry of standard Brownian motion $B(\cdot)$, we have
\begin{equation}\label{eq:applying_lemma2}
\mathbb{P}\{B(\tilde T_{11,i1})\geq g_c(\tilde T_{11,i1}), \;\tilde T_{11,i1} <\infty\} \leq
\mathbb{P}\{B( T)\geq g_c( T), \; T <\infty\}.
\end{equation}
Combining \eqref{eq:dc} and \eqref{eq:applying_lemma2},
\begin{eqnarray*}
&&\limsup_{\beta\to 0}\frac{1}{\beta}\mathbb{P}\{\tau_{11,i1}(\tilde N_{11,i1})[\bar{X}_{11}(\tilde N_{11,i1}) - \bar{X}_{i1}(\tilde N_{11,i1})-(\mu_{11}-\mu_{i1})]\geq g_c(\tilde N_{11,i1})),\;\tilde N_{11,i1}<\infty\} \\
&\leq &\limsup_{\beta\to 0}\frac{1}{\beta} \mathbb{P}\{B( T)\geq g_c( T), \; T <\infty\} \\
&=&\limsup_{\beta\to 0}\frac{1}{\beta} \cdot \frac{1}{2}e^{-c/2} = 1,
\end{eqnarray*}
where the equality follows from Example 6 in \cite{jennen1981first}. Therefore, \eqref{eq:outer_elim_prob_obj} is true and the proof is complete. 
\end{proof}

\begin{lemma}\label{lem:kill_prob}
Assume that $(1,1)\in \mathcal{S}_1(n)$ and $(i,1)\in\mathcal{S}_i(n)$ for all $n\geq 1$. If $\mu_{i1}-\mu_{11}\geq \delta$, then
\[\limsup_{\beta\to 0}\frac{1}{\beta}\mathbb{P}\{\mbox{alternative 1 is killed by alternative $i$}\}\leq 1.\]
\end{lemma}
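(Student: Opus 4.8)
The plan is to funnel \emph{both} ways in which ``alternative $1$ is killed by alternative $i$'' can arise --- outright elimination of alternative~$1$ by alternative~$i$ in step~3.2, and survival of both until step~4 fires with alternative~$1$ carrying the larger worst-case sample mean --- into a \emph{single} one-sided barrier-crossing event for the studentized, centered partial-sum difference between systems $(1,1)$ and $(i,1)$, and then to reuse the Brownian-motion reduction of the proof of Lemma~\ref{lem:outer_elim_prob}. Write $\bar Y(n)=\bar X_{11}(n)-\bar X_{i1}(n)-(\mu_{11}-\mu_{i1})$ (a centered sample mean) and recall $\tau_{11,i1}(n)=n/S^2_{11,i1}(n)$. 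Let $\tilde N\ge n_0$ be the first sample size at which the pairwise contest between alternatives~$1$ and~$i$ is resolved: alternative~$1$ removed by alternative~$i$, alternative~$i$ removed by alternative~$1$, or the stopping rule of step~4 triggering. On the event that alternative~$1$ is killed by alternative~$i$, alternative~$i$ is never removed by alternative~$1$, so at $\tilde N$ the resolution is one of the other two cases.

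The first half of the proof establishes the inclusion
\[
\{\text{alternative }1\text{ is killed by alternative }i\}\ \subseteq\ \bigl\{\,\tau_{11,i1}(\tilde N)\,\bar Y(\tilde N)\ \ge\ g_c(\tau_{11,i1}(\tilde N)),\ \tilde N<\infty\,\bigr\}.
\]
If alternative~$1$ is eliminated by alternative~$i$ at $\tilde N$, this is exactly the chain of inequalities already carried out in the proof of Lemma~\ref{lem:outer_elim_prob}: use $(1,1)\in\mathcal S_1(\tilde N)$ and the bound \eqref{eq:outer_elim_prob_1}, the monotonicity of $t\mapsto g_c(t)/t$ with $\tau_{1i}^*(\tilde N)\le\tau_{11,i1}(\tilde N)$ as in \eqref{eq:outer_elim_prob_2}, and $\mu_{11}-\mu_{i1}<0$. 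If instead step~4 triggers at $\tilde N$ with both alternatives alive, then $\tau_{1i}^*(\tilde N)[\delta-C_1(\tilde N)]\ge g_c(\tau_{1i}^*(\tilde N))$; combining this with the hypothesis $\mu_{i1}-\mu_{11}\ge\delta$, with $W_{1i}(\tilde N)>0$ (alternative~$1$ has the larger worst-case sample mean), with \eqref{eq:outer_elim_prob_1}, and again with \eqref{eq:outer_elim_prob_2} gives
\[
\bar Y(\tilde N)\ =\ \bar X_{11}(\tilde N)-\bar X_{i1}(\tilde N)+(\mu_{i1}-\mu_{11})\ \ge\ \bar X_{11}(\tilde N)-\bar X_{i1}(\tilde N)+\delta\ >\ \delta-C_1(\tilde N)\ \ge\ \frac{g_c(\tau_{11,i1}(\tilde N))}{\tau_{11,i1}(\tilde N)},
\]
the same crossing. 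This is the only place the hypothesis $\mu_{i1}-\mu_{11}\ge\delta$ enters.

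For the second half I would pass to the Brownian approximation exactly as in the proofs of Lemma~\ref{lem:FirstExitProb} and Lemma~\ref{lem:outer_elim_prob}. Using $n_0(\alpha)\to\infty$ and the moment-generating-function assumption, the functional central limit theorem bounds the probability of the displayed event, asymptotically in $\beta$, by $\mathbb P\{B(\tilde T)\ge g_c(\tilde T),\ \tilde T<\infty\}$ for a standard Brownian motion $B$, where $\tilde T$ is the limit of $\tilde N$; since on that event $B$ has already left the band by $\tilde T$, we have $\tilde T\ge T:=\inf\{t>0:|B(t)|\ge g_c(t)\}$ and $|B(\tilde T)|\ge g_c(\tilde T)$, and the symmetry argument of Lemma~\ref{lem:outer_elim_prob} then yields $\mathbb P\{B(\tilde T)\ge g_c(\tilde T),\ \tilde T<\infty\}\le \mathbb P\{B(T)\ge g_c(T),\ T<\infty\}=\tfrac12 e^{-c/2}=\beta$, by Example~6 of \cite{jennen1981first} and $c=-2\log(2\beta)$. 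Dividing by $\beta$ and letting $\beta\to0$ gives the claim.

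I expect the delicate point to be precisely this last symmetry step --- ensuring that merging the two scenarios does not cost a factor of $2$. At the level of the inclusion there is genuinely no loss, since both scenarios land in the \emph{same} upper-barrier event; the subtlety is Brownian, where one must argue that paths which first exit the band \emph{downward} and only later climb above the upper boundary contribute $o(\beta)$ as $\beta\to0$ (equivalently, as the level $c\to\infty$). This is the same technical point underlying the proof of Lemma~\ref{lem:outer_elim_prob}, and I would either invoke that argument verbatim or supply a crude strong-Markov large-deviation estimate showing that the probability of crossing both $\pm g_c$ is $o(e^{-c/2})$. Everything else --- the algebraic inequalities above and the bookkeeping of which inner-layer systems of alternatives~$1$ and~$i$ have survived --- is routine given \eqref{eq:outer_elim_prob_1}, \eqref{eq:outer_elim_prob_2}, and Lemma~\ref{lem:FirstExitProb}.
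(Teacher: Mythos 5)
Your proof is correct and follows essentially the same route as the paper's: both decompose the kill event into the elimination scenario and the stopping-criterion scenario, funnel each into the same one-sided upper-barrier crossing of the centered $(1,1)$-versus-$(i,1)$ statistic via \eqref{eq:outer_elim_prob_1}, \eqref{eq:outer_elim_prob_2} and the hypothesis $\mu_{i1}-\mu_{11}\geq\delta$, and finish with the Brownian reduction used in Lemma \ref{lem:outer_elim_prob}. The only cosmetic difference is that the paper first centers the aggregate statistic $W_{1i}$ (introducing $W^0_{1i}$ and the first-crossing time $\tilde N^0_{11,i1}$) before reducing to the pair statistic, whereas you pass to the centered pair statistic directly.
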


\begin{proof}[Proof of Lemma \ref{lem:kill_prob}.]
We follow the notation in the proof of Lemma \ref{lem:outer_elim_prob}. Notice that alternative 1 is killed by alternative $i$ either immediately after the stopping criterion is met, i.e.,
\begin{equation}\label{eq:kill_event1}
\{W_{1i}(M)>0,\;M<\tilde N_{11,i1}\wedge \infty\}\bigcap \left\{\tau_{1i}^*(M)[\delta-C_1(M)\vee C_i(M)]\geq g_c(\tau_{1i}^*(M)) \right\},
\end{equation}
or before the stopping criterion is met, i.e.,
\begin{equation}\label{eq:kill_event2}
\{\tau_{1i}^*(\tilde N_{11,i1})[W_{1i}(\tilde N_{11,i1})- C_1(\tilde N_{11,i1})]\geq g_c(\tau_{1i}^*(\tilde N_{11,i1})),\;\tilde N_{11,i1}\leq M<\infty\}.
\end{equation}
Let $W_{1i}^{0}(n)=W_{1i}(n)-(\mu_{11}-\mu_{i1})$ and
\[
\tilde N_{11,i1}^0 =\min\{n\geq n_0: \tau_{1i}^*(n)[W_{1i}^0(n)- C_1(n)]\geq g_c(\tau_{1i}^*(n))\mbox{ or } \tau_{1i}^*(n)[W_{1i}^0(n)+ C_i(n)]\leq -g_c(\tau_{1i}^*(n))\}.
\]
Then,
\begin{eqnarray}
\eqref{eq:kill_event1} &= &
\{W_{1i}^0(M)>\mu_{i1}-\mu_{11},\;M<\tilde N_{11,i1}\wedge \infty\}\bigcap \left\{\tau_{1i}^*(M)[\delta-C_1(M)\vee C_i(M)]\geq g_c(\tau_{1i}^*(M)) \right\}\nonumber\\
&\subseteq& \{W_{1i}^0(M)> \delta,\; M<\tilde N_{11,i1}\wedge \infty\}\bigcap \left\{\tau_{1i}^*(M)[\delta-C_1(M)]\geq g_c(\tau_{1i}^*(M)) \right\}\nonumber\\
&\subseteq& \left\{\tau_{1i}^*(M)[W_{1i}^0(M)-C_1(M)]\geq g_c(\tau_{1i}^*(M)),\; M<\tilde N_{11,i1}\wedge\infty \right\}\nonumber \\
&\subseteq& \left\{\tau_{1i}^*(M)[W_{1i}^0(M)-C_1(M)]\geq g_c(\tau_{1i}^*(M)),\; \tilde N_{11,i1}^0\leq M<\tilde N_{11,i1}\wedge\infty \right\},\label{eqn:robustsequential_lemma1_proof3}
\end{eqnarray}
where the last step follows from the definition of $\tilde N_{11,i1}^0$. Moreover, notice that
\begin{eqnarray*}
\{M < \tilde N_{11,i1}\} &\subseteq & \{\tau_{1i}^*(n)[W_{1i}(n)+ C_i(n)]> -g_c(\tau_{1i}^*(n))\mbox{ for all $n\leq M$}\} \\
&\subseteq & \{\tau_{1i}^*(n)[W^0_{1i}(n)+ C_i(n)]> -g_c(\tau_{1i}^*(n))\mbox{ for all $n\leq M$}\},
\end{eqnarray*}
since $W^0_{1i}(n)>W_{1i}(n)$. It then follows from \eqref{eqn:robustsequential_lemma1_proof3} that
\begin{eqnarray}
&& \eqref{eq:kill_event1} \nonumber \\
&\subseteq&
\left\{\tau_{1i}^*(N_{11,i1}^0)[W_{1i}^0(N_{11,i1}^0)-C_1(N_{11,i1}^0)]\geq g_c(\tau_{1i}^*(N_{11,i1}^0)), \;\tilde N_{11,i1}^0\leq M<\tilde N_{11,i1}\wedge\infty \right\}.\qquad \qquad \label{eqn:robustsequential_lemma1_proof4}
\end{eqnarray}

For \eqref{eq:kill_event2}, the other scenario that can lead to ICS, we notice that since $W^0_{1i}(n)>W_{1i}(n)$,
\begin{eqnarray}
&& \eqref{eq:kill_event2} \nonumber \\
&\subseteq & \{\tau_{1i}^*(\tilde N_{11,i1})[W^0_{1i}(\tilde N_{11,i1})- C_1(\tilde N_{11,i1})]\geq g_c(\tau_{1i}^*(\tilde N_{11,i1})),\; \tilde N_{11,i1}\leq M<\infty\}\nonumber\\
&\subseteq & \{\tau_{1i}^*(\tilde N_{11,i1})[W^0_{1i}(\tilde N_{11,i1})- C_1(\tilde N_{11,i1})]\geq g_c(\tau_{1i}^*(\tilde N_{11,i1})), \;\tilde N_{11,i1}^0\leq \tilde N_{11,i1}\leq M<\infty\}. \qquad\qquad \label{eq:subseteq_1}
\end{eqnarray}
Moreover,
\begin{eqnarray*}
\{\tilde N_{11,i1}^0\leq \tilde N_{11,i1}\}
&\subseteq & \{\tau_{1i}^*(n)[W_{1i}(n)+ C_i(n)]> -g_c(\tau_{1i}^*(n)) \mbox{ for all $n< \tilde N_{1i}^0$} \}\nonumber\\
&\subseteq & \{\tau_{1i}^*(n)[W^0_{1i}(n)+ C_i(n)]> -g_c(\tau_{1i}^*(n)) \mbox{ for all $n< \tilde N_{1i}^0$} \}.
\end{eqnarray*}
It then follows from \eqref{eq:subseteq_1} that
\begin{eqnarray}
&&\eqref{eq:kill_event2} \nonumber \\
&\subseteq&
 \{\tau_{1i}^*(\tilde N^0_{11,i1})[W^0_{1i}(\tilde N^0_{11,i1})- C_1(\tilde N^0_{11,i1})]\geq g_c(\tau_{1i}^*(\tilde N^0_{11,i1})), \;\tilde N_{11,i1}^0\leq \tilde N_{11,i1}\leq M<\infty\}. \qquad\qquad \label{eq:subseteq_2}
\end{eqnarray}

By \eqref{eqn:robustsequential_lemma1_proof4} and \eqref{eq:subseteq_2},
\begin{eqnarray*}
&&\mathbb{P}\{\mbox{alternative 1 is killed by alternative $i$}\} \\
&\leq &\mathbb{P}\{\tau_{1i}^*(\tilde N^0_{11,i1})[W^0_{1i}(\tilde N^0_{11,i1})- C_1(\tilde N^0_{11,i1})]\geq g_c(\tau_{1i}^*(\tilde N^0_{11,i1})), \;\tilde N_{11,i1}^0 <\infty\}.
\end{eqnarray*}
Hence, in order to prove Lemma \ref{lem:kill_prob}, it suffices to show
\[
\limsup_{\beta\to 0}\frac{1}{\beta}\mathbb{P}\{\tau_{1i}^*(\tilde N^0_{11,i1})[W^0_{1i}(\tilde N^0_{11,i1})- C_1(\tilde N^0_{11,i1})]\geq g_c(\tau_{1i}^*(\tilde N^0_{11,i1})), \;\tilde N_{11,i1}^0 <\infty\}\leq 1.
\]
This can be done by adopting a proof that is essentially identical to the discussion between \eqref{eqn:robustsequential_lemma1_proof1} and the end of the proof of Lemma \ref{lem:outer_elim_prob}.
\end{proof}

To establish Theorem \ref{theo:sequential}, we need one additional building block. Notice that a common assumption shared by both Lemma \ref{lem:outer_elim_prob} and Lemma \ref{lem:kill_prob} is that when alternative $i$ is compared with another alternative, system $(i,1)$, the worst system of alternative $i$, is not yet eliminated in the inner-layer elimination. This assumption essentially guarantees that the worst-case mean performance of alternative $i$ can be accurately estimated via a dynamic confidence interval; see Proposition \ref{prop:robustsequential_CB}. Therefore, we need to characterize the probability that system $(i,1)$ is eliminated by some other system $(i,j)$.

\begin{lemma}\label{lem:robustsequential_inner}
In the inner-layer selection process of the sequential RSB procedure,
\begin{eqnarray*}
\limsup_{\beta\to 0}\ \frac{1}{\beta}\mathbb{P}\{\mbox{system $(i,1)$ is eliminated by system $(i,j)$}\}\leq 1,
\end{eqnarray*}
for each $i=1,2,\ldots,k$ and each $j = 2,3,\ldots,m$.
\end{lemma}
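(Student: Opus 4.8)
The plan is to reduce the event that system $(i,1)$ is wrongly eliminated by system $(i,j)$ in the inner layer to a \emph{one-sided} first-exit event for a centred random walk, and then invoke Lemma~\ref{lem:FirstExitProb}(ii). Fix $i$ and $j\in\{2,\dots,m\}$, set $\Delta=\mu_{i1}-\mu_{ij}$ (nonnegative by Assumption~\ref{asp:basic}), and let $Y_r=X_{i1,r}-X_{ij,r}-\Delta$, so that $(Y_r)$ is i.i.d.\ with mean $0$; write $\bar Y(n)$ for its sample mean and note that its sample variance equals $S^2_{i1,ij}(n)$, so $\tau_{i1,ij}(n)=n/S^2_{i1,ij}(n)$ plays the role of $\tau(n)$ in Lemma~\ref{lem:FirstExitProb}(ii). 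The key algebraic identity is
\[
Z_{i1,ij}(n)=\tau_{i1,ij}(n)\big[\bar X_{i1}(n)-\bar X_{ij}(n)\big]=\tau_{i1,ij}(n)\bar Y(n)+\tau_{i1,ij}(n)\Delta ,
\]
i.e.\ the process that drives the inner-layer rule is the centred process shifted \emph{upward} by the nonnegative quantity $\tau_{i1,ij}(n)\Delta$.

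First I would show that $(i,1)$ being eliminated by $(i,j)$ forces the first exit of $Z_{i1,ij}$ from its continuation region to be downward. Since $\mathcal S_i$ only shrinks, if both $(i,1)$ and $(i,j)$ lie in $\mathcal S_i$ at the iteration $n$ where $(i,1)$ is removed, then both lay in $\mathcal S_i$ at every earlier iteration; survival of $(i,1)$ gives $Z_{i1,ij}(n')>-g_c(\tau_{i1,ij}(n'))$ and survival of $(i,j)$ gives $Z_{i1,ij}(n')<g_c(\tau_{i1,ij}(n'))$ for all $n_0\le n'<n$, while $Z_{i1,ij}(n)\le -g_c(\tau_{i1,ij}(n))$. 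Hence $n$ equals $M:=\min\{n\ge n_0:|Z_{i1,ij}(n)|\ge g_c(\tau_{i1,ij}(n))\}<\infty$ and $Z_{i1,ij}(M)\le -g_c(\tau_{i1,ij}(M))$. (If the $(i,1)$--$(i,j)$ comparison is cut short by an elimination involving some other system, then $(i,1)$ is simply not eliminated by $(i,j)$, so this only removes mass from the event and is harmless.)

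Next I would transfer this to the centred process using the sign of the drift. On the event above, $\tau_{i1,ij}(n)\bar Y(n)=Z_{i1,ij}(n)-\tau_{i1,ij}(n)\Delta<g_c(\tau_{i1,ij}(n))$ for $n_0\le n<M$, and $\tau_{i1,ij}(M)\bar Y(M)=Z_{i1,ij}(M)-\tau_{i1,ij}(M)\Delta\le -g_c(\tau_{i1,ij}(M))$. Therefore the centred first-exit time $N'=\min\{n\ge n_0:|\tau_{i1,ij}(n)\bar Y(n)|\ge g_c(\tau_{i1,ij}(n))\}$ satisfies $N'\le M<\infty$, and since the centred process stays below $g_c$ on $[n_0,M)$ its exit at $N'$ is necessarily downward, i.e.\ $\tau_{i1,ij}(N')\bar Y(N')\le -g_c(\tau_{i1,ij}(N'))$. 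Consequently
\[
\{\text{system }(i,1)\text{ is eliminated by system }(i,j)\}\subseteq\{\tau_{i1,ij}(N')\bar Y(N')\le -g_c(\tau_{i1,ij}(N')),\ N'<\infty\}.
\]
Finally, $(Y_r)$ is i.i.d.\ with mean $0$ and, by the moment-generating-function hypothesis of Theorem~\ref{theo:sequential}, has a finite m.g.f.\ near the origin; since $n_0=n_0(\alpha)\to\infty$ as $\alpha\to0$ (equivalently $\beta\to0$), Lemma~\ref{lem:FirstExitProb}(ii) bounds the probability of the right-hand event by $(1+o(1))\beta$, which is the claim.

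I expect the main obstacle to be the bookkeeping in the second paragraph: arguing cleanly that the removal of $(i,1)$ by $(i,j)$ really coincides with the \emph{first} exit of $Z_{i1,ij}$ and is a \emph{downward} exit, given that $\mathcal S_i$ is updated dynamically and the pairwise comparison may terminate early because of eliminations involving third systems. Once this "first downward exit" reduction is established, the drift-sign manipulation translating it to the centred process and the appeal to Lemma~\ref{lem:FirstExitProb}(ii) are routine.
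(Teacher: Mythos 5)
Your proof is correct and follows essentially the same route as the paper: reduce the elimination event to a downward first exit of the centred difference process $\tau_{i1,ij}(n)\bar Y(n)$ (using $\mu_{i1}-\mu_{ij}\geq 0$ to shift the boundary) and then apply Lemma~\ref{lem:FirstExitProb}(ii). Your second and third paragraphs merely make explicit the first-exit bookkeeping that the paper's one-line computation leaves implicit, which is a welcome rather than a divergent refinement.
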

\begin{proof}[Proof of Lemma \ref{lem:robustsequential_inner}.]
Define $N'_{i1,ij}=\min\{n\geq n_0: \tau_{i1,ij}(n)|\bar{X}_{i1}(n)-\bar{X}_{ij}(n)|\geq g_c(\tau_{i1,ij}(n))\}$, for each $i=1,2,\ldots,k$ and each $j = 2,3,\ldots,m$. Then,
\begin{eqnarray*}
&&\lefteqn{\mathbb{P}\{\mbox{system $(i,1)$ is eliminated by system $(i,j)$}\}}\\
& = & \mathbb{P}\left\{\bar{X}_{i1}(N_{i1,ij})-\bar{X}_{ij}(N'_{i1,ij})\leq -\frac{g_c(\tau_{i1,ij}(N'_{i1,ij}))}{\tau_{i1,ij}(N'_{i1,ij})}, N'_{i1,ij}<\infty\right\} \\
&\leq & \mathbb{P}\left\{\bar{X}_{i1}(N_{i1,ij})-\bar{X}_{ij}(N'_{i1,ij})-(\mu_{i1}-\mu_{ij})\leq -\frac{g_c(\tau_{i1,ij}(N'_{i1,ij}))}{\tau_{i1,ij}(N'_{i1,ij})}, N'_{i1,ij}<\infty\right\},
\end{eqnarray*}
where the inequality holds because $\mu_{i1}-\mu_{ij}>0$. The proof is completed by applying Lemma \ref{lem:FirstExitProb}(ii). 
\end{proof}

\begin{proof}[Proof of Theorem \ref{theo:sequential}.]
We first notice that if $\mu_{k1}-\mu_{11}\leq\delta$, then by Assumption \ref{asp:basic} $\mu_{i1}-\mu_{11}\leq \delta$ for all $i=1,\ldots,k$, which implies that $\mathbb{P}(\mu_{i^*1}-\mu_{11}\leq \delta) = 1$ and the theorem trivially holds. Hence, without loss of generality we assume that there exists $l=1,\ldots,k-1$ for which $\mu_{l1}-\mu_{11}\leq \delta$ and $\mu_{l+1,1}-\mu_{11}>\delta$. Then, a good selection event (i.e., alternative $i$ is selected for any $i=1,\ldots,l$) occurs if $\bigcap_{i=l+1}^k\{\mbox{alternative 1 kills alternative $i$}\}.$ We denote
\begin{eqnarray*}
A &=& \bigcap_{i=l+1}^k\{\mbox{alternative 1 kills alternative $i$}\}\\
B &=&\bigcap_{i=2}^l \{\mbox{alternative 1 is not eliminated by alternative $i$}\} \\
C &=& \bigcap_{i=1} ^k \bigcap_{j=2}^m \{\mbox{system $(i,1)$ is not eliminated by system $(i,j)$}\}.
\end{eqnarray*}
Clearly, $\mathbb{P}(A\cap B| C) \geq 1-\mathbb{P}(A^\mathsf{c}|C) - \mathbb{P}(B^\mathsf{c}|C)$. Multiplying $\mathbb{P}(C)$ on both sides this inequality yields
\begin{eqnarray*}
\mathbb{P}(A \cap B \cap C) \geq \mathbb{P}(C) - \mathbb{P}(A^\mathsf{c}\cap C) - \mathbb{P}(B^\mathsf{c}\cap C) = 1- \mathbb{P}(C^\mathsf{c}) - \mathbb{P}(A^\mathsf{c}\cap C) - \mathbb{P}(B^\mathsf{c}\cap C).
\end{eqnarray*}
Since $\mathbb{P}\{\mu_{i^*1}-\mu_{11}\leq  \delta\} \geq \mathbb{P}(A)\geq \mathbb{P}(A \cap B \cap C)$, it follows that
\begin{equation}\label{eq:seq_prob_decompo}
\mathbb{P}\{\mu_{i^*1}-\mu_{11}> \delta\} \leq  \mathbb{P}(A^\mathsf{c}\cap C) +\mathbb{P}(B^\mathsf{c}\cap C)+\mathbb{P}(C^\mathsf{c}).
\end{equation}
Notice that
\begin{eqnarray*}
\mathbb{P}(A^\mathsf{c}\cap C) 
&\leq & \sum_{i=l+1}^k\mathbb{P}\{\mbox{alternative $1$ is killed by alternative $i$, $(1,1)\in\mathcal{S}_1(n)$ and $(i,1)\in\mathcal{S}_i(n)$ for all n}\} \\ 
\mathbb{P}(B^\mathsf{c}\cap C)
&\leq & \sum_{i=1}^l\mathbb{P}\{\mbox{alternative $1$ is eliminated by alternative $i$, $(1,1)\in\mathcal{S}_1(n)$ and $(i,1)\in\mathcal{S}_i(n)$ for all n}\} \\
\mathbb{P}(C^\mathsf{c}) 
&\leq & \sum_{i=1}^k\sum_{j=2}^m \mathbb{P}\{\mbox{system $(i,1)$ is eliminated by system $(i,j)$}\}.
\end{eqnarray*}
Combining Lemma \ref{lem:outer_elim_prob}, Lemma \ref{lem:kill_prob}, Lemma \ref{lem:robustsequential_inner}, and \eqref{eq:seq_prob_decompo}, we have
\begin{eqnarray*}
\limsup_{\alpha\to 0}\frac{1}{\alpha}\mathbb{P}\{\mu_{i^*1}-\mu_{11}> \delta\}
= \limsup_{\beta\to 0}\frac{\beta}{\alpha} \cdot\frac{1}{\beta}\mathbb{P}\{\mu_{i^*1}-\mu_{11}>\delta\}
\leq \frac{1}{km-1} [(k-l) + l + k(m-1)] =1, 
\end{eqnarray*}
which completes the proof.  
\end{proof}

\section{Additional Comparison Between Procedure T and Procedure S}\label{EC_sec:TS}
We fix $k=10$ and plot in Figure \ref{fig:sequential_twostage} the ratio of the average sample size of Procedure T to that of Procedure S (i.e., $N^T/N^S$) as a function of $m$. The result associated with the case of fixing $m=10$ and varying $k$ is almost identical and thus it is omitted.
\begin{figure}[t]
\begin{center}
\caption{Average Sample Sizes of Procedure T and Procedure S Under the EV Configuration}
\includegraphics[width=0.8\textwidth]{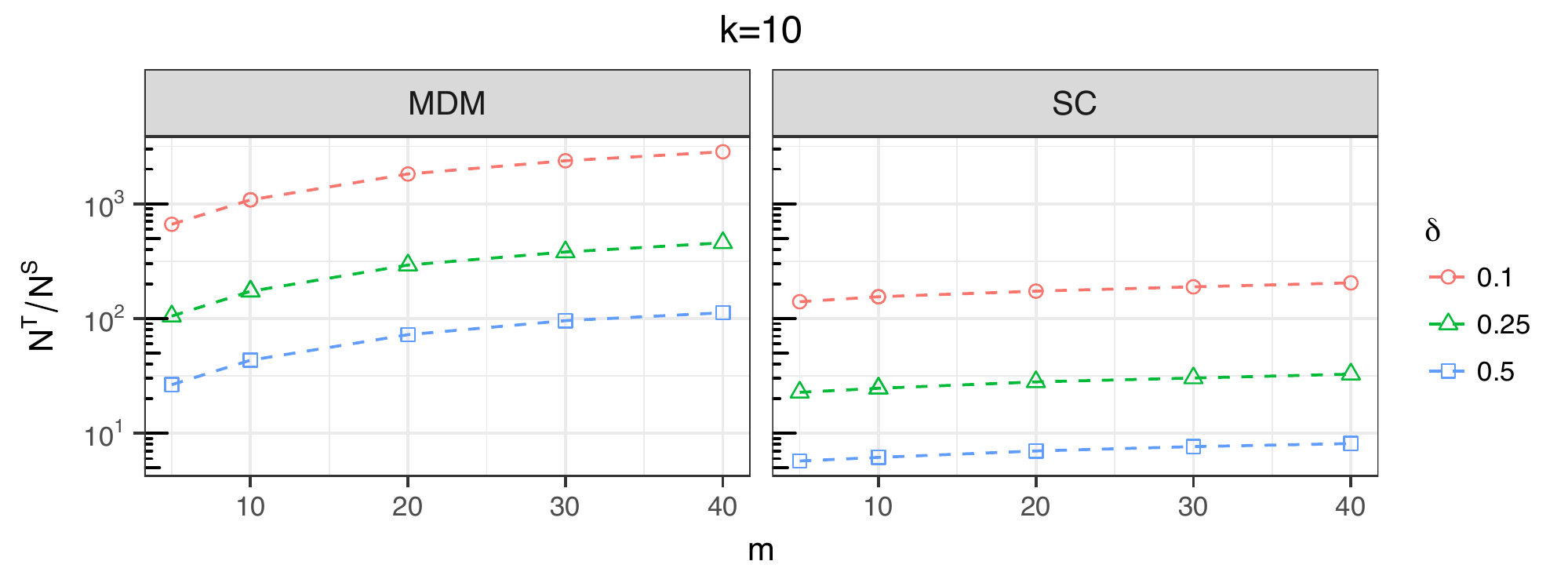}
\end{center}
\small{\textit{Note.} The vertical axis is on a logarithmic scale with base 10.}
\end{figure}

\section{Comparison Between Procedure S and Procedure V}\label{EC_sec:SV}

Procedure 3 in \cite{izfree2016} requires an IZ parameter and we set it to be $\delta/2$ when the procedure is applied to both the inner-layer and outer-layer selection of Procedure V. This is inspired by the decomposition of the IZ parameter in Section \ref{sec:robust_IZ_twostage} for the two-stage RSB procedure.

\vspace{5pt}

\begin{procedure}[Procedure V]
\begin{enumerate}
\item[]
\setcounter{enumi}{-1}
\item \textit{Setup.}
Specify the error allowance $\beta=\alpha/(km-1)$ and the first-stage sample size $n_0\geq 2$. Set $c=-2\log(2\beta)$ and $g_c(t) = \sqrt{[c+\log(t+1)](t+1)}$.
\item \textit{Initialization:} Set $n=n_0$. Set $\mathcal S=\{1,2,\ldots,k\}$ to be  the set of surviving alternatives. Set $\mathcal{S}_i=\{(i,j): j=1,2,\ldots,m\}$ to be  the set of surviving systems of alternative $i$, $i=1,\ldots,k$. Take $n$ independent replications $X_{ij,1},\ldots, X_{ij,n}$ of each system $(i,j)$. Solve $T\delta/2-g_c(T)=0$ for  $T^*$.

\item \textit{Inner-layer Elimination.} For each $i\in \mathcal{S}$, do the following.

\begin{enumerate}[label*=\arabic*]

\item \label{item:update_inner} \textit{Updating.} Compute 
\[
\begin{array}{c}
\displaystyle\bar{X}_{ij}(n)=\frac{1}{n}\sum_{r=1}^{n} X_{ij,r}, \quad i\in\mathcal{S}, \; (i,j)\in \mathcal{S}_i,\\
\displaystyle S_{ij,ij'}^2(n) = \frac{1}{n-1}\sum_{r=1}^{n}\left[X_{ij,r}-X_{ij',r}-(\bar{X}_{ij}(n)-\bar{X}_{ij'}(n))\right]^2,  \quad (i,j),(i,j')\in \mathcal{S}_i.
\end{array}\]

\item \textit{Screening.} 
Compute
\[
\tau_{ij,ij'}(n)=\frac{n}{S^2_{ij,ij'}(n)} \quad\mbox{ and }\quad  Z_{ij,ij'}(n) = \tau_{ij,ij'}(n)[\bar{X}_{ij}(n)-\bar{X}_{ij'}(n)],\quad (i,j),(i,j')\in \mathcal{S}_i.
\]
Assign
$\mathcal{S}_i \gets \mathcal{S}_i\setminus\{(i,j)\in \mathcal{S}_i: Z_{ij,ij'}(n)\leq -g_c(\tau_{ij,ij'}(n)) \mbox{ for some } (i,j')\in \mathcal{S}_i\}$.

\item \textit{Stopping.} If either $|\mathcal{S}_i|=1$ or $\tau_{ij,ij'}(n)\geq T^*$ for all $(i,j),(i,j')\in\mathcal{S}_i$ with $j\neq j'$, then stop and select $j_i^*= \argmax\limits_{j:(i,j)\in \mathcal{S}_i}\bar{X}_{ij}(n)$ as the worst system. Otherwise, take one additional replication of each $(i,j)\in\mathcal{S}_i$ with $i  \in \mathcal{S}$, assign $n\gets n+1$, and return to step \ref{item:update_inner}.

\end{enumerate}

\item \textit{Outer-layer Elimination.} 
\begin{enumerate}[label*=\arabic*]

\item \label{item:update_outer} \textit{Updating.} Compute
\[
\begin{array}{c}
\displaystyle\bar{X}_{ij_i^*}(n)=\frac{1}{n}\sum_{r=1}^{n} X_{ij,r}, \quad i\in\mathcal{S},\\
\displaystyle S_{ij^*_i,i'j^*_{i'}}^2(n) = \frac{1}{n-1}\sum_{r=1}^{n}\left[X_{ij^*_i,r}-X_{i'j^*_{i'},r}-(\bar{X}_{ij^*_i}(n)-\bar{X}_{i'j^*_{i'}}(n))\right]^2,  \quad i,i'\in\mathcal{S}.
\end{array}\]

\item \textit{Screening.} 
For each $i,i'\in \mathcal{S}$ with $i\neq i'$, compute
\[
\tau_{ij^*_i,i'j^*_{i'}}(n)=\frac{n}{S^2_{ij^*_i,i'j^*_{i'}}(n)} \quad\mbox{ and }\quad  Z_{ij^*_i,i'j^*_{i'}}(n) = \tau_{ij^*_i,i'j^*_{i'}}(n)[\bar{X}_{ij^*_i}(n)-\bar{X}_{i'j^*_{i'}}(n)],\quad i,i'\in \mathcal{S}.
\]
Assign
$\mathcal{S} \gets \mathcal{S}\setminus\{(i,j)\in \mathcal{S}: Z_{ij^*_i,i'j^*_{i'}}(n)\geq g_c(\tau_{ij^*_i,i'j^*_{i'}}(n)) \mbox{ for some } i'\in \mathcal{S}\}.$

\item \textit{Stopping.} If either $|\mathcal{S}|=1$ or $\tau_{ij^*_i,i'j^*_{i'}}(n)\geq T^*$ for all $i,i'\in\mathcal{S}$ with $i\neq i'$, then stop and select  $i^*=\argmin\limits_{i\in\mathcal{S}} \bar{X}_{ij^*_i}(n)$
as the best alternative. Otherwise, take one additional replication of each $(i,j^*_i)$ with $i  \in \mathcal{S}$, assign $n\gets n+1$, and return to step \ref{item:update_outer}.\hfill$\Box$
\end{enumerate}

\end{enumerate}
\end{procedure}

The numerical results for the EV configuration are presented in Figure \ref{fig:sequential}. The results for the other two configurations of the variances are very similar so we omit them. 

\begin{figure}[t]
\begin{center}
\caption{Average Sample Sizes of Procedure S and Procedure V Under the EV configuration}\label{fig:sequential}
\includegraphics[width=0.8\textwidth]{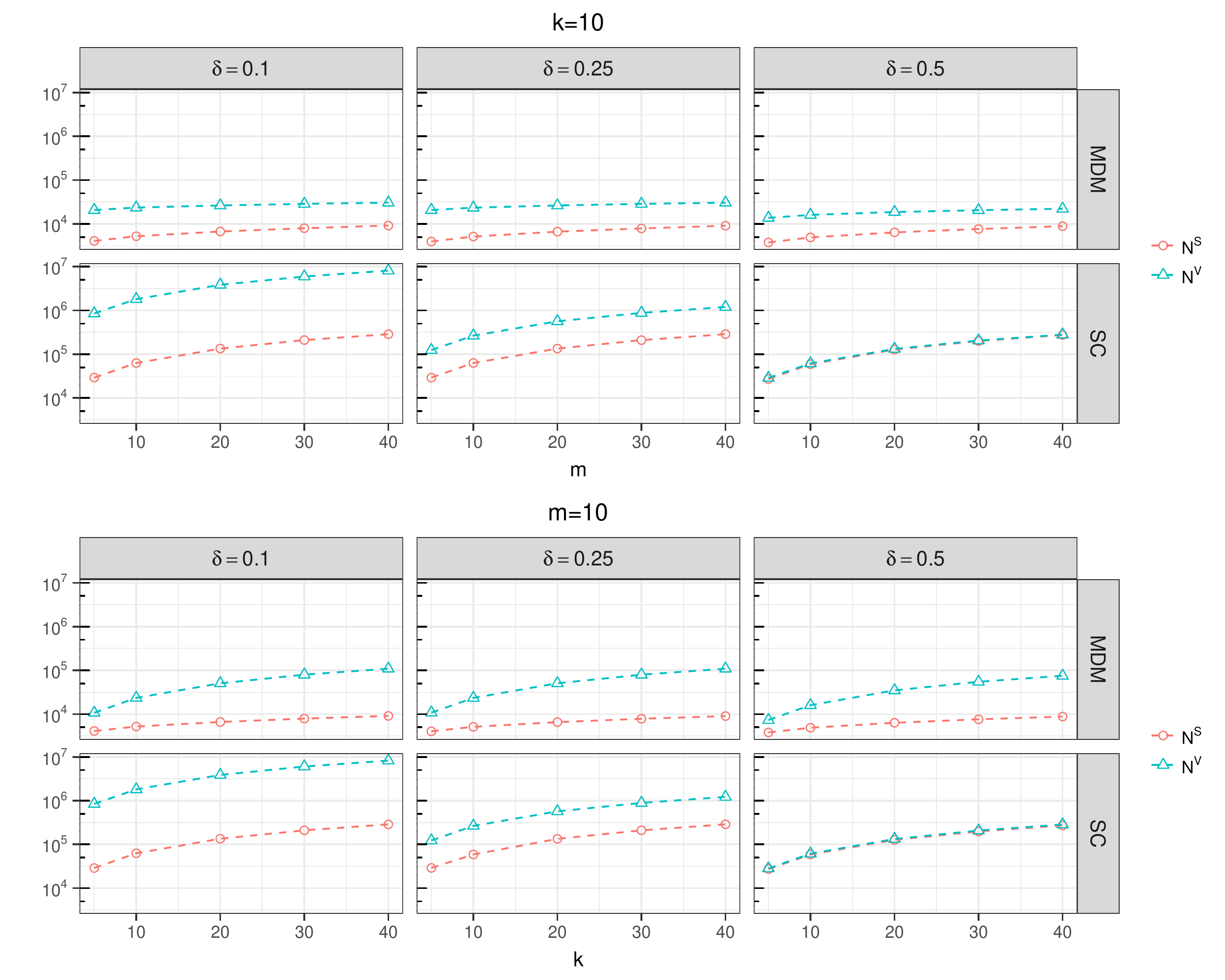}
\end{center}
\small{\textit{Note.} Top: $m$ varies with $k=10$; Bottom: $k$ varies with $m=10$.  The vertical axis is on a logarithmic scale.}
\end{figure}

First, as expected, Procedure S requires significantly fewer samples than Procedure V in general. In particular, under SC,  if the IZ parameter $\delta$ happens to be the difference between the best and the second-best worst-case mean performances (i.e., $\delta = \mu_{21}-\mu_{11}=0.5$), then the average sample sizes required by the two procedures are almost the same, regardless of the problem scale. This implies that in this case, simultaneous elimination of the surviving systems of an alternative that is unlikely to be the best rarely happens in Procedure S, which diminishes its advantage over Procedure V. This is because under SC, the outer-layer selection process deals with the worst-case mean performances $(0,0.5,\dots,0.5)$. With $\delta=0.5$, the alternatives are hard to differentiate in early iterations of Procedure S when the sample size is large enough.

Second, the average sample size of Procedure V is more heavily affected by the configurations of the means than Procedure S. With everything else the same, the average sample size required by Procedure V (denoted by $N^\mathrm{V}$) increases faster than that of Procedure S (denoted by $N^\mathrm{S}$) under MDM than under SC. This suggests that there are a significantly larger number of early outer-layer eliminations in Procedure S under MDM than under SC.

Third, the average sample size of Procedure V is much more sensitive to $\delta$ under SC than that of Procedure S. For instance, with $k=30$ and $m=10$, $N^\mathrm{V}$ increases from about $8.82\times 10^5$ to $5.96\times 10^6$ as $\delta$ drops from 0.25 to 0.1, respectively, whereas $N^\mathrm{S}$ almost remains the value $2.10\times 10^5$. The reason is as follows. The inner-layer selection process of Procedure V that relies on Procedure 3 in \cite{izfree2016} faces systems with equal means under SC. It does not terminate until its stopping criterion that depends on the IZ parameter $\delta$ is met. This stopping criterion is harder to meet for a smaller value of $\delta$. Hence, a smaller $\delta$ implies that the inner-layer selection process of Procedure V needs more time to terminate, resulting in more required samples. 

Last,  $N^\mathrm{V}$  grows much faster than $N^\mathrm{S}$ as the problem scale $k$ or $m$ increases. For instance, with $\delta=0.25$, $k=10$ and MDM, $N^\mathrm{V}$ increases from about $2.90\times 10^4$ to $2.87\times 10^5$ as $m$ increases from 5 to 40, whereas $N^\mathrm{S}$ increases from about $3.94\times 10^3$ to $9.04\times 10^3$. This is because, as the problem scale increases, there are more opportunities for Procedure S to eliminate alternatives early, leading to a slower growth in $N^{\mathrm{S}}$.

The above numerical comparison between  Procedure S and Procedure V indicates that the inferior performance of the latter stems from its non-fully sequential nature -- its outer-layer selection cannot begin unless all the inner-layer eliminations are completed. Hence, an alternative having a configuration of the means that is close to the  SC will dominate the inner-elimination time, even if it were otherwise a poor alternative for outer elimination. This suggests an additional comparison between Procedure S and Procedure V using a configuration of the means that somewhat combines MDM and SC as follows 
\begin{equation*}\label{eqn:mdm_sc_mixed}
[\mu_{ij}]_{k\times m} =
\begin{pmatrix}
0 && -0.2 && -0.2 && \ldots && -0.2 \\
0.5 && 0.3 && 0.3 && \ldots && 0.3 \\
\vdots && \vdots && \vdots && \ddots && \vdots \\
0.5(k-1) && 0.5(k-1)-0.2 && 0.5(k-1)-0.2 && \ldots && 0.5(k-1)-0.2
\end{pmatrix}.
\end{equation*}
Here, the alternatives are ordered as MDM, but the systems of each alternative are ordered as SC. 

We adopt the EV configuration of the variances. The other experiment specifications remain the same. The results are given in Figure \ref{fig:additional_VS}
and they are consistent with the findings revealed by Figure \ref{fig:sequential}. 
\begin{figure}[t]
\begin{center}
\caption{Average Sample Sizes of Procedure S and Procedure V}\label{fig:additional_VS}
\includegraphics[width=0.8\textwidth]{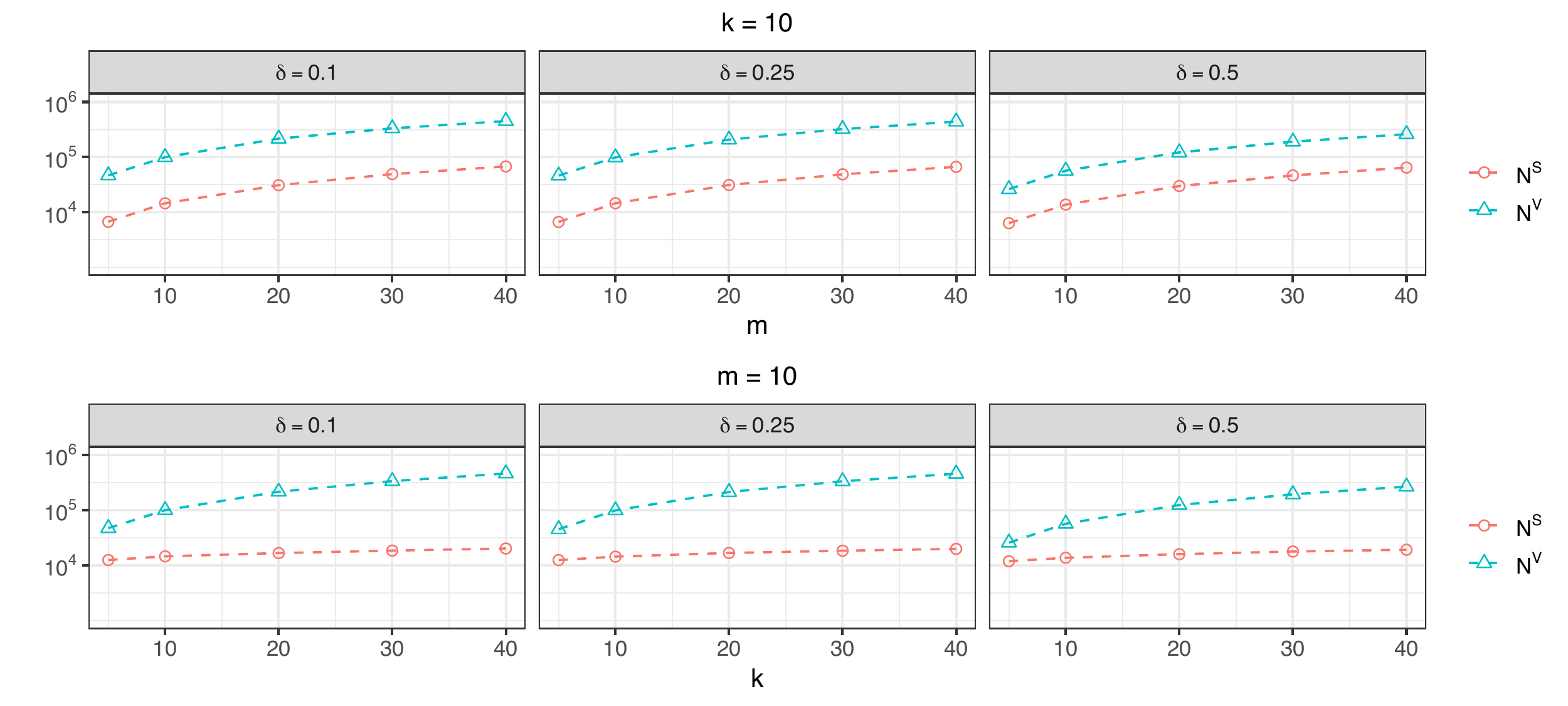}
\end{center}
\small{\textit{Note. }Top: $m$ varies with  $k=10$; Bottom: $k$ varies with $m=10$. The vertical axis is on a logarithmic scale.}
\end{figure}

\section{Realized PCS of the $G/G/s + G$ Queueing Example}\label{EC_sec:queue}
This section assesses efficacy of the two proposed RSB procedures as to whether they can achieve the target PCS as promised. This complements the analysis in Section \ref{sec:robust_numerical}, since the samples in queueing simulation are not normally distributed in general, in contrast to the normal assumptions of the  numerical experiments there.

With $\sigma=2$ and $\ell =50$, we run 1,000 macro-replications of the experiment below. 
\begin{enumerate}[label=(\roman*)]
\item 
Generate a sample of service times from $P_0$.
\item 
Construct an ambiguity set $\mathcal{P}$ based on the sample.
 \item 
Compute the expected cost $\mathbb{E}[f(s, \xi)]$ with 10,000 samples for each pair $(s, P)$, $s=1,\ldots,k$, $P\in\mathcal{P}$ so that the estimation errors are negligible and find the best alternative, 
 \item  
Run the two RSB procedures 1,000 times independently on $\mathcal P$ and estimate their respective PCS. The IZ parameter $\delta$ is set to be small enough so that the indifference zone contains only the best alternative. 
\end{enumerate}
In summary, there are 100 ambiguity sets constructed in total, each from one macro-replication. Hence, PCS is estimated 100 times for each RSB procedure and some statistics of these estimated probabilities are reported in Table \ref{tab:pcs}. Clearly, both procedures can achieve the target PCS even by a large margin in general, despite the samples' non-normal distribution. Moreover, the two-stage RSB procedure is significantly more conservative than the sequential RSB procedure, producing a larger realized PCS. This reflects that the former requires a larger number of samples, which is consistent with the findings in Section \ref{sec:robust_numerical}.

\begin{table}[ht]
\begin{center}
\caption{Realized PCS}\label{tab:pcs}
\begin{tabular}{cccccccccccc}
\toprule
\multirow{2}{*}{Procedure} && \multicolumn{9}{c}{Statistics}                         \\
\cmidrule{3-11}
                           && Min   && 25\% Quantile && Median && 75\% Quantile && Max   \\
\midrule
Two-stage                  && $0.992$ && $0.998$         && $0.999$  && $1.000$         && $1.000$ \\
Sequential                 && $0.951$ && $0.980$         && $0.991$  && $0.996$         && $1.000$ \\
\bottomrule
\end{tabular}
\end{center}
\small{\textit{Note.} Target PCS: 0.95.}
\end{table}

\section*{Acknowledgments}
\small{The authors would like to thank the associate editor and three anonymous referees for their insightful and invaluable comments that have significantly improved this paper. The preliminary work of this paper \citep{FanHongZhang13} was presented at the 2013 Winter Simulation Conference. The first author was supported by Natural Science Foundation of China (Grant 71701196). The second author was supported by Hong Kong Research Grants Council (GRF 16203214) and Natural Science Foundation of China (Grant 71720107003). The third author was supported by Hong Kong Research Grants Council (TRS No. T32-102/14N and GRF 16211417).}

\bibliographystyle{chicago}
\bibliography{robustselection}
\end{document}